%% file: main.tex
\documentclass[11pt]{article}

\usepackage{cubicmoment}
\theoremstyle{remark}
\newtheorem{openproblem}[theorem]{Open Problem}
\theoremstyle{plain}
\newcommand{\gatename}[1]{\ensuremath{\text{\rmfamily\upshape #1}}}
\renewcommand{\CCZ}{\gatename{CCZ}}
\renewcommand{\CS}{\gatename{CS}}
\renewcommand{\CZ}{\gatename{CZ}}
\newcommand{\CNOT}{\gatename{CNOT}}

\usepackage{booktabs}
\usepackage{array}
\usepackage{longtable}
\newcolumntype{W}[1]{>{\raggedright\arraybackslash}p{#1}}
\usepackage{tikz}
\usetikzlibrary{arrows.meta,positioning,shapes.geometric}
\usepackage{microtype}
\usepackage{enumitem}
\setlist[enumerate]{leftmargin=*,itemsep=3pt,topsep=4pt}
\setlist[itemize]{leftmargin=*,itemsep=3pt,topsep=4pt}

\usepackage[margin=1in]{geometry}
\usepackage[colorlinks=true,linkcolor=black!70!blue,citecolor=black!60!green,
            urlcolor=black!70!blue,linktoc=all]{hyperref}

\begin{document}

\title{\textbf{Exact $T$-counts of $\CCZ$ layers from an isotropy bound}}
\author{Arul Rhik Mazumder\\ \small Imperial College London\\
        \small\texttt{rhik.mazumder26@imperial.ac.uk}}
\date{}
\maketitle

\begin{abstract}
The $T$-count is a dominant cost of fault-tolerant Clifford$+T$ computation. We prove that every
Hadamard-free Clifford$+T$ circuit with clean ancillas for a layer of $m$ disjoint $\CCZ$ gates, the
diagonal core of a parallel Toffoli layer, uses at least $6m+1$ $T$ gates. Campbell and Howard gave
a circuit with $6m+1$, and to our knowledge this is the first proof that it is optimal for
general $m$. The proof combines Campbell and Howard's self-orthogonality of the columns of the parity matrix
with a rank (flattening) bound, sharpened by one. More generally, the argument gives a lower bound on the $T$-count of every
diagonal gate in the third level of the Clifford
hierarchy, the isotropy floor, computable by Gaussian elimination. It is never below the stabilizer
nullity $\nu$ and reaches $2\nu+1$ on non-Clifford gates whose phase is purely cubic. No bound that also holds for circuits with measurement and feedforward can do this,
since such circuits implement $\CCZ$ ($\nu=3$) with four $T$ gates and the layer with $4m$,
where the floor gives seven and $6m+1$. The floor also recovers Campbell and Howard's count $4m+3$ for a
fan-out of Toffolis. On the output of PyZX's TODD-based optimizer, relative to its Hadamard placement, it
certifies $193$ of $311$ phase-polynomial blocks $T$-optimal once repeated parities are merged (at the
best of five optimizer seeds), against $113$ for nullity, including
$37$ of the $72$ blocks too large for exhaustive search.
\end{abstract}

\setcounter{tocdepth}{1}
\tableofcontents
\bigskip

\input{sections/01-intro}         %
\input{sections/02-overview}      %
\input{sections/03-related}       %
\input{sections/04-discussion}    %

\appendix

\input{appendices/A-prelim}        %
\input{appendices/B-floor}         %
\input{appendices/C-exact}         %
\input{appendices/D-applications}  %

\section*{Acknowledgments}

The author thanks Michele Mosca and Vlad Gheorghiu for hosting and mentoring this work through an
undergraduate research award at the University of Waterloo in the summer of 2026, and Matthew Amy for
a helpful discussion that motivated the topic.

\input{sections/99-disclosure}    %

\bibliographystyle{unsrt}
\bibliography{main}

\end{document}

%% file: sections/01-intro.tex
\section{Introduction}\label{sec:intro}

Each $T$ gate in a fault-tolerant Clifford$+T$ computation consumes a distilled magic state, and in
surface-code layouts the distillation factories take a large share of the qubits and the
runtime~\cite{Litinski}. A compiled circuit gives an upper bound on the $T$-count, but only for the
compiler that produced it, and to know how much room any compiler has left one needs a lower bound
that is cheap to compute and close to the true count.

Toffoli gates account for much of the $T$-count of reversible arithmetic, phase oracles and the
control logic of select operations, and they rarely come alone. For a layer of $m$ Toffolis on
disjoint qubits, Campbell and Howard~\cite{CampbellHowardPRA} found a circuit with $6m+1$ $T$
gates, against $7m$ for compiling the Toffolis one at a time. At $m=1$ the value $7$ is classical,
and Campbell and Howard report the value $13$ at $m=2$ as known, a six-qubit case that exhaustive
decoding settles. For $m\ge3$ optimality was open, and the best lower bounds we know of grow
like $3m$ (stabilizer nullity~\cite{JiangWang}) or about $3.71m$ (the amortized stabilizer R\'enyi
entropy bound of Zhu et al.~\cite{AmortizedSRE}, which we evaluate exactly in Table~\ref{tab:rate}). This paper proves that $6m+1$ is optimal for every $m$ among Hadamard-free circuits.

\begin{theorem}[the parallel Toffoli layer \informal{thm:parccz}]\label{inf-thm:parccz}
A layer of $m$ disjoint $\CCZ$ gates, the diagonal core of a layer of $m$ parallel Toffolis, has
$T$-count exactly $6m+1$ over $\{\CNOT,T\}$ circuits.
\end{theorem}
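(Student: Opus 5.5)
The statement has two halves. The upper bound $6m+1$ is Campbell and Howard's circuit~\cite{CampbellHowardPRA}. The lower bound I would reduce to a statement about binary matrices.

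A Hadamard-free $\{\CNOT,T\}$ circuit with clean ancillas implements $\mathrm{diag}(\omega^{f(x)})$ with $\omega=e^{i\pi/4}$. The phase is
\[
f(x)=\sum_j a_j\,|\langle c_j,x\rangle| \bmod 8,
\]
where each $c_j$ is a parity over the $3m$ data qubits and the ancillas drop out because they start in $0$. The number of $T$ gates is at least the number $n$ of parities with odd $a_j$. Writing each $|\langle c,x\rangle|$ via the Möbius expansion, the requirement $f\equiv 4\sum_{i} x_{3i+1}x_{3i+2}x_{3i+3}$, up to Clifford corrections, becomes a condition on the $3m\times n$ parity matrix $A$ of the odd-coefficient columns. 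First, the weighted signature is tested only through the odd part. Second, Campbell and Howard's self-orthogonality holds: every pair of rows has even overlap, so the row space is isotropic for the ordinary dot product on columns. Third, the cubic coefficients, i.e.\ the number of columns with $1$s in rows $p,q,r$, reduce mod $2$ to the tensor $\tau=\sum_i e_{3i+1}\otimes e_{3i+2}\otimes e_{3i+3}$. So the goal is: any $A$ whose row space is self-orthogonal (together with the constant row, which encodes the even-weight condition) and whose triple-overlap tensor equals $\tau$ mod $2$ has at least $6m+1$ columns.

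For the counting I would flatten. The triple tensor is $\sum_j c_j\otimes c_j\otimes c_j$ over columns. Flattening it to a $3m\times \binom{3m}{2}$ matrix $M_1$, i.e.\ contracting one index against $x_px_q$ pairs, or using the standard slice rank, gives a rank lower bound. The $\tau$ flattening has rank $3m$, which only recovers the nullity-type $3m$. The sharpening has to come from isotropy. Let $V$ be the span of the columns of $A$ in $\mathbb F_2^{n}$ viewed via rows: the $3m$ rows together with the all-ones vector $\mathbf 1$ span an isotropic subspace $W\subseteq\mathbb F_2^n$, so $\dim W\le n/2$. The plan is to show $\dim W\ge 3m+1$ with room to spare. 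Pushing further, I would take the isotropic space $W'$ generated by $\mathbf 1$, the rows $r_p$, and the products $r_p\odot r_q$ for the pairs $(p,q)$ that lie in distinct $\CCZ$ blocks or match the block structure. Self-orthogonality of these products is exactly the condition that quadruple and triple overlaps vanish appropriately, which follows from $f$ having no quartic term mod $8$ and the triple tensor being $\tau$. Then $\dim W'\le\lfloor n/2\rfloor$, and the floor would be $n\ge 2\dim W'$, computable by Gaussian elimination.

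For the layer I would exhibit inside $W'$ the vectors $\mathbf 1$, the $3m$ rows, and for each block $i$ the products $r_{3i+1}\odot r_{3i+2}$, $r_{3i+1}\odot r_{3i+3}$ and $r_{3i+2}\odot r_{3i+3}$ paired appropriately. I would then show that isotropy plus $\langle r_{3i+1}\odot r_{3i+2},\,r_{3i+3}\rangle=1$ forces a dimension count giving $n\ge 2\cdot 3m+1$. The odd pairing between the pair products and the complementary row means these vectors are not all isotropic together, so one passes to the radical. The "$+1$" comes from parity: $n$ must be odd, because $\langle\mathbf 1,\mathbf 1\rangle=n$ and $\mathbf 1$ pairs with $r_a\odot r_b\odot r_c$ to give the odd cubic count. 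The main obstacle is this step: choosing the right subspace on which the form restricted is alternating, and proving that its Witt index forces $n\ge 6m$, with odd $n$ upgrading this to $6m+1$. I would first try the bilinear form $B(u,v)=|u\odot v|\bmod 2$ on the span $U$ of $\mathbf 1$, the rows and the within-block pair products. I would compute that its radical has dimension at most $1$ and that $\dim U=6m+1$, so $n\ge\dim U+\dim\mathrm{rad}\ge 6m+1$. If the radical is too small, I would fall back to bounding $n\ge \dim U+\dim(U\cap U^\perp)$ directly.
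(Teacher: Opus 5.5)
\textbf{Verdict: there is a genuine gap.} Your reduction to the $3m\times n$ matrix $A$ is right (your $n$ is the paper's $t$), and the upper bound is Campbell and Howard's, as in the paper. But the lower bound is never proved, and several steps you lean on are false.
\begin{itemize}
\item $W=\mathrm{span}(r_1,\dots,r_{3m},\mathbf 1)$ is not isotropic in general. $\mathbf 1$ is orthogonal to every row, but $\langle\mathbf 1,\mathbf 1\rangle\equiv n$. For the seven-parity $\CCZ$ circuit $\dim W=4>7/2$. In general, $\dim W\ge 3m+1$ together with $\dim W\le n/2$ would give $n\ge 6m+2$, which contradicts the $6m+1$ construction.
\item The pair products are not isotropic. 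Within a block, $\langle r_p\odot r_q,r_p\odot r_s\rangle=\tau_{pqs}=1$. For disjoint pairs the overlap is an order-four moment, which the gate does not constrain: a parity contributes $(-2)^3\equiv0\pmod 8$ to every quartic monomial (Fact~\ref{fact:moment}). So \emph{no quartic term mod $8$} carries no information.
\item \emph{$n$ must be odd} is false. The canonical residue of two disjoint $\CCZ$s has weight $14$ (Proposition~\ref{prop:ceiling}). Also, $\langle\mathbf 1,r_a\odot r_b\odot r_c\rangle=1$ says nothing about $\langle\mathbf 1,\mathbf 1\rangle$.
\end{itemize}
The missing idea is to apply the flattening to $\rank A$ rather than to $n$. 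If $\sum_p w_pr_p=0$, pairing with $r_s\odot r_{s'}$ gives $\sum_p w_p\tau_{pss'}=0$, so $w$ lies in the radical of $\tau$, which is $0$ for the layer. The same pairing applied to $\mathbf 1=\sum_{p\in S}r_p$ shows that $\mathbf 1\notin\mathrm{span}\{r_p\}$. These are Lemmas~\ref{lem:rankstep} and~\ref{lem:allones}. Together with $n\ge 2\rank A$ from isotropy and the maximal-isotropy contradiction at $n=6m$, they are the paper's proof (Theorem~\ref{thm:cubicLB}).

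Your last fallback can be completed, and then it is a legitimate variant. As written, though, \emph{I would compute $\dim U=6m+1$} is exactly the unproved step. Once it holds, $n\ge\dim U$ is immediate because $U\subseteq\F_2^n$, so the radical term and any Witt-index argument are unnecessary.

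To prove it, take a relation $\epsilon\mathbf 1+\sum_p w_pr_p+\sum v_{pq}\,r_p\odot r_q=0$ over the $3m$ within-block pairs.
\begin{itemize}
\item Pairing with $r_s$ kills $\mathbf 1$ and the rows, since weights and overlaps are even. It leaves $\sum v_{pq}\tau_{pqs}=0$. Since $\tau$ matches each within-block pair bijectively to the third qubit of its block, $v=0$.
\item Pairing with $r_s\odot r_{s'}$ then leaves $\sum_p w_p\tau_{pss'}=0$, so $w\in R=\{0\}$.
\item Finally $\epsilon\mathbf 1=0$ gives $\epsilon=0$.
\end{itemize}
This uses only moments of order at most three. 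It extends to any non-degenerate pure-cubic gate by choosing pairs on which the flattening is invertible.

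Compared with the paper, the pair products act as explicit duals of the rows, replacing the count $\dim\mathcal C^\perp=n-\dim\mathcal C$, and the $+1$ comes out directly rather than by contradiction. The essential input is the same in both: the non-degeneracy of $\tau$, applied to linear relations among the rows. That is precisely what your proposal leaves out.
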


The model here is the Hadamard-free one: circuits of \CNOT{} and $T$ gates, with diagonal Clifford
gates free. It is the natural model for diagonal gates, the one in which Amy and Mosca's
characterization below holds, and the one in which phase-polynomial optimizers such as TODD~\cite{HeyfronCampbell}
work, as does AlphaTensor-Quantum~\cite{AlphaTensorQuantum} when its ancilla gadgets are switched off. Clean ancillas do
not lower the $T$-count in this model (Corollary~\ref{cor:ancilla}), so every bound below also
covers circuits that use them. It differs from the ancilla-free Clifford$+T$
model, which allows internal Hadamards but no ancillas, and in which Gosset et al.\ showed by
exhaustive search that a single Toffoli needs seven $T$ gates~\cite{GKMR}. Measurement changes the
count. With temporary AND gates and measurement~\cite{Jones,Gidney2018} the layer costs about $4m$
$T$ gates, so the bound does not limit fault-tolerant compilers that use them. It limits the
resynthesis of Hadamard-free phase blocks, which is the step TODD and gadget-free AlphaTensor-Quantum perform
and the step the certificates of \S\ref{sec:benchintro} audit.
The supplementary notes~\cite{IsofloorSupp} discuss how far the bounds extend to circuits with
internal Hadamards, approximation and measurement.

\subsection{Existing lower bounds}

For diagonal gates the exact $T$-count is known in principle. Amy and Mosca~\cite{AmyMosca} showed
that it equals the distance from a fixed word to a punctured Reed--Muller code, and no
polynomial-time algorithm is known for that decoding problem. In coding theory the cosets of
Reed--Muller codes have been studied mainly through covering radii and through classifications of
cubic forms modulo quadratics~\cite{BrierLangevin,Khoruzhii2026}, which parametrize the cubic part
of the cosets of $\RM(n-4,n)^{\ast}$. To our knowledge, exact minimum weights for direct sums of
cubic monomials had not been determined. For general Clifford$+T$ circuits
optimizing the $T$-count is NP-hard~\cite{vandeWeteringAmy}. Meet-in-the-middle search~\cite{AMMR}
and exact counting~\cite{GheorghiuMoscaMukhopadhyay} work for a few qubits only.

Magic monotones give lower bounds without a search. Robustness of magic~\cite{HowardCampbellRoM},
channel robustness~\cite{SeddonCampbell}, stabilizer extent~\cite{StabilizerExtent} and stabilizer
R\'enyi entropies~\cite{LeoneSRE,AmortizedSRE} all bound the $T$-count, but computing them on a
general gate takes exponential time. Stabilizer nullity~\cite{BCHK} is the exception on the gates
studied here (Proposition~\ref{prop:nullitygen} computes it by Gaussian elimination), but it gives
$3$ for $\CCZ$, whose $T$-count is $7$. A quantity that cannot increase under stabilizer operations
with feedforward also bounds circuits that measure and correct. Such circuits implement $\CCZ$ with
four $T$ gates~\cite{Jones}, so an argument that reaches $7$ has to use unitarity
(Remark~\ref{rem:nomonotone}).

\subsection{The isotropy argument}

A $\{\CNOT,T\}$ circuit applies each
of its $t$ $T$ gates to a parity of the input bits. Record, for each qubit $a$, the vector
$c_a\in\F_2^t$ that marks which $T$ gates involve $a$. When the target's phase has no linear or
quadratic part, a condition we call pure-cubic, every $c_a$ has even weight and every pair has even
overlap. This is the case of vanishing quadratic part of Campbell and Howard's
Lemma~3~\cite{CampbellHowardPRA}, and it says
that the span of the $c_a$ is a self-orthogonal subspace of $\F_2^t$. A self-orthogonal subspace
has dimension at most $t/2$, so $t$ is at least twice the rank of the parity matrix.

The rank, in turn, is bounded below by the cubic part of the phase. The rank of any flattening of
the signature tensor of Heyfron and Campbell~\cite{HeyfronCampbell} bounds its symmetric rank from
below, a standard fact, and in our language the rank of the parity matrix is at least $n-d$, where $d$ is the
dimension of the radical of the cubic moment form, the directions along which the cubic part of
the phase degenerates. Combining the two bounds gives $t\ge2(n-d)$, twice what nullity gives, and
a parity argument adds one more when the gate is not Clifford.
Campbell and Howard use self-orthogonality to compute a cost that sets $\CCZ$ factors aside and is
zero on every product of $\CCZ$ gates~\cite[Def.~2, Lemma~3]{CampbellHowardPRA}. The flattening
bound applied to $t$ itself gives only $t\ge n-d$, the nullity of a pure-cubic gate. The doubling
appears only once the flattening bound is applied to the rank of the parity matrix instead. We call the result the isotropy
floor. For $m$ disjoint $\CCZ$ gates $d=0$ and $n=3m$, so the floor is $6m+1$. Campbell and Howard
do determine the $T$-count of every controlled diagonal Clifford exactly~\cite[Thm.~5]{CampbellHowardPRA},
but for $m\ge2$ the layer is not a controlled unitary in any \CNOT{} basis
(Remark~\ref{rem:layernotcontrolled}), so their theorem does not reach it. Beyond the pure-cubic case,
the general floor and the identity $\nu=n-d^{\ast}$ below carry the rest of the paper, including every benchmark certificate, since none of the certified
blocks is pure-cubic.

The weights, overlaps and triple counts of the columns are the moments of the gate, and they do not
depend on the circuit. Every quantity in the floor is a rank or a kernel dimension of a matrix built
from them, so Gaussian elimination computes it without a search over circuits. For a gate given as
a $\{\CNOT,T\}$ circuit the moments follow from
propagating parities through the \CNOT{}s. For a gate given as a Boolean function (a Toffoli
network, a phase oracle or a QAOA cost layer) they follow from the function directly.

\subsection{Contributions}

The diagonal gates of the third Clifford level are those whose phase polynomial over $\Z_8$ has
degree at most three. A $T$ gate contributes a linear term, $\CS$ a quadratic one and $\CCZ$ a cubic
one. We write $\delta(U)$ for the least number of $T$ gates in any $\{\CNOT,T\}$ circuit for $U$.

\begin{enumerate}
\item Every diagonal level-three gate on $n$ qubits has
$\delta(U)\ge2(n-d^{\ast})-r$ (Theorem~\ref{thm:genLB}). Here $d^{\ast}$ is the radical dimension
of the full moment tensor, the cubic moment form together with the lower moments, and $r$ is the
rank of the quadratic moments. Both take $O(n^3)$ word operations. The floor is never below the
stabilizer nullity $\nu$, which equals $n-d^{\ast}$ on this class, and on non-Clifford pure-cubic
gates it is $2\nu+1$ (Corollaries~\ref{cor:floordominates} and~\ref{cor:cubicplusone}). An
upper bound computed in the same time brackets every gate (Proposition~\ref{prop:ceiling}).

\item Besides the parallel layer, the floor is exact on a layer whose blocks share single qubits,
when the shared qubits can be routed along a forest in which each block has at most three neighbors
(Proposition~\ref{prop:splicetree}). There the count is $6m+1-2\eta$, where $\eta$ counts the excess
incidences. For $m$ Toffolis fanned out from one control the floor recovers Campbell and Howard's
exact count $4m+3$ (Theorem~\ref{thm:fanout}).  On a general controlled
diagonal Clifford the floor is within one of their count (Proposition~\ref{prop:controlled}).

\item PyZX~\cite{PyZX} implements a phase-block optimizer based on Heyfron and Campbell's
tensor-decomposition algorithm TODD~\cite{HeyfronCampbell}. On its output for $24$ benchmark circuits
the floor proves $193$ of $311$ blocks optimal at the best of five seeds ($186$ to $193$ across
seeds), relative to the optimizer's block partition, against
$113$ for nullity, and $37$ of the $72$ blocks too large to decode exactly, against $15$. On all
$515$ blocks small enough to decode, across two optimizers, it is within two of the true $T$-count
(\S\ref{sec:benchintro}).
\end{enumerate}

The floor also gives $6k+1$ for the $\mathrm{GF}(2^k)$ multiplication oracle, against $3k$ for
nullity (\S\ref{sec:oracleintro}). Further results are in the supplementary notes~\cite{IsofloorSupp}: a census of pure-cubic gates
with $n\le6$, exact counts for cubics that factor, permutation gates and further applications,
higher levels of the hierarchy and circuits beyond the Hadamard-free model.

In Section~\ref{sec:overview} we go through the whole argument for a single $\CCZ$ before stating
the general results. Related work is discussed in Section~\ref{sec:related} and open problems in
Section~\ref{sec:outlook}. All proofs are in the appendices.

%% file: sections/02-overview.tex
\section{Technical overview}\label{sec:overview}

We start with a single $\CCZ$. The argument that its $T$-count is seven already contains every idea of the general bound, so we give it in full before stating the general
results.

\subsection{Warm-up: the $T$-count of $\CCZ$}\label{sec:warmup}\label{sec:mechanism}

The gate $\CCZ$ maps $\ket{x}\mapsto(-1)^{x_1x_2x_3}\ket{x}$. A $\{\CNOT,T\}$ circuit can only
produce phases that are eighth roots of unity, because $T=\mathrm{diag}(1,e^{2\pi i/8})$, and
$(-1)^b=e^{2\pi i(4b)/8}$ is one of them. We therefore describe a diagonal gate by its phase polynomial
$\phi\colon\F_2^n\to\Z_8$, where $\ket{x}$ picks up $e^{2\pi i\phi(x)/8}$. For $\CCZ$,
\begin{equation}\label{eq:cczphi}
  \phi(x)\;=\;4\,x_1x_2x_3 .
\end{equation}

A $T$ gate applied to the parity $y\cdot x$ of some input bits multiplies the state by
$e^{2\pi i(y\cdot x)/8}$, reading the bit $y\cdot x$ as the integer $0$ or $1$. A circuit for $\phi$
amounts to a choice of coefficients $\alpha_y\in\Z_8$, one per parity, with
$\sum_y\alpha_y\,(y\cdot x)\equiv\phi(x)$ for all $x$. For $\CCZ$ one choice is the inclusion-exclusion
identity
\begin{equation}\label{eq:cczid}
  4x_1x_2x_3\;\equiv\;x_1+x_2+x_3-(x_1\oplus x_2)-(x_1\oplus x_3)-(x_2\oplus x_3)
  +(x_1\oplus x_2\oplus x_3)\pmod 8 .
\end{equation}
Since $S$ adds $2$ to a coefficient and $Z$ adds $4$, only the odd coefficients need a $T$ gate, and
the coefficient $-1\equiv7$ is a $T^{\dagger}$. The seven terms give the familiar seven-$T$ circuit,
in which \CNOT{}s move each parity onto a wire and back:
\begin{center}
\begin{tikzpicture}[x=8mm,y=-9mm,
  gate/.style={draw,fill=white,rounded corners=1pt,minimum width=5.4mm,minimum height=4.6mm,
               inner sep=1pt,font=\scriptsize},
  dot/.style={circle,fill,inner sep=1.3pt},
  targ/.style={circle,draw,minimum size=3.1mm,inner sep=0pt,path picture={
      \draw (path picture bounding box.west)--(path picture bounding box.east);
      \draw (path picture bounding box.north)--(path picture bounding box.south);}}]
  \foreach \r/\lab in {0/{q_1},1/{q_2},2/{q_3}} {
    \draw (0.35,\r) -- (15.85,\r);
    \node[font=\scriptsize,left] at (0.35,\r) {$\lab$};
  }
  \foreach \r in {0,1,2} { \node[gate] at (1,\r) {$T$}; }
  \foreach \x/\a/\b in {2/0/1, 4/0/1, 5/1/2, 7/1/2, 8/0/2, 10/0/2,
                        11/0/1, 12/1/2, 14/1/2, 15/0/1} {
    \draw (\x,\a) -- (\x,\b);
    \node[dot] at (\x,\a) {};
    \node[targ] at (\x,\b) {};
  }
  \node[gate] at (3,1)  {$T^{\dagger}$};
  \node[gate] at (6,2)  {$T^{\dagger}$};
  \node[gate] at (9,2)  {$T^{\dagger}$};
  \node[gate] at (13,2) {$T$};
  \node[font=\tiny] at (1,0.52)  {$x_1$};
  \node[font=\tiny] at (1,1.52)  {$x_2$};
  \node[font=\tiny] at (1,2.52)  {$x_3$};
  \node[font=\tiny] at (3,1.52)  {$x_1\oplus x_2$};
  \node[font=\tiny] at (6,2.52)  {$x_2\oplus x_3$};
  \node[font=\tiny] at (9,2.52)  {$x_1\oplus x_3$};
  \node[font=\tiny] at (13,2.52) {$x_1\oplus x_2\oplus x_3$};
\end{tikzpicture}
\end{center}
The parity each $T$ acts on is written beneath it.

Other choices of coefficients give other circuits. The sets of parities whose $T$ gates multiply
to a diagonal Clifford form a linear code, a punctured Reed--Muller code~\cite{AmyMosca}, so the
circuits for a gate form one coset of that code and $\delta(U)$ is the minimum weight in the coset. Finding that weight is a decoding problem with no known polynomial-time
algorithm, and the rest of this section bounds it from below without solving it.

\subsubsection{Isotropy}

Fix any circuit and list its $t$ odd parities as the rows of a matrix $Y$ over $\F_2$, one row per
$T$ gate and one column per qubit. For the circuit above, taking the rows in the order of
\eqref{eq:cczid},
\begin{equation}\label{eq:cczY}
Y^{\mathsf T} \;=\;
\begin{pmatrix}
1&0&0&1&1&0&1\\
0&1&0&1&0&1&1\\
0&0&1&0&1&1&1
\end{pmatrix}.
\end{equation}
Write $c_a\in\F_2^t$ for column $a$ of $Y$, so that $c_a$ marks the $T$ gates that involve qubit
$a$. The weight of $c_a$ counts the parities containing $a$, and the overlap
$\langle c_a,c_b\rangle$ counts those containing both $a$ and $b$. Their values mod $2$ are the
order-one and order-two moments of the gate, and since by Fact~\ref{fact:moment} they are the same
for every circuit implementing it, any one circuit computes them. We call a phase
\emph{pure-cubic} when all of its order-one and order-two moments vanish. $\CCZ$ is pure-cubic, since
every column has weight four and every pair of columns overlaps in two places. For a
pure-cubic gate, then,
\begin{equation}\label{eq:iso}
  \langle c_a,c_b\rangle\;=\;0\qquad\text{for all }a,b,\ \text{including }a=b.
\end{equation}
The column space $\mathcal C=\mathrm{span}\{c_1,\dots,c_n\}$ is therefore self-orthogonal,
$\mathcal C\subseteq\mathcal C^{\perp}$. Since $\dim\mathcal C^{\perp}=t-\dim\mathcal C$,
\begin{equation}\label{eq:half}
  t\;\ge\;2\dim \mathcal C\;=\;2\rank Y .
\end{equation}

\subsubsection{The rank, and the extra one}

To bound $\rank Y$ we count parities three qubits at a time. Let $T_{abc}$ be the parity of the
number of rows of $Y$ that contain all of $a$, $b$ and $c$. For distinct $a,b,c$ this is
\begin{equation}\label{eq:Tabc}
  T_{abc}\;=\;\sum_{j=1}^{t}(c_a)_j\,(c_b)_j\,(c_c)_j\pmod 2 ,
\end{equation}
and, like the lower moments, it depends only on the gate. These numbers form a symmetric trilinear
form, the cubic moment form. For $\CCZ$ only one row, $x_1\oplus x_2\oplus x_3$, contains all three
qubits, so $T_{123}=1$ is the only nonzero entry. Fixing one argument in a direction
$w\in\F_2^n$ gives a matrix $T(w)_{bc}=\sum_a w_aT_{abc}$, and the radical $R$ is the set of $w$
with $T(w)=0$. For $\CCZ$,
\begin{equation*}
  T(w)\;=\;\begin{pmatrix}0&w_3&w_2\\ w_3&0&w_1\\ w_2&w_1&0\end{pmatrix},
\end{equation*}
which vanishes only at $w=0$, so $R=\{0\}$ and $d=\dim R=0$.

Now suppose a combination of columns vanishes, $\sum_a w_ac_a=0$. Substituting into
\eqref{eq:Tabc} gives $T(w)_{bc}=\sum_j(c_b)_j(c_c)_j\bigl(\sum_a w_ac_a\bigr)_j=0$, so $w\in R$.
The kernel of $Y$ is contained in the radical, and
\begin{equation}\label{eq:rank}
  \rank Y\;\ge\;n-d .
\end{equation}
Together with \eqref{eq:half} this gives $t\ge2(n-d)$ for every circuit, which is $6$ for $\CCZ$.

Could a circuit for $\CCZ$ use exactly six $T$ gates? Its three columns would be independent by
\eqref{eq:rank}, so $\mathcal C$ would have dimension three, half of six, and \eqref{eq:half} would
be tight: $\mathcal C=\mathcal C^{\perp}$. Every column has even weight, so the all-ones vector
$\mathbf 1\in\F_2^6$ is orthogonal to every column and lies in $\mathcal C^{\perp}=\mathcal C$. Then
$\mathbf 1=\sum_{a\in S}c_a$ for some nonempty set $S$ of qubits. Substituting this into
\eqref{eq:Tabc} with $w$ the indicator of $S$ gives $T(w)_{bc}=\langle c_b,c_c\rangle=0$, so $w$ is a
nonzero element of the radical. That contradicts $R=\{0\}$ (Lemma~\ref{lem:allones}). Hence
$\delta(\CCZ)\ge7$, and the circuit above shows equality. On three qubits this value is classical,
but the argument never decodes, so it runs unchanged on any number of qubits.

The argument used purity twice, for the self-orthogonality \eqref{eq:iso} and to remove the
coincident-index terms in \eqref{eq:Tabc}. For a general phase the full moment tensor and its
radical $R^{\ast}$ take the place of $T$ and $R$. The extra one used $R=\{0\}$. When $d>0$ we first
split off the radical directions, and the argument applies to what remains, which recovers the
extra one for every non-Clifford pure-cubic gate (Corollary~\ref{cor:cubicplusone}). Figure~\ref{fig:pipeline} summarizes the computation.

\input{figures/fig1-pipeline}

\subsubsection{Many gates at once}\label{sec:layerintro}

The same computation applies to $m$ disjoint $\CCZ$s on $n=3m$ qubits. Their cubic moment form is
block-diagonal, each block has a trivial radical, and so $d=0$ and the floor is $2n+1=6m+1$. The
construction of Campbell and Howard~\cite[Cor.~1]{CampbellHowardPRA} meets it
(Proposition~\ref{prop:splicetree}, drawn at $m=2$ in Figure~\ref{fig:layer}), which proves
Theorem~\ref{inf-thm:parccz}.

Stabilizer nullity adds over tensor products and gives only $3m$ for the layer.
Table~\ref{tab:rate} compares the three bounds.

\begin{table}[t]
\centering
\caption{Lower bounds on a layer of $m$ disjoint $\CCZ$ gates. The amortized stabilizer R\'enyi
entropy (SRE) bound of Zhu et al.~\cite{AmortizedSRE} has rate
$\sigma=(5-\log_211)/(2-\log_23)=3.7118\dots$ per $\CCZ$, the amortized entropy of $\CCZ$ divided by
that of one $T$ gate. This value of the bound is computed exactly, not estimated (\S\ref{sec:related}). The isotropy floor meets the construction for every
$m$.}
\label{tab:rate}
\begin{tabular}{@{}lccccc@{}}
\toprule
target & $n$ & $\nu$~\cite{JiangWang} & amortized SRE~\cite{AmortizedSRE} & isotropy floor
& $\delta$\\
\midrule
$\CCZ$                    & $3$  & $3$  & $4$                       & $7$    & $7$\\
two disjoint $\CCZ$       & $6$  & $6$  & $8$                       & $13$   & $13$\\
three disjoint $\CCZ$     & $9$  & $9$  & $12$                      & $19$   & $19$\\
$m$ disjoint $\CCZ$       & $3m$ & $3m$ & $\lceil\sigma m\rceil$ & $6m+1$ & $6m+1$\\
\midrule
rate per $\CCZ$           &      & $3.00$ & $3.712$                 & $6.00$ & $6.00$\\
\bottomrule
\end{tabular}
\end{table}

\subsection{The general floor}\label{sec:floorladder}\label{sec:landscape}

Many gates that arise in practice, such as a Toffoli network compiled together with nearby $S$ and
$T$ gates, have order-one and order-two moments that do not vanish. We keep those lower moments in a
matrix.

\begin{definition}[cubic moment form and radical]\label{def:cmf}
For distinct $a$, $b$, $c$ the order-three moment $T_{abc}$ is the parity of the number of odd
parities containing all three, and $T_{abc}=0$ when two indices coincide. The resulting symmetric
trilinear form on $\F_2^n$ is the \emph{cubic moment form}. Its \emph{slice} in direction
$w\in\F_2^n$ is the matrix $T(w)_{bc}=\sum_a w_aT_{abc}$, and its \emph{radical} is
$R=\{w:T(w)=0\}$, with $d=\dim R$. The moments of orders one and two form the symmetric
\emph{quadratic moment matrix} $Q$, of rank $r$. Placing them at coincident indices gives the full
moment tensor, whose radical $R^{\ast}$ has dimension $d^{\ast}$.
\end{definition}

The full moment tensor is the signature tensor of Heyfron and Campbell~\cite{HeyfronCampbell}. In
their language $\delta$ is its symmetric rank over
$\F_2$, and what follows is a lower bound on that rank.

\begin{theorem}[the isotropy floor \informal{thm:genLB}]\label{inf-thm:floor}
Every diagonal level-three gate $U$ on $n$ qubits satisfies $\delta(U)\ge2(n-d^{\ast})-r\ge\nu$,
where $\nu=n-d^{\ast}$ is its stabilizer nullity. Both $d^{\ast}$ and $r$ are computed from the
moments in $O(n^3)$ word operations.
\end{theorem}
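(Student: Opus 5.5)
Fix any $\{\CNOT,T\}$ circuit for $U$ with $t$ $T$ gates, let $Y\in\F_2^{t\times n}$ be its matrix of odd parities, and let $c_a$ be its columns, as in the warm-up. The plan is to repeat the two steps of \S\ref{sec:warmup} while keeping track of the lower moments, which no longer vanish.

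The first step replaces self-orthogonality by a defect bound. The Gram matrix $G=Y^{\mathsf T}Y$ has entries $\langle c_a,c_b\rangle \bmod 2$. These are exactly the order-one moments on the diagonal and the order-two moments off it, so $G=Q$ over $\F_2$ by Fact~\ref{fact:moment}. Let $\mathcal C$ be the column space, of dimension $k=\rank Y$. The restriction of the standard form to $\mathcal C$ has Gram matrix $Q$ in any spanning set, so its radical $\mathcal C\cap\mathcal C^{\perp}$ has dimension at least $k-r$. A subspace $\mathcal C\cap\mathcal C^\perp$ is self-orthogonal and lies in $\mathcal C^\perp$, so counting dimensions inside $\mathcal C+\mathcal C^\perp$ gives
\[
t\;\ge\;\dim\mathcal C+\dim\mathcal C^{\perp}-\dim(\mathcal C\cap\mathcal C^{\perp})\;\ge\;k+(t-k)-\dim(\mathcal C\cap \mathcal C^\perp),
\]
and this alone is trivial. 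I would instead use the standard fact that the radical $W=\mathcal C\cap\mathcal C^\perp$ satisfies $\dim W\le t/2$, because $W$ is self-orthogonal, and $\dim\mathcal C\le \dim W+ r$. Together these give $t\ge 2\dim W\ge 2(k-r)$. That is weaker than needed, so I would sharpen it. Choose a complement $\mathcal C=W\oplus N$ with $\dim N=r$ on which the form is nondegenerate. Then $\F_2^t=N\oplus N^{\perp}$, and $W\subseteq N^\perp$ is self-orthogonal in the nondegenerate space $N^\perp$ of dimension $t-r$. Hence $2(k-r)\le t-r$, that is, $t\ge 2k-r$.

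The second step bounds $\rank Y$ from below by the full moment tensor. Take $w$ with $\sum_a w_ac_a=0$. For the full tensor, the entry at indices $(a,b,c)$, including coincident indices, is $\sum_j (c_a)_j(c_b)_j(c_c)_j \bmod 2$, because $x_j^2=x_j$ over $\F_2$ collapses the coincident cases onto the lower moments. This needs to be checked against Definition~\ref{def:cmf}'s convention for placing $Q$ at coincident indices. Contracting with $w$ kills every slice, so $\ker Y\subseteq R^{\ast}$ and $k\ge n-d^{\ast}$. Combining the two steps gives $\delta(U)\ge 2(n-d^{\ast})-r$. For the comparison with $\nu$, I would use $r\le n-d^{\ast}$. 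This holds because the kernel of $Q$ contains $R^{\ast}$: slicing the full tensor at coincident indices recovers $Q$, so a vector in $R^{\ast}$ annihilates $Q$. Then $2(n-d^{\ast})-r\ge n-d^{\ast}$. The identification $\nu=n-d^{\ast}$ I would take from Proposition~\ref{prop:nullitygen}.

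The main obstacle is the moment bookkeeping at coincident indices. One has to confirm that the full moment tensor really equals $\sum_j (c_a)_j(c_b)_j(c_c)_j$ in every index pattern, so that the kernel argument goes through. One also has to confirm that this tensor is a gate invariant, independent of the circuit, which I would deduce from Fact~\ref{fact:moment}. For the complexity claim, $Q$ and the $n$ slices $T^{\ast}(e_a)$ are available from the moments. Then $r$ is one elimination, and $R^{\ast}$ is the kernel of the $n^2\times n$ matrix whose column $a$ is the flattened slice $T^{\ast}(e_a)$. Packing rows into words, this elimination takes $O(n^3)$ word operations.
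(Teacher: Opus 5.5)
Your proposal is correct and follows the paper's proof. The column Gram matrix is $Q$, the radical $W=\mathcal C\cap\mathcal C^{\perp}$ gives $t\ge 2k-r$, the inclusion $\ker Y\subseteq R^{\ast}$ gives $k\ge n-d^{\ast}$ (Lemma~\ref{lem:rankstep}), and $R^{\ast}\subseteq\ker Q$ together with $\nu=n-d^{\ast}$ gives the comparison with nullity (Lemma~\ref{lem:rledn}, Proposition~\ref{prop:nullitygen}). The only difference is that your splitting $\mathcal C=W\oplus N$ is a detour: $W\subseteq\mathcal C^{\perp}$ and $\dim\mathcal C^{\perp}=t-k$ give $k-r\le t-k$ directly, which is exactly what your count inside $N^{\perp}$ reduces to.
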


On a pure-cubic gate $r=0$ and the theorem gives $2\nu$, to which the parity step of the warm-up
adds one when the gate is not Clifford. We use \emph{isotropy floor} for $2\nu+1$ in that case and
for $2(n-d^{\ast})-r$ otherwise. The quadratic moments weaken the bound. The floor exceeds nullity by
$\nu-r$ (Corollary~\ref{cor:floordominates}), so a gate with $r=\nu$ gets nothing beyond nullity.

\subsubsection{Where the floor is exact}\label{sec:sharing}

Suppose any two blocks share at most one qubit, and let $\eta=3m-n$ count the excess incidences.
Such a layer is still non-degenerate, so its floor is $2n+1=6m+1-2\eta$
(Lemma~\ref{lem:linearoverlap}). On a shared wire two $T$ gates of opposite sign cancel
(Figure~\ref{fig:merge}), and by splicing circuits together along the shared qubits we can meet the
floor whenever, for each shared qubit, the blocks containing it can be joined by a tree so that the
union of these trees is a forest in which every block has at most three neighbors
(Proposition~\ref{prop:splicetree}). A fan-out of $m$ Toffolis from one control is of this kind, and
there the floor recovers Campbell and Howard's exact count $4m+3$~\cite[Thm.~5]{CampbellHowardPRA}.
 Once the sharing closes a cycle the floor can be loose.
The smallest example is three blocks that meet pairwise in different qubits,
$\{0,1,2\}$, $\{0,3,4\}$ and $\{1,3,5\}$, with floor $13$ and $T$-count $15$
(Remark~\ref{rem:ledgerfails}).

Solving the moment equations from the top down, triples first, produces an explicit circuit, so
every gate also gets an upper bound $W(s)$ in $O(n^3)$ time (Proposition~\ref{prop:ceiling}).

\subsection{Applications}\label{sec:oracleintro}\label{sec:benchintro}

The multiplication oracle of $\mathrm{GF}(2^k)$ acts as
$U_k\ket{x,y,z}=(-1)^{z\cdot(x\ast y)}\ket{x,y,z}$ on $3k$ qubits. It is non-degenerate, so
$\delta(U_k)\ge6k+1$ (Proposition~\ref{prop:fieldoracle}). Since bilinear complexity gives circuits with $O(k)$
$T$ gates, the count is $\Theta(k)$. At $k=2$ exact decoding gives $17$, the count of the best
published circuits~\cite{Vandaele}. AlphaTensor-Quantum had already shown with an SMT solver that no
decomposition of the signature tensor is shorter~\cite[\S3.1]{AlphaTensorQuantum} (Remark~\ref{rem:gf4}).

The floor can also be applied to each phase block of an optimizer's output. We call an output block
\emph{certified} when its floor equals its $T$-count after repeated parities in the block are
merged, since then no
$\{\CNOT,T\}$ resynthesis of that block can do better. On the output of PyZX's TODD-based
phase-block optimizer~\cite{HeyfronCampbell,PyZX} for the $24$ standard reversible
benchmarks~\cite{AMMR,AmyMaslovMoscaTpar} that it finishes within four minutes, the floor certifies
$193$ of $311$ blocks and nullity $113$ at the best of five optimizer seeds ($186$ to $193$ across
seeds). Of the $72$ blocks with $7$ to $28$ active qubits, too
large for exact decoding, the floor certifies $37$ and nullity $15$. 
The certificates are relative to the optimizer's choice of Hadamard positions and blocks
(Appendix~\ref{app:bench}).

%% file: figures/fig1-pipeline.tex
\begin{figure}[t]
\centering
\begin{tikzpicture}[
  >={Stealth[scale=1.0]},
  node distance=8mm,
  box/.style={draw,rounded corners,align=center,inner sep=4pt,font=\small,minimum width=44mm},
  inst/.style={align=left,inner sep=3pt,font=\footnotesize\ttfamily,text=black!65},
  cost/.style={font=\scriptsize,text=black!55,midway,right=1mm,align=left},
  floorbox/.style={draw,rounded corners,align=center,inner sep=5pt,font=\small,
              minimum width=44mm,very thick}
]

\node[box] (phi) {phase polynomial $\phi$\\[-1pt]{\scriptsize(the gate, as given)}};
\node[box,below=of phi] (mom) {order-$\le3$ moments\\[-1pt]{\scriptsize weights, overlaps, triples}};
\node[box,below=of mom] (ten) {cubic moment form $T_{abc}$};
\node[box,below=of ten] (rad) {radical $R$,\ \ $d=\dim R$};
\node[floorbox,below=of rad]  (flr) {\textbf{isotropy floor}\\[1pt]
      $\delta\ \ge\ 2(n-d)+1$\ \ {\scriptsize($d<n$)}};

\draw[->] (phi) -- (mom) node[cost] {inclusion-exclusion, $O(n^3)$};
\draw[->] (mom) -- (ten) node[cost] {read off};
\draw[->] (ten) -- (rad) node[cost] {Gaussian elimination, $O(n^3)$};
\draw[->] (rad) -- (flr) node[cost] {self-orthogonal columns: $t\ge2\rank Y$\\(Lemma~\ref{lem:isocol}, Corollary~\ref{cor:cubicplusone})};

\node[inst,left=16mm of phi] (i1) {$\CCZ$:\\ $\alpha=(1,1,1,7,7,7,1)$};
\node[inst,left=16mm of mom] (i2) {weights $4$, overlaps $2$\\ \textrm{both $\equiv0$ mod $2$: pure-cubic}};
\node[inst,left=16mm of ten] (i3) {$T_{123}=1$\\ \textrm{the only triple}};
\node[inst,left=16mm of rad] (i4) {$R=\{0\}$,\ $d=0$};
\node[inst,left=16mm of flr] (i5) {$\delta\ge 2\cdot3+1=7$\\ \textrm{tight}};

\foreach \a/\b in {i1/phi,i2/mom,i3/ten,i4/rad,i5/flr}
  \draw[dotted,black!45] (\a) -- (\b);

\end{tikzpicture}
\caption{The isotropy floor, end to end, with the \S\ref{sec:warmup} example beside it. The
figure shows the pure-cubic case, and \S\ref{sec:landscape} gives the general floor. At $\CCZ$ the floor returns
$7$ and the seven-$T$ construction meets it, so $\delta(\CCZ)=7$.}
\label{fig:pipeline}
\end{figure}

%% file: sections/03-related.tex
\section{Related work}\label{sec:related}

Beverland, Campbell, Howard and Kliuchnikov~\cite{BCHK} introduced stabilizer nullity as a state
monotone, which also bounds circuits with adaptive feedforward, and Jiang and Wang~\cite{JiangWang}
defined its unitary version. Apart from the related monotones of~\cite{BCHK} and the flattening
bound on the symmetric rank discussed below, we know of no other lower bound that is efficiently
computable on all diagonal level-three gates. Jiang and Wang also
show that the nullity of a unitary can be twice that of a state it prepares. The Choi state has no
such gap, since its nullity equals that of the unitary~\cite{JiangWang}. That factor of two is unrelated to ours, which compares the
$T$-count with the unitary nullity. We prove $\delta\ge2\nu+1$ on every non-Clifford pure-cubic gate
(Corollary~\ref{cor:cubicplusone}), with equality on layers whose blocks share single qubits along a forest. The supplementary
notes~\cite{IsofloorSupp} add equality on cubics that factor and on every pure-cubic gate with
$n\le5$.

Campbell and Howard~\cite{CampbellHowardPRA} gave the $6m+1$ construction for $m$ disjoint
$\CCZ$ gates~\cite[Thm.~6, Cor.~1]{CampbellHowardPRA}. Their controlled-unitary
theorem~\cite[Thm.~5]{CampbellHowardPRA} gives, through Lempel's factorization, the exact $T$-count
of every controlled diagonal Clifford among ancilla-free circuits. This class includes layers whose
blocks all share one control, and on it our floor comes within one of their count
(Proposition~\ref{prop:controlled}). It excludes the disjoint layer for $m\ge2$ and every other
layer of blocks sharing at most one qubit (Remark~\ref{rem:layernotcontrolled}). The step that
needs the control is in their converse~\cite[App.~C]{CampbellHowardPRA}: it splits the parities by
whether they touch the control qubit and bounds each half by an optimal Lempel factorization of the
target's quadratic form, which requires a qubit through which every cubic monomial passes. Their Lemma~3 is the self-orthogonality we start from.
For disjoint layers, the earlier lower bounds were the nullity $3m$~\cite{JiangWang} and the
amortized stabilizer R\'enyi entropy bound of Zhu et al.~\cite{AmortizedSRE}, which gives about
$3.71m$ (Table~\ref{tab:rate}). This is the exact value of that bound on the layer, not an estimate. The amortized entropy of a single $\CCZ$ equals
the entropy $5-\log_211$ of its Choi state~\cite[App.~D, Thm.~S13]{AmortizedSRE}. The amortized entropy is
subadditive under composition, because ancillas are included in its supremum, so on the layer it
is at most $m(5-\log_211)$, and it is at least the Choi-state entropy~\cite[Cor.~4]{AmortizedSRE},
which is additive over the blocks. The bound is proved for unitary circuits. For general gates we evaluate only the
Choi-state form of the bound, the strongest exponential-time measure we compared with
(see the supplementary notes~\cite{IsofloorSupp}). Seddon and Campbell~\cite{SeddonCampbell} define channel
robustness and magic capacity, which also bound the $T$-count of multi-qubit operations. Both are
optimizations over stabilizer decompositions whose size grows superexponentially with $n$.

Heyfron and Campbell~\cite{HeyfronCampbell} phrase $T$-count optimization as decomposing the
signature tensor of a phase polynomial into few symmetric rank-one terms. Their TODD compiler and
AlphaTensor-Quantum~\cite{AlphaTensorQuantum} search for such decompositions, and Khoruzhii, Gel\ss{}
and Pokutta~\cite{KhoruzhiiTD2026} match or improve the counts of these searches with tensor
decomposition over $\F_2$ at much lower cost. The signature tensor is our full moment tensor, so
$\delta$ is its symmetric rank and the floor is a lower bound on that rank. The rank of a
flattening also bounds the symmetric rank from below, which is our inequality $\rank Y\ge n-d$.
We combine it with isotropy, which doubles the bound. For symmetric
matrices the minimal decomposition over $\F_2$ is known from Lempel~\cite{Lempel1975} (see also
Seroussi and Lempel~\cite{SeroussiLempel}), and the $-r$
term of Theorem~\ref{thm:genLB} uses the same count of alternating forms. AlphaTensor-Quantum found
$17$ for the $\mathrm{GF}(4)$ multiplier and showed with an SMT solver that no decomposition of its
signature tensor is shorter~\cite[\S3.1]{AlphaTensorQuantum}. Khoruzhii, Gel\ss{} and
Pokutta~\cite{Khoruzhii2026} classify cubic forms by their alternating rank, an invariant different
from the radical used here.

In Amy and Mosca's characterization~\cite{AmyMosca} the $T$-count is a decoding problem, and the
floor bounds its optimum from below without solving it. Kam et al.~\cite{KamExactness2026} bound the
$T$-count left after exact post-processing by the denominator exponent of the matrix entries. That
certificate gives nothing for the gates studied here, whose entries are eighth roots of unity and
have no denominator.

Girish, May, Parham and Yuen~\cite{MagicComm} use communication complexity to prove that the
$n$-controlled Toffoli needs $\Omega(n)$ magic, and Gosset, Kothari and Wu~\cite{GKW24} bound state
preparation through Pauli postselection. These bounds are asymptotic, and the first also covers
adaptive circuits if measurements are counted together with magic. Gosset, Kothari and
Zhang~\cite{GKZToffoli} bound the $T$-count of a reversible Boolean function below by its
non-adaptive parity decision tree complexity minus one, for unitary circuits and, at zero error,
for adaptive ones. That complexity is at most the number $n$ of inputs, so the bound stays below the
nullity $n$ of a non-degenerate cubic gate. Our bounds are per gate and
exact on specific families, but the factor of two over nullity needs the absence of feedforward
(see the supplementary notes~\cite{IsofloorSupp}).

In magic-state distillation and code design, Bravyi and Haah~\cite{BravyiHaah}, Nezami and
Haah~\cite{NezamiHaah} and Rengaswamy et al.~\cite{RengaswamyCSST} use related self-orthogonality
conditions (triorthogonality and the CSS-T condition) to control distillation rates and
transversal $T$ gates. Those works use the conditions to constrain code parameters. The closest in method is Nezami and
Haah's bound $n\ge2k$ for triorthogonal codes~\cite[Lemma~7]{NezamiHaah}, which counts orthonormal
vectors supported on half of the coordinates, where the isotropy step bounds the dimension of a
self-orthogonal span. We use only the
weaker pairwise version (even column weights and even pairwise overlaps) and obtain from it a lower
bound on the $T$-count of a given gate. Read on the columns of $Y$, triorthogonality asks every
triple overlap to be even, that is $T=0$, which makes a pure-cubic gate Clifford. Generalized
triorthogonality~\cite{HaahHastings} and synthillation~\cite{CampbellHowardPRA} allow the odd triple
overlaps that encode $\CCZ$, but they are used to construct codes and circuits, and we know of no
lower bound from them that reaches the layer.

%% file: sections/04-discussion.tex
\section{Discussion and open problems}\label{sec:outlook}

The isotropy floor is not monotone under stabilizer operations with feedforward, which is why it
can exceed the factor $4/3$ over nullity that limits every such monotone on $\CCZ$. The cost is that
the bound depends on the circuit model. We prove it here for Hadamard-free circuits. The
supplementary notes~\cite{IsofloorSupp} extend it to some circuits with internal Hadamards and to
higher levels of the Clifford hierarchy, show that under adaptive feedforward only nullity is known
to hold, and collect the open problems on those models.

Within the Hadamard-free model the floor is exact on the layers of Appendix~\ref{app:exact}, and
Conjecture~\ref{conj:ledger} states when the sharing floor is attained. Open
Problems~\ref{op:hadamard}--\ref{op:controlled} collect the main gaps that remain.

\begin{openproblem}[internal Hadamards]\label{op:hadamard}
Does $t\ge\delta(U)$ hold for every measurement-free Clifford$+T$ circuit with clean ancillas that
implements a diagonal level-three gate $U$? The supplementary notes~\cite{IsofloorSupp} prove it with
at most two internal Hadamards, when the correction is Clifford, at maximal output rank, and for
temporary-AND compilations under conditions on how the AND values are used.
\end{openproblem}

\begin{openproblem}[where the floor is exact]\label{op:exact}
Which non-Clifford pure-cubic gates have $\delta=2\nu+1$? The floor is exact on every pure-cubic gate
with $n\le5$, on layers whose blocks share single qubits along a forest, on cubics that factor, but not on three blocks that
meet pairwise in different qubits (Remark~\ref{rem:ledgerfails}) or on two of the five classes of
cubic forms at $n=6$~\cite{IsofloorSupp}.
\end{openproblem}

\begin{openproblem}[the missing unit on controlled unitaries]\label{op:controlled}
On a controlled diagonal Clifford that is not pure-cubic and has $\mu$ odd, the floor is one below
Campbell and Howard's exact count (Proposition~\ref{prop:controlled}). Is there an efficiently
computable bound that recovers the parity step of their converse~\cite[App.~C]{CampbellHowardPRA} on
every diagonal level-three gate?
\end{openproblem}

For field multiplication the ranks of the individual slices sharpen the floor $6k+1$, and they
also bound the number of Toffoli gates in Hadamard-free circuits, a bound that survives measurement
and feedforward. These results will appear separately.

\subsection*{Organization of the appendices}

The sections above can be read on their own. Appendix~\ref{sec:prelim} collects notation and
standard facts, with a table of symbols. Appendix~\ref{sec:lb} proves the floor and its ceiling,
and Appendix~\ref{app:exact} proves the exact counts for Toffoli layers and compares the floor with
Campbell and Howard's count on controlled unitaries. Appendix~\ref{app:applications} treats field
multiplication and the benchmarks.

%% file: appendices/A-prelim.tex
\section{Preliminaries}\label{sec:prelim}

This appendix fixes the notation used in the rest of the paper, collected in
Table~\ref{tab:notation}. It recalls the facts about bilinear forms and Reed--Muller codes that the
proofs need, with citations in place of proofs. It then states the Amy--Mosca correspondence in
terms of moments, the form in which the later appendices use it, and defines the cubic moment
form, its radicals and the stabilizer nullity. The gates $\CCZ$, $\CS$ and $T$ serve as running
examples.

\begin{table}[ht]
\centering
\small
\begin{tabular}{@{}l p{0.58\linewidth} l@{}}
\hline
Symbol & Meaning & Defined in\\
\hline
\multicolumn{3}{@{}l}{\small\itshape Core notation, used throughout}\\[1pt]
$n$& number of qubits of a diagonal gate $U$ & \eqref{eq:phasepoly}\\
$\phi$, $\alpha_y$ & phase polynomial $\phi(x)=\sum_y\alpha_y\,(y\cdot x)\bmod 8$ and its coefficients $\alpha_y\in\Z_8$ & \eqref{eq:phasepoly}\\
$t$ & number of odd coefficients of one representation & \S\ref{sec:phasepolys}\\
$\delta$ & least $T$-count of a Hadamard-free circuit for $U$ & \S\ref{sec:phasepolys}\\
$s=\Res(U)$ & residue: the vector $(\alpha_y\bmod 2)_{y\ne0}\in\F_2^{2^n-1}$ of a representation (the letter $s$ denotes only a residue) & \S\ref{sec:momentform}\\
$M_A$ & moment: parity of the number of odd parities $y\supseteq A$ & Definition~\ref{def:moments}\\
$M(U)$ & moment vector $(M_A)_{1\le|A|\le3}$, the same for every residue of $U$ & Fact~\ref{fact:moment}\\
$Y$, $c_a$ & matrix whose rows are the odd parities, and its column for qubit $a$ & Appendix~\ref{sec:lb}\\
$T_{abc}$, $T(w)$ & cubic moment form $M_{\{a,b,c\}}$, and its slice in direction $w$ & Definition~\ref{def:formradical}\\
$R$, $d$ & radical of $T$ and its dimension & Definition~\ref{def:formradical}\\
$Q$, $r$ & quadratic moment matrix and its rank over $\F_2$ & Definition~\ref{def:fulltensor}\\
$R^\ast$, $d^\ast$ & radical of the full moment tensor and its dimension & Definition~\ref{def:fulltensor}\\
$\mathcal A$, $\nu$ & shift group and unitary stabilizer nullity, $\nu=n-\dim\mathcal A$ & \S\ref{sec:nullitydef}\\
$T$, $S$ & overloaded symbols: $T$ is the gate $\mathrm{diag}(1,\omega)$ and, with indices or an argument, the cubic moment form, and $S$ is the phase gate and, in proofs, a set of qubits or a subspace & \\
\hline
\multicolumn{3}{@{}l}{\small\itshape Used in later appendices}\\[1pt]
$W(s)$ & ceiling: weight of the canonical residue in the coset of $s$, computed from $M(U)$ & Proposition~\ref{prop:ceiling}\\
$\eta$ & excess incidences $\eta=3m-n$ of $m$ $\CCZ$ blocks covering $n$ qubits & Lemma~\ref{lem:linearoverlap}\\
$\mu$ & least $T$-count, up to $\CCZ$ gates, of the target $U_t$ of a controlled gate & Proposition~\ref{prop:controlled}\\
$U_f$ & phase oracle $U_f\ket x=(-1)^{f(x)}\ket x$ of a cubic $f$ & Remark~\ref{rem:layernotcontrolled}\\
\hline
\end{tabular}
\caption{Notation. A subset $A\subseteq[n]$ and a vector $y\in\F_2^n$ are identified through
$y=\mathbf 1_A$. Calligraphic $\mathcal A$ is reserved for the shift group.}
\label{tab:notation}
\end{table}

\subsection{Bilinear forms and isotropic subspaces}\label{sec:bilinear}

The lower bounds of Appendix~\ref{sec:lb} come from a dimension count for subspaces on which the
dot product vanishes. We work over $\F_2=\{0,1\}$, where addition is XOR and multiplication is
AND. Write $\langle u,v\rangle=\sum_{i=1}^t u_iv_i\bmod 2$ for the dot product on $\F_2^t$ and
$V^\perp=\{u:\langle u,v\rangle=0\ \forall v\in V\}$ for a subspace $V$. The form is non-degenerate,
so $\dim V+\dim V^\perp=t$. A subspace $V$ is \emph{totally isotropic} if $V\subseteq V^\perp$. In
characteristic $2$ a vector can be orthogonal to itself, since
$\langle u,u\rangle=\wt(u)\bmod 2$ is the parity of its Hamming weight $\wt(u)$.

For example, the three columns of the matrix $Y$ in \eqref{eq:cczY}, read in $\F_2^7$, each have
weight four and every two of them share two coordinates. All their dot products vanish, so their
span is a totally isotropic subspace of dimension $3$ in $\F_2^7$.

\begin{lemma}\label{lem:iso-dim}
If $V\subseteq\F_2^t$ is totally isotropic then $\dim V\le t/2$.
\end{lemma}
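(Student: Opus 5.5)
The plan is to compare $V$ with its orthogonal complement using the dimension identity recalled just above the statement. First, since the dot product $\langle\cdot,\cdot\rangle$ on $\F_2^t$ is non-degenerate, every subspace satisfies $\dim V+\dim V^\perp=t$. The non-degeneracy comes from the standard basis: a vector $u$ orthogonal to every $e_i$ has $u_i=0$ for all $i$.

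The identity itself I would justify by viewing $V^\perp$ as a kernel. Choose a basis $v_1,\dots,v_k$ of $V$ and let $G$ be the $k\times t$ matrix with these vectors as rows, so $V^\perp=\ker G$. The rows of $G$ are independent, so $\rank G=k$. Rank--nullity then gives $\dim V^\perp=t-k$. Over $\F_2$ nothing in this argument changes, because it only uses linear algebra over a field.

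Second, total isotropy means $V\subseteq V^\perp$, so $\dim V\le\dim V^\perp=t-\dim V$. Rearranging gives $2\dim V\le t$, which is the claim.

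There is no real obstacle here. The one point to watch is characteristic $2$: a vector can be orthogonal to itself, so $V\cap V^\perp$ need not be zero. The argument above never uses $V\cap V^\perp=0$, only the dimension count, and that count holds over any field. This is why the bound $t/2$, rather than anything smaller, is the right one. It is attained, for example, by $V=\mathrm{span}\{(1,1)\}$ in $\F_2^2$, and by the $\mathcal C=\mathcal C^{\perp}$ case discussed in the warm-up.
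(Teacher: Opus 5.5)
Your proof is correct and matches the paper's: both use $V\subseteq V^\perp$ together with $\dim V^\perp=t-\dim V$ to get $2\dim V\le t$. Your rank--nullity derivation of the dimension identity spells out a step the paper takes as standard, namely that the dot product on $\F_2^t$ is non-degenerate.
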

\begin{proof}
$V\subseteq V^\perp$ gives $\dim V\le\dim V^\perp=t-\dim V$.
\end{proof}

The rank of the dot product restricted to a subspace can be computed from any spanning set. This
is standard~\cite{MacWilliamsSloane}.

\begin{lemma}\label{lem:gram}
Let $\mathcal C=\mathrm{span}\{c_1,\dots,c_n\}\subseteq\F_2^t$ and let $G$ be the $n\times n$ Gram matrix
$G_{ab}=\langle c_a,c_b\rangle$. Then the rank of the dot product restricted to $\mathcal C$ equals
$\rank_{\F_2}G$, and the radical $\mathcal C\cap \mathcal C^\perp$ of the restricted form has dimension
$\dim \mathcal C-\rank G$.
\end{lemma}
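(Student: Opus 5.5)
We prove Lemma~\ref{lem:gram} by identifying the Gram matrix $G$ with the matrix of the restricted form in a spanning set, and then computing ranks through the linear map that this spanning set defines. Let $Y$ be the $t\times n$ matrix with columns $c_1,\dots,c_n$, so that $G=Y^{\mathsf T}Y$ and $\mathcal C=\operatorname{im}Y$. The restricted form $\beta(u,v)=\langle u,v\rangle$ on $\mathcal C$ has rank $\dim\mathcal C-\dim\operatorname{rad}\beta$, where
\[
\operatorname{rad}\beta=\{u\in\mathcal C:\langle u,v\rangle=0\ \forall v\in\mathcal C\}=\mathcal C\cap\mathcal C^\perp .
\]
This is the definition of rank of a (possibly degenerate) bilinear form. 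It therefore suffices to prove the second claim, $\dim(\mathcal C\cap\mathcal C^\perp)=\dim\mathcal C-\rank G$; the first claim then follows.

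Consider the surjection $Y\colon\F_2^n\to\mathcal C$, $w\mapsto\sum_a w_ac_a$. We first check that $Y^{-1}(\mathcal C\cap\mathcal C^\perp)=\ker G$. For $w\in\F_2^n$, the vector $Gw$ has entries
\[
(Gw)_b=\Bigl\langle c_b,\sum_a w_ac_a\Bigr\rangle .
\]
So $Gw=0$ exactly when $Yw$ is orthogonal to every $c_b$, which holds exactly when $Yw\in\mathcal C^\perp$. Since $Yw\in\mathcal C$ always, this is the same as $Yw\in\mathcal C\cap\mathcal C^\perp$.

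Next we count dimensions. The subspace $\ker Y$ is contained in $\ker G$. Because $Y$ maps onto $\mathcal C$, its restriction maps $\ker G$ onto $\mathcal C\cap\mathcal C^\perp$. Hence
\[
\dim(\mathcal C\cap\mathcal C^\perp)=\dim\ker G-\dim\ker Y=(n-\rank G)-(n-\dim\mathcal C)=\dim\mathcal C-\rank G .
\]
The computation works equally well if we pick a basis of $\mathcal C$ among the $c_a$ and pass to the principal submatrix of $G$, which has the same rank.

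None of these steps is hard. The only point needing care is characteristic~$2$, where $\langle u,u\rangle$ can vanish for nonzero $u$. The argument never uses anisotropy or a diagonal normal form, however; it only uses the non-degeneracy of the dot product on all of $\F_2^t$ through the definition of $\mathcal C^\perp$, and that is valid over $\F_2$.
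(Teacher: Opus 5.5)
Your proof is correct and complete. The paper gives no argument of its own for this lemma: it calls it standard and cites MacWilliams--Sloane, so there is no competing route to compare. Your argument is the standard proof that the citation stands in for. You show that $\{w: Yw\in\mathcal C\cap\mathcal C^\perp\}=\ker G$, then apply rank--nullity using $\ker Y\subseteq\ker G$ and the surjectivity of $w\mapsto Yw$ onto $\mathcal C$. Your closing remark about the principal submatrix on a basis among the $c_a$ also correctly covers the other common definition of the rank of a form, as the rank of its Gram matrix in a basis.

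One small correction to your last paragraph: the argument does not use non-degeneracy of the dot product on $\F_2^t$ at all, only the definition of $\mathcal C^\perp$ and rank--nullity. Non-degeneracy enters the paper separately, in the proof of Theorem~\ref{thm:genLB}, through $\dim\mathcal C^\perp=t-\dim\mathcal C$.
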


The radical $\mathcal C\cap\mathcal C^\perp$ in this lemma is a subspace of $\F_2^t$, whose
coordinates are parities. The radicals $R$ and $R^{\ast}$ of \S\ref{sec:momentform} are different
objects. They are subspaces of $\F_2^n$, whose coordinates are qubits, and they are defined by the
moment tensor.

\subsection{Linear and Reed--Muller codes}\label{sec:codes}

The $T$-count of a diagonal gate is a distance to a Reed--Muller code (\S\ref{sec:momentform}), and
we use only the following facts~\cite{MacWilliamsSloane}. A binary linear code of length $L$ is a
subspace $C\subseteq\F_2^L$ with dual $C^\perp$ taken with respect to the dot product. The
syndrome of a vector is its list of inner products with a basis of $C^\perp$. Two vectors have the
same syndrome exactly when they lie in the same coset $v+C$, and a \emph{coset leader} is a vector
of minimum weight in its coset. \emph{Minimum-distance decoding} asks for the codeword nearest to
a given vector $v$. If $c$ is that codeword, then $v\oplus c$ is a coset leader of $v+C$.

Two operations change the length of a code by one. Puncturing $C$ at a coordinate deletes that
coordinate from every codeword. Shortening keeps the codewords that vanish there and then deletes
it. The dual of a punctured code is the shortened dual.

The Reed--Muller code $\RM(r,m)$ consists of the evaluation vectors, over all $2^m$ points, of the
Boolean polynomials of degree at most $r$ in $m$ variables. It has dimension
$\sum_{i\le r}\binom mi$ and minimum distance $2^{m-r}$, attained by the indicators of
$(m-r)$-flats (cosets of $(m-r)$-dimensional subspaces). Its dual is $\RM(m-r-1,m)$. We write
$\RM(r,m)^*$ for $\RM(r,m)$ punctured at the all-zeros coordinate. The code that matters below is
$\RM(n-4,n)^*$, of length $2^n-1$. For $n\le3$ it is the zero code, since $\RM(r,m)=\{0\}$ for
$r<0$. At $n=4$ it is $\RM(0,4)^*=\{0,\mathbf 1\}$, whose one nonzero codeword is the all-ones
vector on the $15$ nonzero points of $\F_2^4$.

\begin{lemma}[uniform minimum distance]\label{lem:dmin}
For every $n\ge 4$, $\RM(n-4,n)^*$ has minimum distance $15$.
\end{lemma}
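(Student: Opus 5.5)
The plan is to use the standard facts about Reed--Muller codes recalled in this section. First I identify the minimum weight of a punctured code with that of the unpunctured one, and then I handle the one case where the two could differ.

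For $n\ge4$ set $r=n-4\ge0$. Then $\RM(r,n)$ has minimum distance $2^{n-r}=2^4=16$, attained exactly by the indicators of $4$-flats in $\F_2^n$. Puncturing at the all-zeros coordinate can lower the weight of a codeword by at most one, and it does so exactly when the codeword is $1$ at the origin. So every nonzero codeword of $\RM(n-4,n)^*$ has weight at least $15$. Puncturing is also injective here, because no nonzero codeword has weight $1$ (since $16>1$). Hence distinct codewords give distinct punctured words, and a nonzero punctured word comes from a nonzero codeword.

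It remains to show that $15$ is attained. Take any $4$-dimensional linear subspace $V\le\F_2^n$, which exists because $n\ge4$. As a $4$-flat, its indicator lies in $\RM(n-4,n)$, because it is the product of $n-4$ affine linear functions $(1+\ell_i(x))$ cutting out $V$, a polynomial of degree $n-4$. This indicator has weight $16$ and contains the origin, so after puncturing it has weight exactly $15$. When $n=4$ this is just the all-ones vector on the $15$ nonzero points, which agrees with the description of $\RM(0,4)^*$ above.

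The only point needing care is the lower bound: no punctured word can have weight below $15$. This follows from the unpunctured minimum distance $16$, since puncturing removes only one coordinate. I do not expect a real obstacle. The minimum distance $2^{m-r}$ and the membership of flat indicators are cited facts from~\cite{MacWilliamsSloane}, so the argument reduces to these two bookkeeping steps.
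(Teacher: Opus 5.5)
Your proof is correct and matches the paper's: the unpunctured minimum distance $16$ drops by at most one under puncturing, and the indicator of a $4$-dimensional linear subspace, which passes through the origin, punctures to a codeword of weight exactly $15$. Your injectivity remark is harmless but not needed, since any nonzero punctured word is already the puncture of a nonzero codeword.
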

\begin{proof}
Puncturing lowers the weight of a codeword by at most one, so the minimum distance is at least
$2^4-1=15$. A $4$-flat through the origin has weight $16$ in $\RM(n-4,n)$ and weight $15$ after
puncturing.
\end{proof}

\subsection{Phase polynomials and the \texorpdfstring{$T$}{T}-count}\label{sec:phasepolys}

The quantity bounded throughout the paper is the $T$-count of a Hadamard-free circuit, and this
subsection defines it together with the two invariance properties used later. The circuits are
Clifford$+T$, where the Clifford group is generated by $H$, $S=\mathrm{diag}(1,i)$ and CNOT, and
$T=\mathrm{diag}(1,\omega)$ with $\omega=e^{i\pi/4}$. Identify $y\in\F_2^n$ with the set
$\{i:y_i=1\}\subseteq[n]$. The parity indexed by $y$ is the function
$x\mapsto y\cdot x=\bigoplus_{i\in y}x_i$. Every Boolean function $f:\F_2^n\to\F_2$ is uniquely a XOR
of monomials $\prod_{i\in A}x_i$, and its degree is the largest $|A|$ that occurs.

A Hadamard-free circuit on $n$ qubits built from CNOT and $T$ is diagonal and applies a sum of
$T$-rotations on parities of the input,
\begin{equation}\label{eq:phasepoly}
U\ket x=\omega^{\phi(x)}\ket x,\qquad \phi(x)=\sum_{y\in\F_2^n}\alpha_y\,(y\cdot x)\bmod 8,\quad \alpha_y\in\Z_8.
\end{equation}
We call $\phi$ the \emph{phase polynomial} and the list $(\alpha_y)_y$ a \emph{representation} of
$U$. A \emph{diagonal level-three gate} is a unitary of this form, equivalently, up to a global
phase, a diagonal gate in the third level of the Clifford hierarchy~\cite{CuiGottesmanKrishna}.
A $T$ adds $1$ to one coefficient $\alpha_y$, an $S$ adds $2$ and a $Z$ adds $4$. Two gates on the
same parity merge by adding their coefficients. Only odd coefficients require a $T$ gate, and
merging never increases the number of odd coefficients. The $T$-count of a representation is
therefore the number $t$ of odd $\alpha_y$.

Write $\delta(U)$ for the minimum of $t$ over all representations of $U$, counting $T$ and
$T^\dagger$ alike and treating diagonal Clifford gates as free. The circuit model is Hadamard-free
Clifford$+T$ in this sense. A lower bound on $\delta$ also bounds such circuits with any number of
clean ancillas (Corollary~\ref{cor:ancilla}). The standard example is $\CCZ$, whose
inclusion--exclusion representation \eqref{eq:cczid} has an odd coefficient on each of the seven
parities indexed by the nonempty subsets of $\{1,2,3\}$, so $t=7$ for it.

\begin{lemma}[Clifford bi-invariance]\label{lem:biinv}
Let $C,C'$ lie in the group generated by CNOT, $X$ and the diagonal Cliffords, and let $CUC'$ be
diagonal. Then $\delta(CUC')=\delta(U)$.
\end{lemma}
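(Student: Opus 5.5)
The plan is to show $\delta(CUC')\le\delta(U)$ by turning any representation of $U$ into one of $CUC'$ with the same number of odd coefficients. The reverse inequality then follows by symmetry, because $U=C^{-1}(CUC')C'^{-1}$ and $C^{-1},C'^{-1}$ lie in the same group. I will first pin down the form of $C$ and $C'$. The group generated by \CNOT, $X$ and diagonal Cliffords maps computational basis states to computational basis states up to phase, so every element has the form $\ket x\mapsto \omega^{q(x)}\ket{Ax\oplus b}$. Here $A\in GL_n(\F_2)$, $b\in\F_2^n$, and $\omega^{q}$ is a diagonal Clifford phase, meaning $q$ is a phase polynomial all of whose coefficients are even. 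I check this by showing that the set of such maps contains each generator and is closed under composition: substituting an affine change of variables into an even-coefficient phase polynomial keeps it even-coefficient, for the reasons given in the key step below.

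Write $C'=D'P'$ and $C=PD$, where $P,P'$ are affine permutations and $D,D'$ are diagonal Cliffords. Then $CUC'=P(DUD')P'$. Since $DUD'$ is diagonal with phase $\phi+q+q'$, adding even coefficients does not change which coefficients are odd, so a representation of $U$ with $t$ odd coefficients gives one of $DUD'$ with the same $t$. Diagonality of $CUC'$ forces $P\,V\,P'$ to be diagonal for $V=DUD'$. Write $P'\ket x=\ket{A'x\oplus b'}$ and $P\ket x=\ket{Ax\oplus b}$. Then $PVP'\ket x$ is a phase times $\ket{A(A'x\oplus b')\oplus b}$, so $PP'$ must be the identity permutation, and the result is the diagonal gate with phase $\psi(x)=\phi_V(A'x\oplus b')$.

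The key step is re-expanding $\psi$. Each term $\alpha_y\,(y\cdot(A'x\oplus b'))$ equals $\alpha_y\,((A'^{\mathsf T}y)\cdot x\oplus \beta)$, where $\beta=y\cdot b'$ is a constant bit. If $\beta=0$ the term is $\alpha_y$ times the parity $z=A'^{\mathsf T}y$. If $\beta=1$ it is $\alpha_y\bigl(1-(z\cdot x)\bigr)$, which is a global phase plus coefficient $-\alpha_y$ on $z$, and negation preserves oddness. Because $A'^{\mathsf T}$ is invertible, $y\mapsto z$ is a bijection and distinct parities stay distinct, so the number of odd coefficients is exactly $t$. One bookkeeping point remains: $y\neq0$ maps to $z\ne0$, and the $y=0$ term contributes only a constant.

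The main obstacle is keeping track of the global phases. Under the paper's convention that a level-three gate is defined up to global phase, with the constant absorbed into $\alpha_0$, this is harmless. If exact equality is required, I would note that a global phase $\omega^{c}$ with $c$ odd cannot come from any $\{\CNOT,T\}$ circuit, so the constants must be tracked, or else the statement must be read modulo global phase.
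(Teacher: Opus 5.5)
Your proof is correct and follows the paper's argument: you write $C,C'$ as diagonal Cliffords times affine permutations, use diagonality of $CUC'$ to force the permutations to cancel, and note two facts. The affine substitution permutes the nonzero parities bijectively, at most negating a coefficient and adding a global phase, and diagonal Cliffords add only even coefficients. The paper gets equality directly from the count being preserved by this bijection in every representation, rather than through your symmetry step.

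One small correction to your last remark: the $y=0$ term $\alpha_0\,(0\cdot x)$ is identically zero, so the constant cannot be absorbed into $\alpha_0$. As in the paper, you have to read $\delta$ modulo global phase.
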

\begin{proof}
Every element of the group is $DP$ with $D$ a diagonal Clifford and $P$ an affine permutation
$x\mapsto Lx\oplus b$ built from CNOT and $X$. Writing $C=D_1P_1$ and $C'=P_2D_2$, the product
$CUC'=D_1(P_1UP_1^{-1})(P_1P_2)D_2$ is diagonal only if $P_1P_2=I$, so $CUC'=D_1(P_1UP_1^{-1})D_2$.
Conjugation by $P_1$ replaces the parity $y\cdot x$ by $(L^{-\top}y)\cdot x\oplus(L^{-\top}y)\cdot b$.
This is a bijection on nonzero parities that changes a coefficient $\alpha_y$ at most to
$-\alpha_y$ and adds a global phase, so odd coefficients stay odd. A diagonal Clifford adds only
even coefficients ($S$ adds $2$, $Z$ adds $4$, and $\CZ$ adds $2,2,6$ on $x_i$, $x_j$,
$x_i\oplus x_j$). The number of odd coefficients is therefore unchanged in every representation,
and the minimum is preserved.
\end{proof}

\begin{lemma}[restriction monotonicity]\label{lem:restrict}
Let $U$ be a diagonal level-three gate on $n$ qubits and $S\subseteq[n]$, and write $U|_S$ for the
gate induced by fixing the inputs outside $S$ to $0$. Then $\delta(U)\ge\delta(U|_S)$.
\end{lemma}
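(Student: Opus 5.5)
The plan is to take a representation of $U$ with the minimum number $\delta(U)$ of odd coefficients and to restrict it directly. Let $(\alpha_y)_y$ be a representation of $U$ with exactly $\delta(U)$ odd coefficients, so that $\phi(x)=\sum_y\alpha_y\,(y\cdot x)\bmod 8$. Fixing $x_i=0$ for $i\notin S$ turns each parity $y\cdot x$ into $(y|_S)\cdot x_S$, where $y|_S$ is the restriction of $y$ to the coordinates in $S$. Hence
\begin{equation*}
\phi|_S(x_S)\;=\;\sum_{z\in\F_2^{S}}\Bigl(\sum_{y:\,y|_S=z}\alpha_y\Bigr)(z\cdot x_S)\bmod 8,
\end{equation*}
and this is a representation of $U|_S$ with coefficients $\beta_z=\sum_{y|_S=z}\alpha_y$. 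The coefficient $\beta_0$ multiplies the constant parity $0$, so it contributes nothing to the phase.

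The key step is the counting. A sum of integers mod $8$ is odd only if at least one summand is odd. So each $z\neq0$ with $\beta_z$ odd receives at least one odd $\alpha_y$ with $y|_S=z$. These preimage sets are disjoint for distinct $z$, so the number of odd $\beta_z$ with $z\ne0$ is at most the number of odd $\alpha_y$, which is $\delta(U)$. This gives $\delta(U|_S)\le\delta(U)$.

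Little can go wrong here. The one point to check is that the restricted phase is genuinely a representation of the gate $U|_S$, in the sense of \eqref{eq:phasepoly}: $U|_S$ is defined as the diagonal gate $\ket{x_S}\mapsto\omega^{\phi(x_S,0)}\ket{x_S}$, and this is exactly what the identity above computes. The other point is that $\delta(U|_S)$ counts only nonzero parities, which is why $\beta_0$ is discarded; dropping it only changes the phase by a global factor. Circuit-level reasoning (removing wires) is not needed, because $\delta$ is defined as a minimum over representations.
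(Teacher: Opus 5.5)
Your proof is correct and follows the same route as the paper's: restrict a representation by sending each parity $y$ to $y\cap S$, merge the parities that coincide, drop the zero parity, and note that neither merging nor vanishing can increase the number of odd coefficients. Your explicit injection from the odd $\beta_z$ to the odd $\alpha_y$, through disjoint preimage sets, is a slightly more careful wording of the paper's observation that coefficients add and an odd pair can become even.
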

\begin{proof}
Take any representation of $U$, a multiset of parities $y$ with coefficients $\alpha_y$. Setting
$x_j=0$ for $j\notin S$ sends the parity $y\cdot x$ to $(y\cap S)\cdot x$, so the restricted
multiset represents $U|_S$. Two parities may become equal, in which case their coefficients add
and an odd pair can become even. A parity may also vanish, contributing nothing. Neither can raise
the odd count, so the restricted representation has at most as many odd parities. Minimizing over
representations gives $\delta(U|_S)\le\delta(U)$.
\end{proof}

Two further facts concern the robustness of the model, and both follow from the same count of odd
coefficients. The $T$-count is subadditive, so circuits for parts of a phase can be composed
(Lemma~\ref{lem:subadd}). Clean ancillas do not lower it when the circuit has no internal
Hadamards (Corollary~\ref{cor:ancilla}). For a phase polynomial $\phi$ we write $\delta(\phi)$ for
$\delta$ of the gate $\ket x\mapsto\omega^{\phi(x)}\ket x$.

\begin{lemma}[$\delta$ is subadditive]\label{lem:subadd}
For phase polynomials $\phi_1,\phi_2$ on the same $n$ qubits, $\delta(\phi_1+\phi_2)\le
\delta(\phi_1)+\delta(\phi_2)$.
\end{lemma}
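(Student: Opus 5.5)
Take minimal representations $(\alpha_y)$ of $\phi_1$ and $(\beta_y)$ of $\phi_2$, with $t_1=\delta(\phi_1)$ and $t_2=\delta(\phi_2)$ odd coefficients. By \eqref{eq:phasepoly}, $\phi_1(x)\equiv\sum_y\alpha_y(y\cdot x)$ and $\phi_2(x)\equiv\sum_y\beta_y(y\cdot x)\pmod 8$ for every $x$. Adding these identities pointwise shows that $(\alpha_y+\beta_y)_y$ is a representation of $\phi_1+\phi_2$; in circuit terms, the two circuits run in sequence, since diagonal gates commute and their phases add.

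It remains to count the odd coefficients of the sum. For each $y$, the coefficient $\alpha_y+\beta_y$ is odd only if exactly one of $\alpha_y,\beta_y$ is odd. So the set of $y$ with $\alpha_y+\beta_y$ odd is the symmetric difference of the two odd supports, and its size is $t_1+t_2-2k\le t_1+t_2$, where $k$ counts the parities that are odd in both. This is the merging observation of \S\ref{sec:phasepolys}: combining gates on the same parity never increases the number of odd coefficients. Since $\delta(\phi_1+\phi_2)$ is the minimum over all representations, it is at most $t_1+t_2=\delta(\phi_1)+\delta(\phi_2)$.

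There is no substantive obstacle. The only point to check is that the coefficientwise sum is a legitimate representation of the summed phase with no hidden global-phase or $y=0$ issue. The $y=0$ parity is identically zero and contributes nothing, so it does not matter whether it is included, and all arithmetic is in $\Z_8$ as in \eqref{eq:phasepoly}.
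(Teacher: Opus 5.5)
Your proposal is correct and is essentially the paper's proof: both add the two minimal representations coefficient by coefficient and observe that a summed coefficient is odd only if exactly one summand is odd, giving at most $\delta(\phi_1)+\delta(\phi_2)$ odd coefficients. Your symmetric-difference count $t_1+t_2-2k$ and the remark about the $y=0$ parity are harmless extra detail.
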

\begin{proof}
Take representations of $\phi_1$ and $\phi_2$ with $\delta(\phi_1)$ and $\delta(\phi_2)$ odd
coefficients. Adding their coefficients parity by parity gives a representation of
$\phi_1+\phi_2$. A coefficient of the sum is odd only if one of the two summands is odd, so the sum
has at most $\delta(\phi_1)+\delta(\phi_2)$ odd coefficients.
\end{proof}

\begin{corollary}[clean ancillas without internal Hadamards]\label{cor:ancilla}
Let $K$ be a $\{$CNOT$,T\}$ circuit on $n+k$ qubits (no internal Hadamards) with
$K\ket x\ket{0^k}=\omega^{\phi(x)}\ket x\ket{0^k}$ for all $x$, implementing the diagonal
level-three gate $U$. Then its $T$-count satisfies $t\ge\delta(U)$.
\end{corollary}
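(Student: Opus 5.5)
The approach is to show that the circuit $K$ on $n+k$ qubits is itself a representation of a phase polynomial on $n+k$ variables with exactly $t$ odd coefficients, and then restrict that representation to the slice where the ancillas are $0$.

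Since $K$ is a $\{\CNOT,T\}$ circuit, it acts on a computational basis state $\ket{x}\ket{z}$, $z\in\F_2^k$, as $\ket{x}\ket{z}\mapsto\omega^{\psi(x,z)}\ket{P(x,z)}$. Here $P$ is the linear permutation given by the $\CNOT$s, and $\psi(x,z)=\sum_j\beta_j\,(y_j\cdot(x,z))$ sums over the $t$ $T$/$T^\dagger$ gates. Each $y_j\in\F_2^{n+k}$ is the parity on the wire at the moment the gate fires, and $\beta_j=\pm1$. This is the same parity propagation used in \eqref{eq:phasepoly}, with no diagonalness assumption. Merging equal parities gives a representation $(\alpha_y)$ of $\psi$ with at most $t$ odd coefficients.

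The hypothesis $K\ket{x}\ket{0^k}=\omega^{\phi(x)}\ket{x}\ket{0^k}$ holds for every $x$. It gives $P(x,0)=(x,0)$ and, comparing phases, $\psi(x,0)\equiv\phi(x)\pmod 8$ for all $x$. We do not need $K$ to be diagonal off the slice $z=0$, since $P$ only affects the basis state and not the phase.

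Next, set $z=0$ in the representation of $\psi$, exactly as in the proof of Lemma~\ref{lem:restrict}. Each parity $y\cdot(x,z)$ becomes $(y\cap[n])\cdot x$. Some parities collapse onto each other, where coefficients add and odd counts can only drop, and some vanish to a constant, where they contribute only a global phase. The result is a representation of $\phi$, hence of $U$, with at most $t$ odd coefficients. Therefore $\delta(U)\le t$.

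The only delicate point is that $K$ need not be diagonal on the whole $(n+k)$-qubit space, so Lemma~\ref{lem:restrict} cannot be quoted verbatim. I would handle this by applying its counting argument directly to $\psi$ rather than to a diagonal gate. A vanished parity gives a constant term $\alpha_y\cdot 0=0$, so even global phases do not arise, and equality of $\omega^{\psi(x,0)}$ with $\omega^{\phi(x)}$ pins down $\phi$ mod $8$ exactly.
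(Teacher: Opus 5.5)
Your proposal is correct and follows the paper's proof essentially step for step. Both write $K$ as a linear permutation times a phase polynomial on $n+k$ variables with at most $t$ odd coefficients, read off $P(x,0)=(x,0)$ and $\psi(x,0)\equiv\phi(x)\pmod 8$ from the hypothesis (so non-diagonality off the ancilla-zero slice is harmless), and then set the ancillas to $0$ using the merging and vanishing count from the proof of Lemma~\ref{lem:restrict}, which gives $\delta(U)\le t$.
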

\begin{proof}
Write the input as $z=(x,c)\in\F_2^{n+k}$. At every point of $K$ each wire carries a parity $y\cdot z$
of the input, and a $T$ or $T^\dagger$ on that wire contributes $\pm1$ to the coefficient of $y$.
Hence $K\ket z=\omega^{\Phi(z)}\ket{Lz}$, where $L\in GL(n+k,\F_2)$ is the permutation applied by the
CNOT gates and $\Phi(z)=\sum_y\alpha_y\,(y\cdot z)\bmod 8$ is a phase polynomial
\eqref{eq:phasepoly} on $n+k$ qubits with at most $t$ odd coefficients. The hypothesis on $K$ gives
$L(x,0)=(x,0)$ and $\Phi(x,0)\equiv\phi(x)\pmod 8$ for every $x$. Thus $L(x,c)=(x+L_{12}c,\,L_{22}c)$
for some matrices $L_{12},L_{22}$, and $\phi$ is the phase polynomial of $K$ restricted to the
ancillas in $0$. Setting $c=0$ sends each parity $y\cdot z$ to $(y\cap[n])\cdot x$, as in the proof of
Lemma~\ref{lem:restrict}. Parities that coincide merge and parities that vanish drop out, so the
restricted list is a representation of $U$ with at most $t$ odd coefficients. Hence
$\delta(U)\le t$.
\end{proof}

\subsection{The Amy--Mosca correspondence and the moment reformulation}\label{sec:momentform}

All bounds in the later appendices are statements about a few low-order parities of the odd
parity set, and this subsection shows why those parities determine the gate. Reducing the
coefficients in \eqref{eq:phasepoly} modulo $2$ gives the \emph{residue}
$s=\Res(U)=(\alpha_y\bmod 2)_y\in\F_2^{2^n-1}$, indexed by nonzero parities, which records the
parities that carry an odd phase. Amy and Mosca~\cite{AmyMosca} proved that two representations
implement the same gate up to a diagonal Clifford if and only if their residues differ by a
codeword of $\RM(n-4,n)^*$. Hence
\begin{equation}\label{eq:am}
\delta(U)=\operatorname{dist}\bigl(\Res(U),\RM(n-4,n)^*\bigr)
=\min_{c\in\RM(n-4,n)^*}\wt\bigl(\Res(U)\oplus c\bigr).
\end{equation}

The degree $n-4$ has a short explanation. Expand each parity over $\Z$ by inclusion--exclusion,
\[
y\cdot x=\sum_{\varnothing\ne A\subseteq y}(-2)^{|A|-1}\prod_{i\in A}x_i .
\]
Modulo $8$ the terms with $|A|\ge4$ vanish, so a parity contributes to the monomial
$\prod_{i\in A}x_i$ only for $|A|\le3$, with weight $1$, $-2$ or $4$. A set $\mathcal S$ of
parities, each with coefficient $1$, therefore gives the phase
$\sum_{1\le|A|\le3}(-2)^{|A|-1}N_A\prod_{i\in A}x_i$, where $N_A$ counts the parities in
$\mathcal S$ that contain $A$. This phase is a diagonal Clifford, a sum of multiples of $2x_i$ and of
$4x_ix_j$, exactly when every $N_A$ with $1\le|A|\le3$ is even. In other words, the indicator of
$\mathcal S$ must be orthogonal to every monomial $\prod_{i\in A}y_i$ with $1\le|A|\le3$. These
monomials span $\RM(3,n)$ shortened at $y=0$, whose dual is $\RM(n-4,n)^*$. For $\CCZ$ the cubic
monomial is the one that fails. Among the seven parities of \eqref{eq:cczid} only
$x_1\oplus x_2\oplus x_3$ contains $\{1,2,3\}$, while every singleton and every pair lies in an
even number of them.

\begin{definition}[moments]\label{def:moments}
For a set $\mathcal S\subseteq\F_2^n\setminus\{0\}$ of parities and $A\subseteq[n]$, the moment is
\[
M_A(\mathcal S)=\bigl|\{y\in \mathcal S:A\subseteq y\}\bigr|\bmod 2.
\]
For a vector $s\in\F_2^{2^n-1}$, $M_A(s)$ is the moment of its support.
\end{definition}

The moment $M_A$ is the syndrome coordinate of $\mathcal S$ against the monomial
$\prod_{i\in A}y_i$, so the derivation above gives the following operational form of
\eqref{eq:am}.

\begin{fact}[moment form of Amy--Mosca]\label{fact:moment}
For $1\le|A|\le3$ the moment $M_A(s)$ takes the same value $M_A(U)$ on every residue $s$ of $U$.
Two residues lie in the same coset of $\RM(n-4,n)^*$ if and only if these moments agree, and
\[
\delta(U)=\min\bigl\{\,|\mathcal S|:M_A(\mathcal S)=M_A(U)\ \text{for all }1\le|A|\le 3\,\bigr\}.
\]
The moment $M_\varnothing$ and the moments with $|A|\ge 4$ are unconstrained.
\end{fact}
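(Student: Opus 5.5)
The plan is to read everything off the syndrome description of the Amy--Mosca code that the derivation before Definition~\ref{def:moments} already sets up. For $A\subseteq[n]$ let $m_A\in\F_2^{2^n-1}$ be the evaluation vector of the monomial $y\mapsto\prod_{i\in A}y_i$ on the nonzero points $y$. Then $M_A(s)=\langle s,m_A\rangle$ for every $s$, directly from Definition~\ref{def:moments}.

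\textbf{Step 1: identify the dual code.} For $|A|\ge1$ the monomial vanishes at $y=0$. So the vectors $m_A$ with $1\le|A|\le3$ span the shortening at $0$ of $\RM(3,n)$. By \S\ref{sec:codes} this shortened code is the dual of the punctured code $\RM(n-4,n)^*$, since $\RM(3,n)^\perp=\RM(n-4,n)$ and the dual of a punctured code is the shortened dual. Consequently $s\oplus s'\in\RM(n-4,n)^*$ if and only if $\langle s\oplus s',m_A\rangle=0$ for all $1\le|A|\le3$. In other words, two residues lie in the same coset if and only if their moments of orders one to three agree. This is the middle claim.

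\textbf{Step 2: invariance and the formula for $\delta$.} By Amy--Mosca, any two representations of $U$ (up to a diagonal Clifford) have residues differing by a codeword. Step~1 then gives that $M_A(s)$, for $1\le|A|\le3$, is the same on all residues of $U$. Call this common value $M_A(U)$.

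For the formula for $\delta$ I would use the converse direction of Amy--Mosca. Any set $\mathcal S$ whose indicator lies in the coset of $\Res(U)$ is the residue of a representation implementing $U$ up to a diagonal Clifford. That Clifford can be absorbed into even coefficients at no $T$-cost, as noted in \S\ref{sec:phasepolys}. Alternatively, one can verify this directly via the mod-$8$ inclusion--exclusion expansion: the difference of the two phases has all $N_A$ even, hence is a diagonal Clifford.

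Combining with \eqref{eq:am}, $\delta(U)$ is the minimum weight in the coset. That minimum equals the minimum $|\mathcal S|$ over sets with $M_A(\mathcal S)=M_A(U)$ for all $1\le|A|\le3$.

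\textbf{Step 3: the unconstrained moments.} Here I need that $m_\varnothing$ and $m_A$ with $|A|\ge4$ are not in the span of the $m_A$ with $1\le|A|\le3$. Given that, there is a codeword $c$ of $\RM(n-4,n)^*$ with $\langle c,m_B\rangle=1$, and adding $c$ flips $M_B$ while staying in the coset.

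\begin{itemize}
\item For $|A|\ge4$, this holds because the evaluation map of polynomials of degree at most $n$ on the $2^n$ points is injective. Restricted to polynomials vanishing at $0$, it is injective on the nonzero points as well. A monomial of degree $\ge4$ is therefore not a combination of monomials of degree $1$ to $3$.
\item For $m_\varnothing$ (the all-ones vector on nonzero points), the unique polynomial vanishing at $0$ and equal to $1$ elsewhere is $1+\prod_i(1+x_i)$. This has degree $n$, so it lies outside the span once $n\ge4$.
\item For $n\le3$ the code is zero, so there is nothing to vary and the claim is vacuous.
\end{itemize}

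\textbf{Main obstacle.} I expect the only delicate point to be the bookkeeping in Step~1. The shortening, puncturing and duality conventions must line up exactly, with the point $y=0$ deleted consistently on both sides. Once that identification is fixed, the rest is linear algebra.
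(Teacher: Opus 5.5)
Your proposal is correct and follows the paper's own route. The paper likewise reads $M_A$ as the syndrome coordinate against the monomial $\prod_{i\in A}y_i$, identifies the span of these monomials for $1\le|A|\le3$ with $\RM(3,n)$ shortened at $y=0$ (whose dual is $\RM(n-4,n)^*$), and combines this with \eqref{eq:am}; your Step~3 supplies an argument for the last sentence of the Fact, which the paper only asserts.

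One small correction in Step~3: for $n\le3$ the claim about $M_\varnothing$ is not vacuous under your reading. The identity $m_\varnothing=\sum_{A\ne\varnothing}m_A$ then involves only $|A|\le3$, so $M_\varnothing$ is fixed by the coset; for instance, $\CCZ$'s unique residue has size $7$, so $M_\varnothing=1$. For $n\le3$ the last sentence of the Fact should therefore be read as saying only that these moments do not enter the coset conditions, not that they can vary within a coset.
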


We write $M_A$ for $M_A(U)$ when the gate is clear. In the language of Heyfron and
Campbell~\cite{HeyfronCampbell} the moments with $|A|\le3$ are the entries of the signature
tensor, and $\delta(U)$ is its symmetric rank.

The isotropy floor of Appendix~\ref{sec:lb} is computed from two objects built out of these
moments. The first uses only the order-three moments.

\begin{definition}[pure-cubic gate, cubic moment form, radical]\label{def:formradical}
A gate is \emph{pure-cubic} if $M_A=0$ for all $1\le|A|\le 2$. The \emph{cubic moment form} of a
gate is the symmetric tensor with $T_{abc}=M_{\{a,b,c\}}$ for distinct $a,b,c$ and $T_{abc}=0$
otherwise. For $w\in\F_2^n$ the matrix $T(w)_{qr}=\sum_aw_aT_{aqr}$ is the \emph{slice} of $T$ in
direction $w$. The \emph{radical} of $T$ is the set of directions whose slice vanishes,
\[
R=\Bigl\{w\in\F_2^n:\textstyle\sum_a w_aT_{aqr}=0\ \forall q,r\Bigr\},\qquad d=\dim R,
\]
and $T$ is \emph{non-degenerate} when $d=0$.
\end{definition}

For $\CCZ$ every singleton and every pair lies in an even number of the seven parities of
\eqref{eq:cczid}, so $\CCZ$ is pure-cubic, and $T_{123}=1$ is the only nonzero entry up to symmetry.
The slice $T(w)$ has entries $w_3$, $w_2$, $w_1$ in positions $\{1,2\}$, $\{1,3\}$, $\{2,3\}$ and
vanishes only at $w=0$, so $R=\{0\}$ and $d=0$ (\S\ref{sec:warmup}). The same gate on four qubits,
with the fourth idle, has $T(e_4)=0$, so $R=\langle e_4\rangle$ and $d=1$.

The second object keeps the moments of orders one and two as well. They are placed in a matrix, and
at the coincident indices of a three-index tensor.

\begin{definition}[quadratic moment matrix, full moment tensor]\label{def:fulltensor}
The \emph{quadratic moment matrix} of a gate is $Q\in\F_2^{n\times n}$ with $Q_{ab}=M_{\{a,b\}}$ for
$a\ne b$ and $Q_{aa}=M_{\{a\}}$, of rank $r=\rank_{\F_2}Q$. The gate is a \emph{pure quadratic} if
$M_A=0$ for all $|A|\in\{1,3\}$. The \emph{full moment tensor} has entries $M_{\{a,p,q\}}$ for all
$a,p,q\in[n]$, where coincident indices give moments of order one or two. Its radical is
\[
R^{\ast}=\Bigl\{w\in\F_2^n:\textstyle\sum_a w_a\,M_{\{a,p,q\}}=0\ \forall p,q\Bigr\},\qquad d^{\ast}=\dim R^{\ast}.
\]
\end{definition}

For $\CS$ the identity $2x_1x_2=x_1+x_2-(x_1\oplus x_2)$ gives the residue on the parities
$\{1\}$, $\{2\}$ and $\{1,2\}$. Each singleton lies in two of them and the pair in one, so
$M_{\{1\}}=M_{\{2\}}=0$ and $M_{\{1,2\}}=1$. The gate is a pure quadratic with
\[
Q=\begin{pmatrix}0&1\\1&0\end{pmatrix},\qquad r=2 .
\]
There are no triples on two qubits, so $T=0$ and $d=2$. The full tensor sees $Q$ through its
coincident indices. Taking $p=q=1$ and then $p=q=2$ in the definition of $R^{\ast}$ forces $w_2=0$
and $w_1=0$, so $R^{\ast}=\{0\}$ and $d^{\ast}=0$. The two radicals can therefore differ by the whole
space.

Given the moments, computing $d$, $d^{\ast}$ and $r$ is Gaussian elimination on at most $n^2$
rows of length $n$. This takes $O(n^3)$ operations on $n$-bit words, or $O(n^4)$ bit operations.
On a pure-cubic gate $Q=0$, $r=0$ and $R^\ast=R$.

The pure-cubic locus is invariant under \CNOT{} relabelling, which the exact counts of
Appendix~\ref{app:exact} use.

\begin{remark}[$2$-adic grading]\label{rem:grading}
The phase of a diagonal level-three gate can be written $\phi\equiv4f_3+2f_2+f_1\pmod 8$, where each
$f_j$ is an $\F_2$-polynomial of degree at most $j$ evaluated in $\{0,1\}\subset\Z$, and the carry
identity $u+v=(u\oplus v)+2uv$ brings any phase polynomial to this form. The cubic and quadratic
parts are $f_3$ and $f_2$. A CNOT relabelling replaces each variable by an $\F_2$ sum, which differs
from the integer sum by a carry. Since $4z\bmod 8$ depends only on $z\bmod 2$, carries do not affect
the cubic part, so the pure-cubic locus is closed under $GL(n,\F_2)$ and $\delta$ is constant on each
orbit. Since $2u+2v=2(u\oplus v)+4uv$, a relabelling converts quadratic terms into cubic ones, so the
pure-quadratic locus is not closed under $GL(n,\F_2)$, and we do not expect a bound depending on $r$ alone to be
tight on the quadratic side.
\end{remark}

\subsection{Stabilizer nullity and the shift group}\label{sec:nullitydef}

Stabilizer nullity is the standard magic monotone against which the isotropy floor is compared in
Appendix~\ref{sec:lb}. For $a\in\F_2^n$ write $X^a=\bigotimes_iX^{a_i}$. The \emph{shift group}
of a diagonal level-three gate $U$ is
\[
\mathcal A=\bigl\{a\in\F_2^n:\ X^aUX^a=\lambda\,UP\ \text{for some phase }\lambda\text{ and diagonal Pauli }P\bigr\}.
\]
Since $X^aUX^a\ket x=\omega^{\phi(x\oplus a)}\ket x$, a direction $a$ lies in $\mathcal A$ if and
only if
\[
\phi(x\oplus a)-\phi(x)\equiv c+4\ell(x)\pmod 8\qquad\text{for all }x,
\]
for some constant $c\in\Z_8$ and some affine function $\ell:\F_2^n\to\F_2$. Conjugating a diagonal
Pauli by $X^b$ gives a diagonal Pauli up to sign, so $\mathcal A$ is a subgroup of $\F_2^n$. For a
diagonal gate the unitary stabilizer nullity of Jiang and Wang~\cite{JiangWang} (see
also~\cite{BCHK}) is
\[
\nu=n-\dim\mathcal A=n-\log_2|\mathcal A|,
\]
and we take this as its definition. Proposition~\ref{prop:nullitygen} identifies $\mathcal A$
with $R^\ast$.

For $\CCZ$ the phase is $\phi=4f$ with $f=x_1x_2x_3$. Since $4z\bmod 8$ depends only on $z\bmod2$,
the difference $\phi(x\oplus a)-\phi(x)$ is $4$ times the derivative $f(x\oplus a)\oplus f(x)$, whose
quadratic part $a_1x_2x_3\oplus a_2x_1x_3\oplus a_3x_1x_2$ is nonzero for every $a\ne0$. So
$\mathcal A=\{0\}$ and $\nu(\CCZ)=3$. For $T$ the difference $\phi(x\oplus1)-\phi(x)=1-2x$ is not
of the form $c+4\ell(x)$, so $\nu(T)=1$. For $\CS$ the phase is $\phi=2x_1x_2$. The shifts $e_1$,
$e_2$ and $e_1\oplus e_2$ give the differences $2x_2-4x_1x_2$, $2x_1-4x_1x_2$ and $2-2x_1-2x_2$.
None has the form $c+4\ell(x)$, so $\nu(\CS)=2$.

%% file: appendices/B-floor.tex
\section{The isotropy floor}\label{sec:lb}

This appendix proves the bounds summarized in Figure~\ref{fig:pipeline}, whose single-gate case is
worked in \S\ref{sec:warmup}. The floor is proved first for pure-cubic gates (\S\ref{sec:purecubic})
and then for every diagonal level-three gate (\S\ref{sec:general}), and the radical it depends on
is identified with the stabilizer nullity (\S\ref{sec:nullityradical}). Rigidity below weight eight
(\S\ref{sec:rigid}) and a ceiling computable in $O(n^3)$ (\S\ref{sec:ceiling}) follow.

The notation is that of Appendix~\ref{sec:prelim}. For a diagonal level-three gate $U$ we write
$M_A$ for the moment of any residue of $U$, which by Fact~\ref{fact:moment} depends only on $U$.
Fix a representation of $U$ with odd parities $y_1,\dots,y_t$. Let $Y\in\F_2^{t\times n}$ be the
matrix whose rows are the $y_j$, and let $c_a\in\F_2^t$ be its $a$-th column, the indicator of
which parities contain qubit $a$. The key identities are
\begin{equation}\label{eq:cols}
\langle c_a,c_b\rangle=\sum_j (y_j)_a(y_j)_b=M_{\{a,b\}},\qquad
\langle c_a,c_a\rangle=\sum_j (y_j)_a=M_{\{a\}},
\end{equation}
and, for the triple products with $a,b,q$ distinct,
\begin{equation}\label{eq:triple}
\sum_j (c_a)_j(c_b)_j(c_q)_j=M_{\{a,b,q\}},
\end{equation}
which equals $T_{abq}$ on the pure-cubic locus.

\subsection{The pure-cubic case}\label{sec:purecubic}

This subsection is the warm-up of \S\ref{sec:warmup} with $\CCZ$ replaced by an arbitrary
pure-cubic gate. The steps are the same. The columns of $Y$ span a self-orthogonal code
(Lemma~\ref{lem:isocol}), the kernel of $Y$ lies in the radical (Lemma~\ref{lem:rankstep}), and the
all-ones vector supplies the extra one (Lemma~\ref{lem:allones}). The lower bound compares the
dimension of the column code with the radical of the cubic moment form $T$.

\begin{lemma}[isotropic columns]\label{lem:isocol}
For a pure-cubic gate, every representation has $t\ge 2\rank Y$.
\end{lemma}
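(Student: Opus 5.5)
The plan is to show that the column space $\mathcal C=\mathrm{span}\{c_1,\dots,c_n\}\subseteq\F_2^t$ is totally isotropic and then apply Lemma~\ref{lem:iso-dim}. Fix any representation of the pure-cubic gate $U$ with odd parities $y_1,\dots,y_t$, and form $Y$ and its columns $c_a$ as above. By Fact~\ref{fact:moment}, every moment $M_A$ with $1\le|A|\le3$ is the same for all residues of $U$. Purity says $M_{\{a\}}=0$ and $M_{\{a,b\}}=0$ for all $a\ne b$.

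The key identities \eqref{eq:cols} then give the Gram matrix of the columns directly. We have $\langle c_a,c_a\rangle=M_{\{a\}}=0$ and $\langle c_a,c_b\rangle=M_{\{a,b\}}=0$, so $G=0$ in the notation of Lemma~\ref{lem:gram}. Bilinearity of the dot product extends the vanishing from the spanning set to all of $\mathcal C$: for $u=\sum_a u_ac_a$ and $v=\sum_b v_bc_b$,
\[
\langle u,v\rangle=\sum_{a,b}u_av_b\langle c_a,c_b\rangle=0 .
\]
This includes $u=v$, which is where the diagonal entries, the weight parities, are needed. Hence $\mathcal C\subseteq\mathcal C^\perp$. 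Equivalently, Lemma~\ref{lem:gram} says the restricted form has rank $0$, so its radical is all of $\mathcal C$.

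Lemma~\ref{lem:iso-dim} now gives $\dim\mathcal C\le t/2$. Since $\dim\mathcal C$ is the column rank of $Y$ over $\F_2$, we get $t\ge2\rank Y$.

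There is no real obstacle here. The one point to check is that \eqref{eq:cols} counts each odd parity once. A representation has a single coefficient per parity, with repeats already merged in \S\ref{sec:phasepolys}, so the rows $y_j$ are distinct and the counts are exactly the moments of the residue's support.
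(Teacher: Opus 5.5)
Your proposal is correct and follows the paper's proof: the identities \eqref{eq:cols} together with purity make the span of the columns totally isotropic, and Lemma~\ref{lem:iso-dim} then gives $\rank Y=\dim\mathcal C\le t/2$. Your added points, extending isotropy from the spanning set by bilinearity and noting that the odd parities of a representation are distinct, are left implicit in the paper's version.
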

\begin{proof}
By \eqref{eq:cols}, a pure-cubic gate has $\langle c_a,c_b\rangle=M_{\{a,b\}}=0$ for $a\ne b$ and
$\langle c_a,c_a\rangle=M_{\{a\}}=0$. Hence $\mathcal C=\mathrm{span}\{c_a\}$ is totally isotropic, and
Lemma~\ref{lem:iso-dim} gives $\dim \mathcal C\le t/2$. Since column rank equals row rank,
$\rank Y=\dim \mathcal C\le t/2$.
\end{proof}

The moments bound the rank of $Y$ from below.

\begin{lemma}[rank versus radical]\label{lem:rankstep}
Let $y_1,\dots,y_t$ be the odd parities of a representation of a diagonal level-three gate $U$, with
matrix $Y$ and columns $c_a$ as above. Then $\ker Y\subseteq R^{\ast}$, and hence
$\rank Y\ge n-d^{\ast}$. On a pure-cubic gate $R^{\ast}=R$, so $\rank Y\ge n-d$.
\end{lemma}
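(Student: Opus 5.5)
The plan is to show directly that any $w\in\ker Y$ satisfies the defining equations of $R^{\ast}$ in Definition~\ref{def:fulltensor}, namely $\sum_a w_aM_{\{a,p,q\}}=0$ for all $p,q\in[n]$. The rank bound then follows from rank--nullity: $\rank Y=n-\dim\ker Y\ge n-\dim R^{\ast}=n-d^{\ast}$. Fix $w$ with $Yw=0$. Coordinatewise this says $\sum_a w_a(c_a)_j=0$ in $\F_2$ for every $j$, i.e.\ every odd parity $y_j$ satisfies $y_j\cdot w=0$.

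The key identity I would establish first is uniform across coincidences of indices:
\[
M_{\{a,p,q\}}=\sum_{j=1}^t (c_a)_j(c_p)_j(c_q)_j \pmod 2\qquad\text{for all }a,p,q\in[n].
\]
For distinct indices this is \eqref{eq:triple}. When indices coincide, I use that $(c_a)_j$ is a bit, so $(c_a)_j^2=(c_a)_j$. The product then collapses to the indicator that $y_j$ contains the set $\{a,p,q\}$, whose size is $1$ or $2$. Summing over $j$ gives the moment of that set by Definition~\ref{def:moments}, which agrees with \eqref{eq:cols}. So in every case the sum counts the odd parities containing $\{a,p,q\}$.

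With this identity the radical condition is a one-line exchange of sums:
\[
\sum_a w_aM_{\{a,p,q\}}=\sum_j (c_p)_j(c_q)_j\Bigl(\sum_a w_a(c_a)_j\Bigr)=0.
\]
Hence $w\in R^{\ast}$, and therefore $\ker Y\subseteq R^{\ast}$. The moments are independent of the representation by Fact~\ref{fact:moment}, so any representation may be used.

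For the pure-cubic clause, $M_A=0$ for $|A|\le2$, so every entry of the full tensor with a repeated index vanishes. The full tensor then coincides with $T$, which gives $R^{\ast}=R$ and $d^{\ast}=d$, as noted after Definition~\ref{def:fulltensor}.

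The only step needing care is the coincident-index case. One must check that the convention ``entries $M_{\{a,p,q\}}$ with coincident indices give lower moments'' matches the idempotent collapse of the column products. In particular, $a=p=q$ gives $M_{\{a\}}$, and $a=p\ne q$ gives $M_{\{a,q\}}$. I expect this bookkeeping to be the main (mild) obstacle; the rest is linear algebra.
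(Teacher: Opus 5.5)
Your proposal is correct and matches the paper's proof: the idempotence $(c_a)_j^2=(c_a)_j$ makes $\sum_j(c_a)_j(c_p)_j(c_q)_j=M_{\{a,p,q\}}$ hold whether or not the indices coincide, and exchanging sums shows $\ker Y\subseteq R^{\ast}$. The rest is rank--nullity, plus the vanishing of the coincident-index entries on the pure-cubic locus, which gives $R^{\ast}=R$.
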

\begin{proof}
For any indices $a,p,q$, distinct or not, the sum $\sum_j(c_a)_j(c_p)_j(c_q)_j$ counts the $y_j$
that contain the set $\{a,p,q\}$, since $(c_a)_j^2=(c_a)_j$. The $y_j$ form a residue of $U$, so by
Fact~\ref{fact:moment} this count is $M_{\{a,p,q\}}$ modulo $2$. If $Yw=0$ then
$\sum_aw_ac_a=0$ in $\F_2^t$, and
\[
\sum_a w_a\,M_{\{a,p,q\}}
=\sum_j (c_p)_j(c_q)_j\Bigl(\sum_a w_ac_a\Bigr)_j=0 .
\]
This is the condition $w\in R^{\ast}$ of Definition~\ref{def:fulltensor}. Hence
$\ker Y\subseteq R^{\ast}$, and $\rank Y=n-\dim\ker Y\ge n-d^{\ast}$. On a pure-cubic gate the
moments of order one and two vanish, so the coincident-index entries of the full tensor are zero
and $R^{\ast}=R$.
\end{proof}

In the non-degenerate case a parity refinement adds one to the resulting bound, as it did for
$\CCZ$ in \S\ref{sec:warmup}.

\begin{lemma}[radical contraction of the all-ones vector]\label{lem:allones}
Let $\mathcal C=\mathrm{span}\{c_a\}\subseteq\F_2^t$ for a pure-cubic gate, and suppose the all-ones vector
$\mathbf 1\in\F_2^t$ lies in $\mathcal C$, say $\mathbf 1=\sum_{a\in S}c_a$ with $S\subseteq[n]$. Then
$\mathbf 1_S\in R$. That is, $\sum_{a\in S}T_{pqa}=0$ for all $p,q\in[n]$.
\end{lemma}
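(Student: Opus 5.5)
The plan is to substitute the assumed relation $\mathbf 1=\sum_{a\in S}c_a$ into the triple-product identity \eqref{eq:triple}, exactly as in the warm-up of \S\ref{sec:warmup}, and read off the slice $T(\mathbf 1_S)$.

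Fix $p,q\in[n]$. First I take $p\ne q$. For each $a\in S$, I will check that $\sum_j(c_a)_j(c_p)_j(c_q)_j$ equals $T_{apq}$ even when $a$ coincides with $p$ or $q$. By the counting argument in the proof of Lemma~\ref{lem:rankstep}, this triple sum is the moment $M_{\{a,p,q\}}$ for any indices. If $a\in\{p,q\}$ the set $\{a,p,q\}$ has size two, so the moment is $M_{\{p,q\}}=0$ on a pure-cubic gate. This agrees with $T_{apq}=0$ at coincident indices (Definition~\ref{def:formradical}). If $a,p,q$ are distinct, \eqref{eq:triple} gives $T_{apq}$ directly. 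Summing over $a\in S$ and exchanging sums then gives
\[
\sum_{a\in S}T_{apq}=\sum_j(c_p)_j(c_q)_j\Bigl(\sum_{a\in S}c_a\Bigr)_j=\sum_j(c_p)_j(c_q)_j=\langle c_p,c_q\rangle=M_{\{p,q\}}=0,
\]
using \eqref{eq:cols} and purity.

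For $p=q$ the conclusion is immediate, because $T_{app}=0$ by definition, so every entry of the slice on the diagonal vanishes. Hence $T(\mathbf 1_S)=0$, that is, $\mathbf 1_S\in R$.

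The only delicate point is the bookkeeping of coincident indices. The triple product $\sum_j(c_a)_j(c_p)_j(c_q)_j$ equals a lower moment when indices repeat, while $T$ is defined to be zero there. Purity is exactly what makes these two agree. Nothing else is needed: the statement claims only $\mathbf 1_S\in R$, and nonzeroness of $\mathbf 1_S$ (for $t>0$, $S\ne\varnothing$) is left to the caller.
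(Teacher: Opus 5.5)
Your proof is correct and follows the paper's argument essentially line for line. For $p\ne q$ you substitute $\mathbf 1=\sum_{a\in S}c_a$ into $\langle c_p,c_q\rangle=M_{\{p,q\}}=0$, and you use purity so that the terms with $a\in\{p,q\}$ vanish, matching $T_{pqa}=0$ at coincident indices. The case $p=q$ follows directly from the definition of $T$.
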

\begin{proof}
For $p=q$ every entry $T_{ppa}$ is $0$ by definition, so there is nothing to prove. Fix $p\ne q$.
Because $\mathbf 1$ is all ones, $(c_p)_j(c_q)_j=(c_p)_j(c_q)_j\mathbf 1_j$ for each coordinate
$j$. Substituting $\mathbf 1=\sum_{a\in S}c_a$ gives
\[
\sum_j (c_p)_j(c_q)_j
=\sum_j (c_p)_j(c_q)_j\Bigl(\sum_{a\in S}c_a\Bigr)_j
=\sum_{a\in S}\,\sum_j (c_p)_j(c_q)_j(c_a)_j .
\]
The left-hand side is $\langle c_p,c_q\rangle=M_{\{p,q\}}$, which is $0$ because the gate is
pure-cubic. On the right, an index $a\in S$ distinct from $p$ and $q$ contributes
$M_{\{p,q,a\}}=T_{pqa}$ by \eqref{eq:triple}. An index $a\in\{p,q\}$ contributes
$\sum_j(c_p)_j(c_q)_j=M_{\{p,q\}}=0$, which agrees with $T_{pqa}=0$ at coincident indices. Hence
$\sum_{a\in S}T_{pqa}=0$, so $\mathbf 1_S$ contracts $T$ to zero, which is the condition
$\mathbf 1_S\in R$.
\end{proof}

\begin{theorem}[cubic lower bound]\label{thm:cubicLB}
Let $U$ be a pure-cubic gate on $n$ qubits whose cubic tensor has radical dimension $d$. Then
$\delta(U)\ge 2(n-d)$, and $\delta(U)\ge 2n+1$ if $d=0$.
\end{theorem}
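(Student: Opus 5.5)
The plan is to fix an arbitrary representation of $U$ with $t$ odd parities, form $Y$ and its columns $c_a$, and chain the three lemmas of this subsection. Since $\delta(U)$ is the minimum of $t$ over representations (Fact~\ref{fact:moment}), any bound proved for every representation transfers to $\delta(U)$.

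For the first claim, Lemma~\ref{lem:isocol} gives $t\ge2\rank Y$. Lemma~\ref{lem:rankstep}, in its pure-cubic form $R^{\ast}=R$, gives $\rank Y\ge n-d$. Together these yield $t\ge2(n-d)$ for every representation, hence $\delta(U)\ge2(n-d)$.

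For the refinement when $d=0$, argue by contradiction and suppose some representation has $t\le 2n$. Lemma~\ref{lem:rankstep} with $d=0$ makes $Y$ injective, so $\dim\mathcal C=n$. Lemma~\ref{lem:isocol} then forces $t\ge 2n$, so $t=2n$ exactly. The inclusion $\mathcal C\subseteq\mathcal C^\perp$ is then an equality, by dimension count from $\dim\mathcal C^\perp=t-\dim\mathcal C=n$.

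Each column has even weight because $M_{\{a\}}=0$. So $\langle\mathbf 1,c_a\rangle=\wt(c_a)\bmod2=0$ for all $a$, which puts $\mathbf 1\in\mathcal C^\perp=\mathcal C$. Write $\mathbf 1=\sum_{a\in S}c_a$. Lemma~\ref{lem:allones} gives $\mathbf 1_S\in R=\{0\}$, so $S=\varnothing$ and $\mathbf 1=0$ in $\F_2^t$. This is impossible once $t\ge1$, and $t=2n\ge2$ whenever $n\ge1$.

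The edge case $n=0$ is vacuous. If $n=0$ one can simply note that the claim reads $\delta\ge1$, which fails for the empty gate. I would therefore state $n\ge1$ implicitly, since $d=0$ with $n\ge1$ is the intended non-Clifford setting. Hence every representation has $t\ge2n+1$.

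The only step needing care is the tightness step: deducing $\mathcal C=\mathcal C^\perp$ from equality in the dimension count, and checking that $\mathbf 1$ is genuinely nonzero so that the contradiction bites. The rest is direct composition of the earlier lemmas.
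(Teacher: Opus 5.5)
Your proof is correct and follows the paper's argument step for step: Lemmas~\ref{lem:isocol} and~\ref{lem:rankstep} give $t\ge2\rank Y\ge2(n-d)$. When $d=0$, a representation with $t=2n$ forces $\mathcal C=\mathcal C^\perp$ and hence $\mathbf 1\in\mathcal C$, and Lemma~\ref{lem:allones} then contradicts $R=\{0\}$.

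Your remark on $n=0$ matches the paper's implicit use of $t\ge1$. Calling that case ``vacuous'' is slightly off: it is a genuine exception to the statement as written. Nothing else is affected, because for $n\ge1$ the hypothesis $d=0$ already forces $n\ge3$.
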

\begin{proof}
Fix a representation and let $\rho=\rank Y$, the dimension of the row space
$\mathrm{span}\{y_j\}$. Lemma~\ref{lem:rankstep} gives $\rho\ge n-d$, and with
Lemma~\ref{lem:isocol}, $t\ge 2\rho\ge 2(n-d)$.

For the extra one, suppose $d=0$ and, for contradiction, that some representation has $t=2n$. Then
$\dim \mathcal C=\rho=n=t/2$, so by Lemma~\ref{lem:iso-dim} the inclusion
$\mathcal C\subseteq \mathcal C^\perp$ is an equality and $\mathcal C$ is maximal isotropic. The
all-ones vector satisfies $\langle\mathbf 1,c_a\rangle=M_{\{a\}}=0$ for every $a$, so
$\mathbf 1\in \mathcal C^\perp=\mathcal C$. Write $\mathbf 1=\sum_{a\in S}c_a$ with $S$ nonempty
(nonempty because $\mathbf 1\ne 0$ for $t\ge1$). By Lemma~\ref{lem:allones},
$\mathbf 1_S\in R=\{0\}$, forcing $S=\varnothing$, a contradiction. Hence no representation has
$t=2n$, and since $t\ge2n$ already, every representation has $t\ge2n+1$. Taking the minimum,
$\delta(U)\ge 2n+1$.
\end{proof}

The extra one survives a nonzero radical. The radical directions split off as idle qubits, and the
non-degenerate case applies to what remains.

\begin{corollary}[the extra one for every non-Clifford pure-cubic gate]\label{cor:cubicplusone}
Let $U$ be a pure-cubic gate on $n$ qubits whose cubic tensor is nonzero, so that $d<n$. Then
$\delta(U)\ge 2(n-d)+1$.
\end{corollary}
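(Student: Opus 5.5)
Choose a complement $W$ of the radical $R$ in $\F_2^n$ and a basis of $\F_2^n$ adapted to the splitting $\F_2^n=W\oplus R$. Then realize this change of basis by a \CNOT{} relabelling. The reduced gate should be a non-degenerate pure-cubic gate on $n-d$ qubits, so that Theorem~\ref{thm:cubicLB} gives $2(n-d)+1$. By Lemma~\ref{lem:biinv}, which covers affine permutations with diagonal Cliffords absorbed, and by Remark~\ref{rem:grading}, which says the pure-cubic locus is $GL(n,\F_2)$-closed, the relabelled gate $U'$ is still pure-cubic and has $\delta(U')=\delta(U)$.

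The first thing to check is how the cubic moment form transforms. The phase is $4f_3$ modulo Clifford terms, where $f_3$ is a cubic Boolean polynomial. For a pure-cubic gate the form $T_{abc}$ should be the polarization (trilinear form) of $f_3$: $T_{abc}$ equals the coefficient of $x_ax_bx_c$ in $f_3$. This follows from the inclusion--exclusion expansion in \S\ref{sec:momentform}. Once the quadratic and linear moments vanish, the coefficient of $4x_ax_bx_c$ is $N_{\{a,b,c\}}$, and the coefficient of $x_a^2x_b$-type collapses is absorbed into lower moments. Under $x\mapsto Lx$, polarization of the cubic part modulo lower-degree terms is covariant: the trilinear form is $T(L\cdot,L\cdot,L\cdot)$. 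The radical is therefore carried to $L^{-1}R$, and its dimension $d$ is invariant. I will make this precise through the moment characterisation: the slice $T(w)$ is the quadratic part of the derivative of $f_3$ in direction $w$, which is basis-free. This is the same computation as in the nullity example for \CCZ{}.

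Next I choose $L$ so that $R$ becomes $\mathrm{span}\{e_{n-d+1},\dots,e_n\}$. For $w$ in this span, every slice vanishes, so $T'_{abc}=0$ whenever one index exceeds $n-d$. Thus $f_3'$ involves only the first $n-d$ variables. Up to Clifford phases, $U'$ is therefore a pure-cubic gate $V$ on the first $n-d$ qubits tensored with the identity on the last $d$.

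Now apply Lemma~\ref{lem:restrict} with $S=[n-d]$ to get $\delta(U')\ge\delta(U'|_S)=\delta(V)$. The restricted cubic form on $S$ is non-degenerate. A $w$ supported on $S$ in its radical would have $T'(w)$ vanishing on $S\times S$, and it also vanishes on the remaining entries, since those involve an index beyond $n-d$. Such a $w$ would then lie in $R'\cap\mathrm{span}(e_S)=0$. Since $T\ne0$, the set $S$ is nonempty. Theorem~\ref{thm:cubicLB} then gives $\delta(V)\ge 2(n-d)+1$.

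I expect the main obstacle to be the covariance step, together with confirming that the restriction is pure-cubic with form equal to $T'$ restricted to $S$. The moments are defined combinatorially from residues, not from $f_3$ directly, so I would first prove that for a pure-cubic gate the phase is $4f_3$ plus a diagonal Clifford, with $T_{abc}$ equal to the coefficients of $f_3$. After that, both transformation under $L$ and restriction are immediate statements about cubic polynomials.
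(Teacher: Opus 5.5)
Your proposal is correct and follows the paper's proof essentially step for step: a \CNOT{} relabelling moves $R$ onto the last $d$ coordinates, the cubic form then avoids those coordinates so that $U$ is, up to a diagonal Clifford, $U'\otimes I_d$ with $U'$ non-degenerate pure-cubic, and restriction (Lemma~\ref{lem:restrict}) together with the non-degenerate case of Theorem~\ref{thm:cubicLB} gives $2(n-d)+1$. The paper only asserts the covariance of the moments and the non-degeneracy of $U'$, which you verify through the basis-free derivative description of the slices, and it also records the reverse inequality $\delta(U)\le\delta(U')$, which the bound does not need.
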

\begin{proof}
Align $R$ to the last $d$ coordinates by a CNOT relabelling (an invertible $\F_2$-linear change of
the $x$-variables, which preserves $\delta$ by Lemma~\ref{lem:biinv}, preserves $d$ and transforms
the moments covariantly). Let $f$ be the cubic form with tensor $T$, so that $U$ has phase $4f$ up
to a diagonal Clifford factor. Since $T_{apq}=0$ whenever any index lies in $R$, the form $f$ has
no monomial meeting a radical coordinate. Up to a diagonal Clifford factor, which changes neither
$\delta$ nor $d$, it follows that $U=U'\otimes I_d$ with $U'$ a non-degenerate pure-cubic gate on
the first $n-d$ qubits. Then $\delta(U)=\delta(U')$. A representation of $U'$ is one of $U$, which
gives $\le$. Restricting any representation of $U$ by setting the $d$ radical inputs to $0$
(Lemma~\ref{lem:restrict}) turns each parity into a parity of the first $n-d$ variables or a
constant, without increasing the odd count, which gives $\ge$. Hence
$\delta(U)=\delta(U')\ge 2(n-d)+1$ by the non-degenerate case of Theorem~\ref{thm:cubicLB}.
\end{proof}

\begin{remark}[prior work on the pure-cubic floor]\label{rem:cubicprior}
That a pure-cubic synthesis matrix has columns of even weight and even pairwise overlap is
Campbell and Howard's Lemma~3 with vanishing quadratic matrix~\cite{CampbellHowardPRA}, and it is
also implicit in the triorthogonal codes of Bravyi and Haah~\cite{BravyiHaah}. The rank step,
Lemma~\ref{lem:rankstep}, is the standard flattening bound on the symmetric rank of Heyfron and
Campbell's signature tensor~\cite{HeyfronCampbell}, and combining it with isotropy doubles it for every
pure-cubic gate. On controlled unitaries this recovers part of~\cite[Thm.~5]{CampbellHowardPRA}. The
extra one comes from the all-ones vector (Lemma~\ref{lem:allones}), in the spirit of the parity step
of their converse~\cite[App.~C]{CampbellHowardPRA}.
\end{remark}

\subsection{The general case}\label{sec:general}

For an arbitrary gate the column Gram matrix is the quadratic moment matrix $Q$, of rank $r$, and
the columns no longer span an isotropic code. Restricting to the radical of $Q$ would lose $2r$. The
proof below loses only $r$, because the radical of the form on the column code is isotropic and has
codimension $r$ there.

\begin{theorem}[general lower bound]\label{thm:genLB}
Let $U$ be a diagonal level-three gate on $n$ qubits, with $r$ the rank of its quadratic moment
matrix and $d^{\ast}$ the dimension of the radical $R^{\ast}$ of its full moment tensor
(\S\ref{sec:momentform}). Then
\[
\delta(U)\ \ge\ 2(n-d^{\ast})-r.
\]
\end{theorem}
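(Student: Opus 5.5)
Fix any representation of $U$ with odd parities $y_1,\dots,y_t$, matrix $Y$ and columns $c_a$, and let $\mathcal C=\mathrm{span}\{c_a\}\subseteq\F_2^t$, so $\dim\mathcal C=\rank Y=:\rho$. The plan is to bound $\rho$ from below and to bound $t$ from below in terms of $\rho$ and $r$; minimizing over representations then gives the theorem. The lower bound on $\rho$ is already available: Lemma~\ref{lem:rankstep} gives $\ker Y\subseteq R^{\ast}$, hence $\rho\ge n-d^{\ast}$.

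It remains to show $t\ge 2\rho-r$, replacing the isotropy step of Lemma~\ref{lem:isocol}. By \eqref{eq:cols} the Gram matrix of the spanning set $\{c_a\}$ is exactly $Q$. By Lemma~\ref{lem:gram}, the restriction of the dot product to $\mathcal C$ has rank $r$, and its radical $\mathcal R=\mathcal C\cap\mathcal C^\perp$ has dimension $\rho-r$. This radical is a totally isotropic subspace of $\F_2^t$, so Lemma~\ref{lem:iso-dim} alone gives only $t\ge 2(\rho-r)$, which loses $2r$. To lose only $r$, I will count dimensions via $\mathcal R^\perp\supseteq\mathcal C+\mathcal C^\perp$. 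Since $\dim(\mathcal C+\mathcal C^\perp)=\rho+(t-\rho)-(\rho-r)=t-\rho+r$ and $\dim\mathcal R^\perp=t-(\rho-r)$, this containment is consistent but gives nothing new. The useful containment is $\mathcal C\subseteq\mathcal R^\perp$ together with $\mathcal R\subseteq\mathcal C$. Choose a complement $\mathcal D$ of $\mathcal R$ in $\mathcal C$, of dimension $r$, on which the form is nondegenerate. Then $\mathcal D^\perp$ is a complement of $\mathcal D$ in $\F_2^t$, of dimension $t-r$, and it contains $\mathcal R$. Moreover $\mathcal R$ is totally isotropic inside the space $\mathcal D^\perp$, on which the dot product is nondegenerate because $\mathcal D$ is nondegenerate. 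Applying the argument of Lemma~\ref{lem:iso-dim} within $\mathcal D^\perp$ gives $\rho-r\le (t-r)/2$, that is, $t\ge 2\rho-r$.

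Combining the two bounds, $t\ge 2\rho-r\ge 2(n-d^{\ast})-r$ for every representation, and taking the minimum over representations gives the claim. The main obstacle is the middle step. One must check that $\mathcal D$ is nondegenerate, which holds because $\mathcal D\cap\mathcal D^\perp\subseteq\mathcal C\cap\mathcal C^\perp\cap\mathcal D=\mathcal R\cap\mathcal D=0$. From this, $\F_2^t=\mathcal D\oplus\mathcal D^\perp$ with the form nondegenerate on each summand, so the dimension count of Lemma~\ref{lem:iso-dim} applies inside $\mathcal D^\perp$. This is the step that saves $r$ relative to the naive restriction to the radical of $Q$. The remaining steps are direct citations of Lemmas~\ref{lem:gram} and~\ref{lem:rankstep}.
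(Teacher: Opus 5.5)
Your proof is correct and has the same structure as the paper's: the Gram matrix of the columns is $Q$, so Lemma~\ref{lem:gram} gives $\dim(\mathcal C\cap\mathcal C^\perp)=\rho-r$, and Lemma~\ref{lem:rankstep} gives $\rho\ge n-d^{\ast}$. The only difference is the middle step, where your orthogonal splitting $\F_2^t=\mathcal D\oplus\mathcal D^\perp$ is valid but not needed: the paper observes that $\mathcal C\cap\mathcal C^\perp\subseteq\mathcal C^\perp$, which has dimension $t-\rho$, so $\rho-r\le t-\rho$ and hence $t\ge2\rho-r$ directly.
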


\begin{proof}
Fix a representation, put $\mathcal C=\mathrm{span}\{c_a\}$ and $\rho=\dim\mathcal C=\rank Y$. By
\eqref{eq:cols} the Gram matrix of the columns is $Q$, so by Lemma~\ref{lem:gram} the radical
$\mathcal C\cap\mathcal C^{\perp}$ of the dot product restricted to $\mathcal C$ has dimension
$\rho-r$. It lies in $\mathcal C^{\perp}$, which has dimension $t-\rho$ because the dot product on
$\F_2^t$ is non-degenerate. Hence $\rho-r\le t-\rho$, that is $t\ge2\rho-r$. Lemma~\ref{lem:rankstep}
gives $\rho\ge n-d^{\ast}$, and so $t\ge2(n-d^{\ast})-r$.
\end{proof}

For $\CS$ the residue $\{1\},\{2\},\{1,2\}$ has columns $c_1=(1,0,1)$ and $c_2=(0,1,1)$, so
$\rho=r=2$ and $d^{\ast}=0$. The theorem gives $t\ge2$, one below $\delta(\CS)=3$.

For a pure-cubic gate the order-$\le2$ moments vanish, so $r=0$ and $d^{\ast}=d$, and
Theorem~\ref{thm:genLB} reduces to the first part of Theorem~\ref{thm:cubicLB}. Off the pure-cubic
locus the full-tensor radical $R^{\ast}$ is needed, as the next example shows.

\begin{example}[the cubic radical alone is not enough]\label{ex:dstar}
Take the three-qubit gate with $M_{\{0,2\}}=M_{\{1,2\}}=M_{\{0,1,2\}}=1$ and all other moments $0$.
Its cubic tensor has the single entry $T_{012}=1$, so $d=0$. The quadratic matrix has rank $r=2$
and kernel $\langle e_0\oplus e_1\rangle$, and this vector also annihilates the cubic part of the
full tensor, so $R^{\ast}=\langle e_0\oplus e_1\rangle$ and $d^{\ast}=1$. At $n=3$ the code
$\RM(-1,3)^*$ is zero, so the residue is unique. Solving the moments as in
Proposition~\ref{prop:ceiling} gives the residue $\{0,1,2\},\{0,1\},\{2\}$, and $\delta=3$.
Theorem~\ref{thm:genLB} gives $2(3-1)-2=2\le3$, while substituting $d$ for $d^{\ast}$ would give
$2(3-0)-2=4>\delta$.
\end{example}

\subsection{Stabilizer nullity and the radical}\label{sec:nullityradical}

Stabilizer nullity $\nu=n-\log_2|\mathcal A|$ (\S\ref{sec:nullitydef}) is the standard efficiently
computable lower bound on $T$-count. This subsection shows that the shift group $\mathcal A$ is the
full-tensor radical $R^{\ast}$, so that the isotropy floor and $\nu$ can be compared directly.

We first compare $R^{\ast}$ with the kernel of $Q$.

\begin{lemma}[the quadratic rank is bounded by the tensor corank]\label{lem:rledn}
$R^{\ast}\subseteq\ker Q$, and hence $r\le n-d^{\ast}$.
\end{lemma}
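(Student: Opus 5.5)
The plan is to read the inclusion $R^{\ast}\subseteq\ker Q$ directly off the coincident-index entries of the full moment tensor. The rank inequality then follows by rank--nullity. No representation or circuit is needed; only Definition~\ref{def:fulltensor} is used.

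Take $w\in R^{\ast}$. By Definition~\ref{def:fulltensor}, $\sum_a w_a M_{\{a,p,q\}}=0$ for every pair $p,q\in[n]$, including $p=q$. Specialize to $p=q$ and examine the entry $M_{\{a,p,p\}}$, which as a set equals $M_{\{a,p\}}$.
\begin{itemize}
\item For $a\ne p$, this is the order-two moment $M_{\{a,p\}}=Q_{ap}$.
\item For $a=p$, the set collapses to $\{p\}$, giving the order-one moment $M_{\{p\}}=Q_{pp}$.
\end{itemize}
In both cases $M_{\{a,p,p\}}=Q_{ap}$. The diagonal conditions therefore read $\sum_a Q_{pa}w_a=0$ for every $p$, using the symmetry of $Q$. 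That is, $Qw=0$, so $w\in\ker Q$.

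Since this holds for every $w\in R^{\ast}$, we get $R^{\ast}\subseteq\ker Q$. Hence $d^{\ast}\le\dim\ker Q=n-r$, which rearranges to $r\le n-d^{\ast}$.

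The only point requiring care is the bookkeeping at coincident indices. The definition places $M_{\{a\}}$ on the diagonal of $Q$, and the full tensor entry with $a=p=q$ must be read as the same singleton moment. Once that identification is checked, each diagonal slice condition is exactly one row of $Qw=0$, and the argument is complete.
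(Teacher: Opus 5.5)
Your proof is correct and is the paper's argument: you set $p=q$ in the defining condition of $R^{\ast}$, identify the coincident-index entries with the rows of $Q$ (including $M_{\{p\}}=Q_{pp}$), obtain $Qw=0$, and conclude $r\le n-d^{\ast}$ by rank--nullity.
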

\begin{proof}
Taking $p=q=a$ in the defining condition of $R^{\ast}$ collapses the coincident indices to
order-$\le2$ moments and gives $\sum_b w_b\,M_{\{a,b\}}=0$ for every $a$, with
$M_{\{a,a\}}=M_{\{a\}}$. That is $(Qw)_a=0$. Hence $R^{\ast}\subseteq\ker Q$ and
$d^{\ast}\le n-r$.
\end{proof}

\begin{proposition}[the shift group is the full-tensor radical]\label{prop:nullitygen}
For every diagonal level-three gate, $\mathcal A=R^{\ast}$. Hence $\nu=n-d^{\ast}$.
\end{proposition}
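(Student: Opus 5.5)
We must show $\mathcal A=R^{\ast}$ for every diagonal level-three gate: a shift $a$ changes the phase only by a constant plus $4$ times an affine function exactly when $a$ annihilates the full moment tensor. The plan is to compute the derivative $\phi(x\oplus a)-\phi(x)$ directly from a representation and read off its cubic and quadratic obstructions in terms of moments. Both sides are additive subgroups of $\F_2^n$ ($\mathcal A$ by \S\ref{sec:nullitydef}, $R^{\ast}$ as a kernel), so it suffices to prove equality of the membership conditions for a single vector $a$.

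Fix a representation $\phi(x)=\sum_y\alpha_y(y\cdot x)$. For a single parity, $y\cdot(x\oplus a)=(y\cdot x)\oplus(y\cdot a)$. As an integer this equals $y\cdot x$ when $y\cdot a=0$, and $1-(y\cdot x)$ when $y\cdot a=1$. Hence
\[
\phi(x\oplus a)-\phi(x)\equiv\sum_{y:\,y\cdot a=1}\alpha_y\bigl(1-2(y\cdot x)\bigr)\pmod 8 .
\]
The condition $a\in\mathcal A$ therefore says that $\sum_{y\cdot a=1}2\alpha_y(y\cdot x)\equiv c'+4\ell(x)\pmod 8$. Equivalently, the phase polynomial $\psi_a=\sum_{y\cdot a=1}2\alpha_y(y\cdot x)$ is a constant plus a $Z$-type Pauli phase.

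The next step is to expand $\psi_a$ with the inclusion--exclusion formula of \S\ref{sec:momentform}. Modulo $8$, the factor $2$ kills every monomial of degree at least $3$. What remains is
\[
\psi_a\equiv\sum_i 2N_i\,x_i-4\sum_{i<j}N_{ij}\,x_ix_j ,
\]
where $N_A=\sum_{y\cdot a=1,\,A\subseteq y}\alpha_y$. Being of the form $c'+4\ell$ means two things. First, the quadratic coefficients must vanish, so $N_{ij}$ is even for all $i<j$. Second, every linear coefficient $2N_i$ must lie in $\{0,4\}$, so $N_i$ is even. Only the odd $\alpha_y$ affect these parities, so both conditions concern the residue. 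The key identity is then
\[
N_A\bmod 2=\sum_{j:\,A\subseteq y_j}(y_j\cdot a)=\sum_b a_b\,\bigl|\{j:A\cup\{b\}\subseteq y_j\}\bigr|\equiv\sum_b a_b\,M_{A\cup\{b\}} ,
\]
using $(c_b)_j^2=(c_b)_j$ as in Lemma~\ref{lem:rankstep}. With $|A|\le2$, the sets $A\cup\{b\}$ have size at most $3$, so Fact~\ref{fact:moment} makes these moments gate invariants. The case $A=\{p,q\}$ gives exactly the conditions $\sum_b a_bM_{\{b,p,q\}}=0$ for $p\ne q$. The case $A=\{p\}$ gives the coincident-index conditions $p=q$. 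Together they are precisely the definition of $R^{\ast}$, so $\nu=n-d^{\ast}$ follows.

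The main obstacle is the careful bookkeeping modulo $8$. One must check that the constant term and the $x_i$ coefficients impose nothing beyond evenness of $N_i$: an arbitrary constant $c'$ is allowed, and $4x_i$ is absorbed by $\ell$. One must also confirm that no cross-contamination arises, for instance that degree-$\ge3$ terms really vanish because $2\cdot4\equiv0$.
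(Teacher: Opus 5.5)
Your proof is correct, and it follows the same skeleton as the paper. Both write $\phi(x\oplus a)-\phi(x)=C_a-2L_a(x)$ with $L_a=\sum_{y\cdot a=1}\alpha_y(y\cdot x)$, and both show that the obstruction to $2L_a$ being a constant plus $4(\text{affine})$ is the contraction of $a$ with the full moment tensor.

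Where you differ is in how the obstruction is extracted. The paper reduces to $L_w$ modulo $4$ and splits off the even part of $\alpha$. It then applies the carry identity $\sum_iu_i\equiv\bigoplus_iu_i+2\sum_{i<j}u_iu_j\pmod 4$ to the odd part and reads the condition one binary digit at a time. First $\sigma_w=Qw$ must vanish. Only then, after using $\sigma_w=0$ to simplify the cross-term coefficient $\sigma_a\sigma_b-\sum_{y\in\mathcal T_w}y_ay_b$, does it require the quadratic form $q_w$ to be affine.

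You instead expand every parity into monomials by inclusion--exclusion over $\Z$ and compare multilinear coefficients modulo $8$. This gives the two conditions ($N_i$ even, $N_{ij}$ even) simultaneously and independently. Your single identity $N_A\equiv\sum_b a_bM_{A\cup\{b\}}$ then covers both the coincident-index rows ($|A|=1$) and the cubic rows ($|A|=2$) of $R^{\ast}$.

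What your route buys is the removal of the conditional second step and of the even/odd split. Its one implicit ingredient should be stated explicitly: a function $\F_2^n\to\Z_8$ has a unique multilinear expansion (by M\"obius inversion). Together with $4\ell(x)\equiv4\ell_0+\sum_i4\ell_ix_i\pmod 8$ for $\F_2$-affine $\ell$, this is what makes ``constant plus $4(\text{affine})$'' mean exactly ``linear coefficients in $\{0,4\}$ and no higher terms''.

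The paper's version, in exchange, displays the two obstructions as the two binary digits of $L_w$ modulo $4$: the $Q$-condition in the low digit and the cubic contractions in the high digit.
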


The proof reduces membership $w\in\mathcal A$ to a condition on a function $L_w$ modulo $4$ and
reads that condition one binary digit at a time. Modulo $2$ it says $Qw=0$, which is the
coincident-index part of $w\in R^{\ast}$. Modulo $4$, once $Qw=0$, it says that the contractions of
$w$ with the cubic moments vanish, which is the rest. The gate $\CCZ$ with $w=e_1$ is followed
through both levels.

\begin{proof}
Write $\phi(x)=\sum_y \alpha_y\,(y\cdot x)$ with $\alpha_y\in\Z_8$, reading each parity
$y\cdot x\in\{0,1\}\subset\Z$ as an integer. Since $y\cdot(x\oplus w)=y\cdot x$ when $y\cdot w=0$ and
$y\cdot(x\oplus w)=1-y\cdot x$ otherwise, we have the identity over $\Z$
\[
\phi(x\oplus w)-\phi(x)\;=\;C_w-2L_w(x),\qquad
C_w:=\!\!\sum_{y\cdot w=1}\!\!\alpha_y,\qquad
L_w(x):=\!\!\sum_{y\cdot w=1}\!\!\alpha_y\,(y\cdot x) .
\]
By \S\ref{sec:nullitydef}, $w\in \mathcal A$ iff
$\phi(x\oplus w)-\phi(x)\equiv\text{const}+4(\text{affine})\pmod 8$, which holds iff
\begin{equation}\label{eq:mod4}
L_w\;\equiv\;\text{const}+2\,(\text{affine})\pmod 4 .
\end{equation}
Split $\alpha_y=\varepsilon_y+2h_y$ with $\varepsilon_y=\alpha_y\bmod2$ the residue bit, and put
$\mathcal T_w:=\{y:y\cdot w=1,\ \varepsilon_y=1\}$. Then $L_w=S_w+2D_w$ with
$S_w(x)=\sum_{y\in \mathcal T_w}(y\cdot x)$ and $D_w(x)=\sum_{y\cdot w=1}h_y\,(y\cdot x)$. Modulo $4$ the
second term is $2(D_w\bmod2)$, and $D_w\bmod 2$ is an $\F_2$-linear function of $x$ because each
parity $y\cdot x$ is, so the even part of $\alpha$ contributes only an allowed $2(\text{affine})$ and never
obstructs \eqref{eq:mod4}. For the odd part use the carry identity for $\{0,1\}$-valued $u_i$,
\[
\sum_i u_i\;\equiv\;\Bigl(\bigoplus_i u_i\Bigr)+2\!\!\sum_{i<j}\!u_iu_j \pmod 4,
\]
which gives $S_w\equiv(\sigma_w\cdot x)+2q_w\pmod 4$ with $\sigma_w:=\bigoplus_{y\in \mathcal T_w}y$ and
$q_w(x):=\sum_{y<y'\in \mathcal T_w}(y\cdot x)(y'\cdot x)\bmod 2$.

The mod-$2$ level is $Q$. The $a$-th bit $\sigma_a$ of $\sigma_w$ is
\[
\sum_{y\in \mathcal T_w}y_a=\sum_{y:\,y_a=1}\varepsilon_y\,(y\cdot w)
=\sum_b w_b\sum_{y\supseteq\{a,b\}}\varepsilon_y=\sum_b w_b\,M_{\{a,b\}}=(Qw)_a .
\]
If $\sigma_w\ne0$ then $L_w\bmod 2=\sigma_w\cdot x$ is a nonconstant function of $x$, so
\eqref{eq:mod4} fails. Hence $Qw=0$ is necessary. For $\CCZ$ and $w=e_1$ the set $\mathcal T_w$
consists of $\{1\}$, $\{1,2\}$, $\{1,3\}$ and $\{1,2,3\}$, whose sum $\sigma_w$ is $0$, as it must
be since $Q=0$.

The mod-$4$ level is the tensor. Assume $\sigma_w=0$, so $L_w\equiv2(q_w+\text{linear})$ and
\eqref{eq:mod4} holds iff $q_w$ is affine. Expanding
$(y\cdot x)(y'\cdot x)=\sum_{a,b}y_ay'_bx_ax_b$ over $\F_2$ and using $x_a^2=x_a$, the coefficient of $x_ax_b$ with $a\ne b$ in $q_w$ is
\[
\sum_{\{y,y'\}\subseteq \mathcal T_w}(y_ay'_b+y_by'_a)
=\sum_{\substack{y,y'\in \mathcal T_w\ y\ne y'}}y_ay'_b
=\sigma_a\sigma_b-\!\!\sum_{y\in \mathcal T_w}\!y_ay_b
=\sum_{y\in \mathcal T_w}y_ay_b ,
\]
using $\sigma_w=0$ in the last step. As before
$\sum_{y\in \mathcal T_w}y_ay_b=\sum_e w_e\sum_{y\supseteq\{a,b,e\}}\varepsilon_y=\sum_e w_e\,M_{\{a,b,e\}}$.
Thus $q_w$ is affine iff $\sum_e w_e\,M_{\{a,b,e\}}=0$ for all $a\ne b$. For $\CCZ$ and $w=e_1$
the coefficient of $x_2x_3$ counts the one element $\{1,2,3\}$ of $\mathcal T_w$ containing
$\{2,3\}$, so it equals $M_{\{1,2,3\}}=1$. Hence $e_1\notin\mathcal A$, in agreement with
$\nu(\CCZ)=3$.

Together with the $a=b$ conditions from the mod-$2$ level, $w\in \mathcal A$ iff
$\sum_e w_e\,M_{\{a,b,e\}}=0$ for all $a,b$, which is the condition $w\in R^{\ast}$. Hence
$\mathcal A=R^{\ast}$ and $\nu=n-\log_2|\mathcal A|=n-d^{\ast}$.
\end{proof}

In particular $\nu$ is computable from the moments $M_A$ in $O(n^3)$ time by Gaussian elimination.
On the pure-cubic locus the moments of order at most two vanish, so $R^{\ast}=R$ and $\nu=n-d$.
Corollary~\ref{cor:cubicplusone} therefore reads $\delta\ge2\nu+1$ on every non-Clifford pure-cubic
gate.

\begin{corollary}[the floor dominates stabilizer nullity]\label{cor:floordominates}
For every diagonal level-three gate, $2(n-d^{\ast})-r\ \ge\ \nu$, with equality iff $r=n-d^{\ast}$.
\end{corollary}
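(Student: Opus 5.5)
This is pure arithmetic once the nullity has been identified with the radical. By Proposition~\ref{prop:nullitygen}, $\nu=n-d^{\ast}$. Substituting,
\[
2(n-d^{\ast})-r-\nu \;=\; (n-d^{\ast})-r .
\]
The claimed inequality is therefore equivalent to $r\le n-d^{\ast}$, and equality holds exactly when $r=n-d^{\ast}$.

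The inequality $r\le n-d^{\ast}$ is Lemma~\ref{lem:rledn}. That lemma shows $R^{\ast}\subseteq\ker Q$: taking $p=q$ in the defining condition of $R^{\ast}$ gives $Qw=0$. Comparing dimensions, $d^{\ast}\le\dim\ker Q=n-r$. Combining this with the display above gives $2(n-d^{\ast})-r\ge\nu$.

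For the equality clause, the difference is exactly $(n-d^{\ast})-r$, so it vanishes if and only if $r=n-d^{\ast}$. Equivalently, $R^{\ast}=\ker Q$, since one subspace contains the other and they then have the same dimension.

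Nothing here is a real obstacle. The only substantive input is Proposition~\ref{prop:nullitygen}, which the paper has already proved. The one point to check is that both $d^{\ast}$ and $r$ are computed from the same moment data $M_A$. They are, by Definition~\ref{def:fulltensor}, so the comparison is legitimate for every diagonal level-three gate, pure-cubic or not.
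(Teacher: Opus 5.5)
Your proof is correct and is essentially the paper's. Like the paper, you substitute $\nu=n-d^{\ast}$ from Proposition~\ref{prop:nullitygen}, reduce the difference to $(n-d^{\ast})-r$, and conclude with Lemma~\ref{lem:rledn}; your extra remark that equality is equivalent to $R^{\ast}=\ker Q$ is a correct restatement that the paper leaves implicit.
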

\begin{proof}
By Proposition~\ref{prop:nullitygen}, $\nu=n-d^{\ast}$, so
$2(n-d^{\ast})-r-\nu=(n-d^{\ast})-r$, which is $\ge0$ by Lemma~\ref{lem:rledn}.
\end{proof}

Table~\ref{tab:smallgates} collects the invariants of the running examples. The rows for $\CS$ and
Example~\ref{ex:dstar} show that the
$+1$ does not extend off the pure-cubic locus, and that there even $\delta\ge2\nu$ fails, since both
have $\nu=2$ and $\delta=3$. The general floor is $2\nu-r$ (Corollary~\ref{cor:floordominates}).

\begin{table}[ht]
\centering\small
\begin{tabular}{@{}lcccccc@{}}
\toprule
gate & $n$ & $r$ & $d^{\ast}$ & $\nu$ & floor & $\delta$\\
\midrule
$T$ & $1$ & $1$ & $0$ & $1$ & $1$ & $1$\\
$\CS$ & $2$ & $2$ & $0$ & $2$ & $2$ & $3$\\
$\CCZ$ & $3$ & $0$ & $0$ & $3$ & $7$ & $7$\\
$\CCZ\otimes T$ & $4$ & $1$ & $0$ & $4$ & $7$ & $7$\\
$\CCZ\otimes\CCZ$ & $6$ & $0$ & $0$ & $6$ & $13$ & $13$\\
Example~\ref{ex:dstar} & $3$ & $2$ & $1$ & $2$ & $2$ & $3$\\
\bottomrule
\end{tabular}
\caption{Invariants of the running examples. The floor is $2\nu+1$ on the pure-cubic rows $\CCZ$
and $\CCZ\otimes\CCZ$ (Corollary~\ref{cor:cubicplusone}) and $2(n-d^{\ast})-r$ on the others
(Theorem~\ref{thm:genLB}). The values of $\delta$ come from Lemma~\ref{lem:rigidradius} for the
first three rows, from Theorem~\ref{thm:parccz} for $\CCZ\otimes\CCZ$ and from
Example~\ref{ex:dstar} for the last row. The value for $\CCZ\otimes T$, and every other entry, was
checked by exact decoding.}
\label{tab:smallgates}
\end{table}

\begin{remark}[comparison with stabilizer nullity]\label{rem:nomonotone}
For the non-degenerate pure-cubic gates stabilizer nullity gives $\nu=n$ (for example
$\nu(\CCZ)=3$), so Theorem~\ref{thm:cubicLB} is a factor-two improvement at the same $O(n^3)$ cost.
No magic monotone under stabilizer operations with feedforward can match it, since such a monotone
also bounds adaptive circuits, in which $\CCZ$ costs four $T$ gates~\cite{Jones}. Appendix~\ref{app:exact}
gives families on which the floor is attained.
\end{remark}

\subsection{Rigidity}\label{sec:rigid}

Several exact values in this paper need only that a light residue cannot be lightened further. The
next statement is immediate from the Amy--Mosca correspondence~\cite{AmyMosca} and the minimum
distance $15$ of Lemma~\ref{lem:dmin}.

\begin{lemma}[rigidity radius]\label{lem:rigidradius}
If a diagonal level-three gate has a residue of Hamming weight $\ell\le 7$, then
$\delta=\ell$.
\end{lemma}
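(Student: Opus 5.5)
Fix a representation of $U$ whose residue $s$ has weight $\ell\le7$. By the Amy--Mosca correspondence \eqref{eq:am}, $\delta(U)$ is the distance from $s$ to $\RM(n-4,n)^*$, that is, the least weight of a vector $s\oplus c$ with $c$ a codeword. The plan is to show that every nonzero codeword moves $s$ to a vector of weight at least eight. The direction $\delta\le\ell$ is immediate, since $c=0$ gives weight $\ell$.

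For the other direction, take any nonzero $c\in\RM(n-4,n)^*$. By Lemma~\ref{lem:dmin}, $\wt(c)\ge15$ when $n\ge4$. The triangle inequality for Hamming weight then gives
\[
\wt(s\oplus c)\ \ge\ \wt(c)-\wt(s)\ \ge\ 15-7\ =\ 8\ >\ \ell .
\]
Hence the minimum over the coset is attained only at $c=0$, and $\delta(U)=\ell$. When $n\le3$ the code is zero, as noted before Lemma~\ref{lem:dmin}, so the residue is unique and $\delta=\ell$ trivially. This case is also where we need the convention that $\RM(r,m)=\{0\}$ for $r<0$.

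The only point that needs care is that the bound $\ell\le7$ is exactly what the distance argument allows: $\ell\le\lfloor(15-1)/2\rfloor=7$, so $15-\ell>\ell$. There is no real obstacle beyond checking that the punctured minimum distance is $15$ for all $n\ge4$, and Lemma~\ref{lem:dmin} already supplies this.
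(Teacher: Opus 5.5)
Your proposal is correct and follows the paper's proof essentially step for step. Like the paper, you treat $n\le3$ through the zero code and handle $n\ge4$ with the triangle inequality against the minimum distance $15$ of Lemma~\ref{lem:dmin}, which gives $\wt(s\oplus c)\ge 15-\ell\ge 8>\ell$. Both you and the paper then note that $7=\lfloor(15-1)/2\rfloor$ is the unique-decoding radius.
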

\begin{proof}
For $n\le3$ the code $\RM(n-4,n)^*$ is zero, every gate has a single residue, and $\delta=\ell$ for
every $\ell$. For $n\ge4$ and any nonzero codeword $c\in\RM(n-4,n)^*$, the triangle inequality and
Lemma~\ref{lem:dmin} give
\[
\wt(\Res\oplus c)\ge\wt(c)-\ell\ge 15-\ell\ge 8>\ell.
\]
The residue itself is thus the strict minimum-weight coset leader, and by \eqref{eq:am},
$\delta=\ell$. The threshold $\ell\le 7=\lfloor(15-1)/2\rfloor$ is the unique-decoding radius.
\end{proof}

For example $\delta(T)=1$, $\delta(\CS)=3$ and $\delta(\CCZ)=7$ for all $n$. Above weight seven a
weight-$15$ codeword can lower the residue weight.

\subsection{An upper bound, also in \texorpdfstring{$O(n^3)$}{O(n\textasciicircum 3)}}\label{sec:ceiling}

A lower bound alone does not say how far a given circuit is from optimal. Solving the moment
conditions from the top down gives an upper bound that is also computable in $O(n^3)$. Write
$M(U)=(M_A)_{1\le|A|\le3}$ for the moment vector of $U$. By Fact~\ref{fact:moment} it is a complete
invariant of $U$ modulo diagonal Cliffords and determines the coset of every residue $s=\Res(U)$,
so the ceiling $W(s)$ below is computed from $M(U)$ and is the same for every residue of $U$.

The moment map $\F_2^{2^n-1}\to\F_2^{\,n+\binom n2+\binom n3}$ has kernel $\RM(n-4,n)^{\ast}$.
Restricted to the parities of weight at most three it is triangular in the subset order, hence
already surjective, and Proposition~\ref{prop:ceiling} solves it explicitly. In particular some
residue, the canonical one, uses no parity of weight four or more. A residue of minimum weight may
need such parities, as for two disjoint $\CCZ$ gates, whose canonical residue has weight $14$ while
$\delta=13$ (Theorem~\ref{thm:parccz}).

\begin{proposition}[canonical residue ceiling]\label{prop:ceiling}
Let $U$ be any diagonal level-three gate with moments $M_A$, $1\le|A|\le3$. Define $z$ on the
parities of weight at most three by solving downwards,
\[
z_A=M_A\ \ (|A|=3),\qquad
z_A=M_A\oplus\!\!\bigoplus_{y\supsetneq A,\,|y|=3}\!\! z_y\ \ (|A|=2),\qquad
z_A=M_A\oplus\!\!\bigoplus_{y\supsetneq A,\,|y|\le3}\!\! z_y\ \ (|A|=1),
\]
and $z_y=0$ for $|y|\ge4$. Then $z$ is a residue for $U$, so
\[
\delta(U)\ \le\ W(s):=\wt(z),
\]
and $\delta(U)=W(s)$ whenever $W(s)\le7$.
\end{proposition}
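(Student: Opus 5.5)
The plan is to verify directly that the vector $z$ reproduces every prescribed moment. Then Fact~\ref{fact:moment} makes $z$ a residue of $U$, and the remaining claims follow from results already stated. Since $z$ is supported on parities of weight at most three, for each $A$ with $1\le|A|\le3$ the moment of $z$ is
\[
M_A(z)=\bigoplus_{y\supseteq A,\ |y|\le3} z_y .
\]
I will check $M_A(z)=M_A$ one level at a time, in the same order as the definition.

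For $|A|=3$, the only $y$ of weight at most three containing $A$ is $A$ itself, so $M_A(z)=z_A=M_A$. For $|A|=2$, the sum splits into $z_A$ and the terms $z_y$ with $y\supsetneq A$, $|y|=3$. Substituting the defining equation for $z_A$, the two copies of that XOR cancel and leave $M_A$. For $|A|=1$ the same cancellation happens, now with the sum over all $y\supsetneq A$ with $|y|\le3$. This is exactly what the defining equation for $z_A$ at that level contains, so again $M_A(z)=M_A$.

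The moments $M_\varnothing$ and those with $|A|\ge4$ are unconstrained by Fact~\ref{fact:moment}. So the support of $z$ is a set of odd parities $\mathcal S$ with $M_A(\mathcal S)=M_A(U)$ for all $1\le|A|\le3$. To see that $z$ is a genuine residue of $U$, I take any residue $s$ of $U$. It has the same moment vector as $z$, and by Fact~\ref{fact:moment} it lies in the same coset of $\RM(n-4,n)^*$. Amy--Mosca then says that assigning coefficient $1$ to each parity in the support of $z$ implements $U$ up to a diagonal Clifford, and diagonal Cliffords are free. Hence $\delta(U)\le\wt(z)$.

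This value depends only on $M(U)$, because $z$ is built from the moments alone, so $W(s)$ is well defined. For the final clause, if $W(s)=\wt(z)\le7$, then $z$ is a residue of weight at most seven, and Lemma~\ref{lem:rigidradius} gives $\delta(U)=\wt(z)$.

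The only step needing care is the bookkeeping in the weight-one case. The defining sum for $z_A$ must range over exactly the supersets that appear in $M_A(z)$: all $y\supsetneq A$ with $|y|\le3$, covering weights two and three together, and not only weight three. With that settled, the rest is substitution.
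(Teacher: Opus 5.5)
Your proposal is correct and follows the paper's proof: you check $M_A(z)=M_A$ level by level through the downward triangular solve, use Fact~\ref{fact:moment} (the moment form of Amy--Mosca) to place $z$ in the coset of $U$ so that $\delta(U)\le\wt(z)$, and get equality for $\wt(z)\le7$ from Lemma~\ref{lem:rigidradius}. The only difference is that you write out the cancellation at each weight, which the paper compresses into ``by construction''.
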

\begin{proof}
By Fact~\ref{fact:moment} the moments of order $\le3$ determine the coset, and by the
triangularity noted above parities of weight at most three suffice. Solving downwards, at each $|A|$ the supersets of $A$ have already been
fixed, so $z_A$ is determined and by construction $M_A(z)=\bigoplus_{y\supseteq A}z_y=M_A$ for every
$A$ with $1\le|A|\le3$. Hence $z$ lies in the coset of $U$ and $\delta\le\wt(z)$, $\delta$ being the
minimum weight over that coset. Equality below $8$ is Lemma~\ref{lem:rigidradius}. The
construction takes $O(n^3)$ time and involves no decoding.
\end{proof}

On the two pure loci the ceiling has a closed form.

\begin{corollary}[the ceiling on the pure loci]\label{cor:ceilingpure}
For a pure-cubic gate with monomial set $F$ the ceiling is
$W(s)=w(F):=\#\{A: 1\le|A|\le3,\ \#\{S\in F:A\subseteq S\}\text{ odd}\}$. For a pure quadratic with
moment graph $E$ it is $e+\mathrm{odd}$, the edge count plus the number of odd-degree vertices.
\end{corollary}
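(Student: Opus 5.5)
The plan is to apply Proposition~\ref{prop:ceiling} on each of the two pure loci and count the support of the canonical residue $z$ directly.

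\emph{The pure-cubic case.} Here $M_A=0$ for $|A|\le2$, and $M_{\{a,b,c\}}=1$ exactly when $\{a,b,c\}\in F$. I will prove by downward induction on $|A|$ that
\[
z_A=\#\{S\in F:A\subseteq S\}\bmod 2 \qquad\text{for every } 1\le|A|\le3 .
\]
\begin{enumerate}
\item For $|A|=3$ this is the definition, since $z_A=M_A=\mathbf 1[A\in F]$.
\item For $|A|=2$ we have $z_A=0\oplus\bigoplus_{y\supsetneq A,|y|=3}z_y$, which is the parity of the number of monomials containing $A$.
\item For $|A|=1$ we have $z_A=\bigoplus_{y\supsetneq A,|y|\le3}z_y$. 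The triples contribute $\#\{S\in F:a\in S\}$. Each monomial containing $a$ contains exactly two pairs through $a$, so the pairs contribute $\sum_{S\ni a}2\equiv 0$.
\end{enumerate}
Hence $z_a$ is also the parity of the number of monomials containing $a$. The one step to check carefully is the pair sum in the singleton case. Write $\bigoplus_{b\ne a}z_{\{a,b\}}=\sum_{b}\#\{S\in F:\{a,b\}\subseteq S\}$; this double-counts each $S\ni a$ twice, so it vanishes mod $2$. Therefore $\wt(z)=w(F)$.

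\emph{The pure-quadratic case.} Here $M_A=0$ for $|A|\in\{1,3\}$, and $E$ is the set of pairs with $M_{\{a,b\}}=1$. All weight-three entries vanish, so $z_{\{a,b\}}=\mathbf 1[\{a,b\}\in E]$, which gives $e$ nonzero pair coordinates. Then $z_a=0\oplus\bigoplus_b z_{\{a,b\}}=\deg_E(a)\bmod 2$. Summing, $\wt(z)=e+\#\{\text{odd-degree vertices}\}$.

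There is no real obstacle here. The only care needed is the bookkeeping of the downward recursion, and keeping the singleton step in the cubic case consistent with the definition of $w(F)$, which counts $A$ with $\#\{S\in F:A\subseteq S\}$ odd.
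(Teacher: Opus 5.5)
Your proof is correct and follows the paper's argument: you evaluate the downward recursion of Proposition~\ref{prop:ceiling} level by level, observe that on the pure-cubic locus the pair terms at a singleton sum to $2\deg_F(a)\equiv 0$, and on the pure-quadratic locus get $z_{\{a,b\}}=E_{ab}$ and $z_{\{a\}}=\deg_E(a)\bmod 2$. One notational slip: the right-hand sum in your singleton check should run over $b\ne a$, as the left-hand side does.
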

\begin{proof}
For a pure-cubic gate, $z_A=[A\in F]$ at $|A|=3$. At $|A|=2$, $M_A=0$ gives
$z_A=\#\{S\in F:A\subseteq S\}$, and at $A=\{a\}$ the weight-three parities through $a$ contribute
$\deg_F(a)$ while the weight-two parities contribute
$\sum_{b\ne a}\#\{S\in F:\{a,b\}\subseteq S\}=2\deg_F(a)\equiv0$, so $z_{\{a\}}=\deg_F(a)$ and the
total weight is $w(F)$. For a pure quadratic the weight-three level is empty, $z_{\{a,b\}}=E_{ab}$
and $z_{\{a\}}=\deg_E(a)$, giving $e+\mathrm{odd}$.
\end{proof}

The cubic and quadratic formulas are two cases of one construction, and unlike $w(F)$ the ceiling
$W(s)$ applies to mixed-degree phase polynomials. Theorem~\ref{thm:genLB}, with the extra one of Corollary~\ref{cor:cubicplusone} on the pure-cubic
locus, and Proposition~\ref{prop:ceiling} therefore bracket every gate between two quantities computed
from the moments in $O(n^3)$.

A phase polynomial with $t$ terms yields the moments in $O(tn^3)$ time. When the two ends of the
bracket coincide they determine $\delta$.

%% file: appendices/C-exact.tex
\section{Exact \texorpdfstring{$T$}{T}-counts of Toffoli layers}\label{app:exact}

On a layer of $m$ disjoint $\CCZ$ gates the isotropy floor of Appendix~\ref{sec:lb} is attained,
and $\delta=6m+1$. Every proof below has this shape, with the floor on one side and an explicit
representation that meets it on the other. Layers whose blocks share a common pair are exact too, as
are layers whose shared qubits can be routed along a forest of maximum degree three. Outside that
class the floor only brackets $\delta$. The appendix ends by comparing the floor with Campbell and
Howard's exact count on controlled unitaries (\S\ref{sec:controlled}).

\subsection{Toffoli layers}\label{app:layers}

A Toffoli layer is, up to Hadamards on the targets, a product of $\CCZ$ gates on triples of qubits,
the blocks of the layer. We treat disjoint blocks first. When blocks overlap in single qubits the
floor drops by $2$ for each incidence beyond one per qubit, and the splicing construction of
\S\ref{sec:splicing} meets the lower floor whenever the overlaps route along a forest of maximum
degree three.

\subsubsection{Disjoint blocks}\label{sec:toffoli}

A Toffoli gate is a $\CCZ$ conjugated by Hadamards on the target, so a parallel Toffoli layer of
$m$ gates on disjoint qubits is, up to Clifford gates, a layer of $m$ disjoint $\CCZ$ gates on
$n=3m$ qubits. Write $L_m=\bigotimes_{i=1}^m\CCZ_i$ for this layer. At $m=1$ the seven-$T$
circuit is standard~\cite{AMMR}, and Gosset, Kliuchnikov, Mosca and Russo~\cite{GKMR} proved it
optimal among ancilla-free Clifford$+T$ circuits by exact search. In the Hadamard-free model the
value $7$ is immediate, since $\RM(-1,3)^*=\{0\}$ and the coset of $\CCZ$ is a single word of weight
$7$. Campbell and
Howard's controlled-unitary paper~\cite{CampbellHowardPRA}
shows that $m$ disjoint $\CCZ$ gates need at most $6m+1$ $T$ gates~\cite[Cor.~1, Sec.~IV.A]{CampbellHowardPRA}
and reports the value $13$ at $m=2$ as known~\cite[Ex.~IV.2]{CampbellHowardPRA}, without a proof. That
case has six qubits, and exhaustive coset decoding confirms it. Stabilizer nullity is additive
under tensor product~\cite{JiangWang,BCHK}, so it gives $3m$. In the work citing
\cite{CampbellHowardPRA,AmyMosca,HeyfronCampbell} we found no lower bound on the layer above $3m$ or
the roughly $3.71m$ of the amortized stabilizer R\'enyi entropy~\cite{AmortizedSRE}, so to our
knowledge the lower bound of Theorem~\ref{thm:parccz} for $m\ge3$ is new. Campbell and Howard's controlled-unitary theorem is
exact~\cite[Thm.~5]{CampbellHowardPRA}, but it does not apply here.

\begin{remark}[which layers are controlled unitaries]\label{rem:layernotcontrolled}
Let $f$ be a sum of $m\ge2$ cubic monomials on blocks that pairwise share at most one qubit. Then
$U_f$ is a controlled unitary in some \CNOT{} basis, up to a diagonal Clifford factor, only if all
blocks share one qubit, as in the fan-out. In particular the disjoint layer $L_m$ with $m\ge2$ is
not one. A controlled unitary with control $\ell$ and a diagonal Clifford target has cubic part
$\ell\cdot q$ modulo quadratics, which vanishes on the hyperplane $\ell=0$, and the cubic part is
well defined in every \CNOT{} basis (Remark~\ref{rem:grading}). Pick $i$ in the support of $\ell$ and
restrict $f$ to $\ell=0$ by substituting for $x_i$. A block $B$ not containing $i$ keeps its
monomial. The substitution turns a monomial $x_ix_jx_k$ into monomials that contain $x_jx_k$, and
none of them equals the monomial of $B$, since $B$ shares at most one qubit with $\{i,j,k\}$. Hence
$f$ stays cubic on $\ell=0$ unless every block contains $i$. Since this holds for every $i$ in the
support of $\ell$, all blocks share a qubit.
\end{remark}

\begin{theorem}[Toffoli-layer optimality]\label{thm:parccz}
For every $m\ge 1$, $\delta(L_m)=6m+1$, and every $\{\CNOT,T\}$ circuit for the $\CCZ$ layer with clean
ancillas uses at least $6m+1$ $T$ gates (Corollary~\ref{cor:ancilla}). Hence no Clifford$+T$
circuit for the parallel Toffoli layer whose Hadamards all lie in the two target layers uses fewer
than $6m+1$ $T$ gates.
\end{theorem}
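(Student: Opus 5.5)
The proof has two halves: the floor from Appendix~\ref{sec:lb} for the lower bound, and an explicit residue of weight $6m+1$ for the upper bound. The ancilla and Hadamard clauses then follow from Corollary~\ref{cor:ancilla} and Lemma~\ref{lem:biinv}.

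\emph{Lower bound.} First I check that $L_m$ is pure-cubic and non-degenerate. Its phase is $4\sum_{i=1}^m x_{3i-2}x_{3i-1}x_{3i}$. Summing the inclusion--exclusion representations \eqref{eq:cczid} of the $m$ blocks gives a residue whose parities each lie inside a single block. Every singleton and every pair lies in an even number of them: pairs within a block lie in two, pairs across blocks in none. So $M_A=0$ for $1\le|A|\le2$. The only odd triples are the blocks themselves, so $T$ is block-diagonal with one copy of the $\CCZ$ tensor per block. A slice $T(w)$ decomposes blockwise, and the warm-up computation shows each block slice vanishes only when $w$ restricted to that block is zero. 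Hence $R=\{0\}$ and $d=0$. Theorem~\ref{thm:cubicLB} (the $d=0$ case) then gives $\delta(L_m)\ge 2\cdot 3m+1=6m+1$.

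\emph{Upper bound.} Here I need a residue of weight exactly $6m+1$, that is, a coset element of the canonical residue, which has weight $7m$. I would follow the Campbell--Howard construction~\cite[Cor.~1]{CampbellHowardPRA} by induction on $m$. Adding a single weight-$15$ codeword of $\RM(n-4,n)^*$ (the nonzero points of a $4$-flat through the origin) merges two blocks and saves one $T$. Concretely, I would take the $4$-dimensional subspace spanned by $e_1,e_2,e_4,e_5$, or a variant mixing in $e_3\oplus e_6$, chosen so that its support overlaps the $14$ parities of two $\CCZ$ residues in $8$ places. Adding it changes the weight by $15-2\cdot 8=-1$, giving $13$. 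Rather than fix coordinates by hand, I would state the target directly: exhibit a list of $6m+1$ parities and verify its moments, namely even singleton and pair counts and odd triple counts exactly on the blocks.

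For general $m$, the natural design is to keep a single shared \lq\lq all-block\rq\rq{} structure. Each block contributes $6$ parities, and one global parity fixes the leftover parity. This mirrors the even-weight $\mathbf 1\in\mathcal C$ obstruction that forced the extra one. The verification is a moment check of the kind in Corollary~\ref{cor:ceilingpure}.

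\emph{Main obstacle.} The hardest step is writing down the $6m+1$ residue and checking all moments, especially the cross-block pair and triple moments introduced by the shared parities. I would first verify $m=2$ explicitly (weight $13$), then splice inductively: merging block $m+1$ into a weight-$(6m+1)$ residue should cost exactly $6$ more parities. This is the disjoint case of the splicing construction in Proposition~\ref{prop:splicetree}, and I could simply invoke it with the empty sharing forest, where $\eta=0$.

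Finally, the Hadamard clause holds because Hadamards on target layers only conjugate the layer into the Toffoli layer, which is Clifford-equivalent to $L_m$, while the middle circuit must be Hadamard-free.
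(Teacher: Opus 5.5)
Your proposal is correct and follows the paper's proof: the layer is pure-cubic with a block-diagonal cubic tensor and trivial radical, so Theorem~\ref{thm:cubicLB} gives $6m+1$, the matching upper bound is Campbell and Howard's Cor.~1~\cite{CampbellHowardPRA} or Proposition~\ref{prop:splicetree} with $\eta=0$, and the ancilla and Hadamard clauses follow from Corollary~\ref{cor:ancilla}. One aside is wrong, though the final argument does not rely on it: $\langle e_1,e_2,e_4,e_5\rangle$ meets the weight-$14$ residue in only $6$ points and would give weight $17$, and an overlap of $8$ forces the $4$-flat to contain a whole block, for example $\langle e_1,e_2,e_3,e_4\rangle$, which reproduces exactly the case-(i) splice at $y=\{4\}$.
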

\begin{proof}
For the lower bound, note that the layer is pure-cubic, with $T_{pqr}=1$ exactly when
$\{p,q,r\}$ is a block. Its radical is trivial: if $w\in R$, taking $q,r$ to be two qubits of a block
forces $w$ to vanish on the third, and ranging over the three pairs of each block gives $w=0$. Hence
$d=0$, and Theorem~\ref{thm:cubicLB} gives $\delta(L_m)\ge 6m+1$.

For the upper bound, Campbell and Howard's construction gives $\delta(L_m)\le6m+1$~\cite[Cor.~1]{CampbellHowardPRA}.
An explicit set of $6m+1$ parities, and hence a $\{\CNOT,T\}$ circuit, also follows from
Proposition~\ref{prop:splicetree} with $\eta=0$, which splices $m$ copies of the seven-parity $\CCZ$
circuit by case (i) of Lemma~\ref{lem:splice}.
Since the lower and upper bounds agree, $\delta(L_m)=6m+1$.
\end{proof}

\noindent Implemented separately, the $m$ gates cost $7m$. Figure~\ref{fig:layer}(a) draws the
construction at $m=2$, and Proposition~\ref{prop:splicetree} below rebuilds it by splicing.

\input{figures/fig6-layer}

\begin{corollary}[controlled-parity layers]\label{cor:cparity}
A \emph{controlled-parity} gate applies $(-1)^{x_ax_b\,\bigoplus_{i\in P}x_i}$ for a nonempty qubit set $P$
disjoint from $\{a,b\}$. A single controlled-parity gate has $\delta=7$ for every $|P|$, and a layer
of $m$ such gates on disjoint qubit sets has $\delta=6m+1$.
\end{corollary}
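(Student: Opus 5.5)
The cleanest route is to reduce each controlled-parity gate to a $\CCZ$ by a \CNOT{} relabelling and then reuse Theorem~\ref{thm:parccz}. Fix one gate with controls $a,b$ and parity set $P$, and pick $p\in P$. The \CNOT{} circuit that computes $x_p\mapsto\bigoplus_{i\in P}x_i$ (CNOTs from each $i\in P\setminus\{p\}$ onto $p$) is an invertible linear map $L$. Conjugating by it turns the phase $4\,x_ax_b\bigoplus_{i\in P}x_i$ into $4\,x_ax_bx_p$ on the same qubits, since $4z\bmod 8$ depends only on $z\bmod 2$. The gate in the new basis is therefore $\CCZ$ on $\{a,b,p\}$, tensored with the identity on $P\setminus\{p\}$. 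By Lemma~\ref{lem:biinv}, with $C=P_L$ and $C'=P_L^{-1}$ (both products of \CNOT{}s), $\delta$ is unchanged.

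For a single gate, $\delta(\CCZ\otimes I)=\delta(\CCZ)=7$. The upper bound is immediate, since the seven-parity residue of \eqref{eq:cczid} is still a residue on the larger register. For the lower bound, restrict the idle qubits to $0$ (Lemma~\ref{lem:restrict}), which gives $\delta\ge\delta(\CCZ)=7$. Alternatively, the pure-cubic gate $\CCZ\otimes I$ has $d=|P|-1$ and $n-d=3$, so Corollary~\ref{cor:cubicplusone} gives $7$ directly.

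For a layer on disjoint qubit sets, apply the relabellings for all gates at once. They act on disjoint wires, so their product is again an invertible \CNOT{} map. The layer becomes $L_m\otimes I$, with $L_m$ on the triples $\{a_k,b_k,p_k\}$ and the identity on the remaining wires. The same two arguments then give $\delta=\delta(L_m)$: restriction of the idle wires to zero for $\ge$, and reuse of a representation for $\le$. Theorem~\ref{thm:parccz} then gives $6m+1$.

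The only point needing care is that the relabelling is a legitimate conjugation preserving $\delta$ and maps the phase to exactly $4x_ax_bx_p$ modulo diagonal Cliffords. The phase is $4$ times a Boolean function, so the carries vanish, and the remark on $2$-adic grading (Remark~\ref{rem:grading}) covers this. Everything else is a direct invocation of earlier results.
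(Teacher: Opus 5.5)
Your proof is correct and follows the paper's route: a \CNOT{} relabelling that collects each parity onto one representative wire, Clifford bi-invariance (Lemma~\ref{lem:biinv}), and then Theorem~\ref{thm:parccz}. You also spell out the step $\delta(L_m\otimes I)=\delta(L_m)$ for the idle parity wires, using Lemma~\ref{lem:restrict} for one direction and reuse of a representation for the other; the paper's proof leaves this step implicit.
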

\begin{proof}
Let $C$ be the Clifford that, in each block, XORs the parity qubits $P_i$ onto one representative
wire. Conjugating the layer by $C$ replaces each controlled-parity by a $\CCZ$ on three wires, on
disjoint triples. Hence the layer equals $C^{-1}(\bigotimes_i\CCZ_i)C$, and since $\delta$ is Clifford
bi-invariant its $T$-count is $\delta(L_m)=6m+1$ by Theorem~\ref{thm:parccz}. For $m=1$
the block is a single $\CCZ$, giving $\delta=7$.
\end{proof}

\noindent A single controlled-parity gate has nullity $\nu=3$ for every $|P|$.

\subsubsection{Overlapping blocks}\label{sec:overlap}

The Toffoli blocks of practical oracles rarely act on disjoint qubits. Controlled layers and swap
tests share a common control among many gates, and the carry logic of a ripple-carry adder links
consecutive Toffolis through shared carries. For $m$ blocks covering $n$ qubits, write $\eta=3m-n$ for the number of \emph{excess
incidences}, the qubit incidences in excess of one per qubit. When the blocks pairwise share at
most one qubit the floor is $6m+1-2\eta$ (Lemma~\ref{lem:linearoverlap}), so each excess incidence
lowers it by $2$. When all blocks share a pair of qubits the layer collapses to a single $\CCZ$
(Remark~\ref{rem:twoshared}). Figure~\ref{fig:overlap} sets the three regimes side by side,
\[
6m+1\ \xrightarrow{\ \text{share one qubit}\ }\ 4m+3\ \xrightarrow{\ \text{share a pair}\ }\ 7 .
\]

\input{figures/fig2-overlap}

\begin{remark}[layers sharing a pair of qubits]\label{rem:twoshared}
If all $m\ge2$ blocks share a pair of qubits, the tensor is degenerate. For $m$ blocks $\{a,b,c_i\}$ through
a common pair $\{a,b\}$, the radical relations give
$R=\{w\ \text{supported on the}\ c_i\ \text{with}\ \sum_iw_{c_i}=0\}$, of dimension $n-3$. The gate
is pure-cubic, so the floor $2(n-d)+1$ of Corollary~\ref{cor:cubicplusone} equals $7$. The layer is
Clifford-equivalent to a single $\CCZ$ (Corollary~\ref{cor:cparity}), so $\delta=7$ for every $m$
and the floor is attained.
\end{remark}

\begin{lemma}[linear overlap keeps the tensor non-degenerate]\label{lem:linearoverlap}
Let $B_1,\dots,B_m$ be distinct $\CCZ$ blocks (triples) covering $n$ qubits, such that every pair
of qubits lies in at most one block (the blocks form a \emph{linear} $3$-uniform hypergraph:
pairwise intersections have size at most one). Then the union's cubic tensor is non-degenerate,
so $\delta\bigl(\prod_i\CCZ_{B_i}\bigr)\ge2n+1$. With $\eta=3m-n$ excess incidences this is the
\emph{sharing floor} $\delta\ge6m+1-2\eta$.
\end{lemma}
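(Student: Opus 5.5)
The plan is to show that the gate $\prod_i\CCZ_{B_i}$ is pure-cubic with cubic moment form $T_{pqr}=1$ exactly when $\{p,q,r\}$ is a block. Its radical is then trivial by the same slice argument as in the proof of Theorem~\ref{thm:parccz}, and Theorem~\ref{thm:cubicLB} with $d=0$ finishes the bound.

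First I pin down the moments. The phase is $4f$ with $f=\sum_i\prod_{a\in B_i}x_a$. By the additivity of representations (as in Lemma~\ref{lem:subadd}) I add the seven-parity residues of the individual blocks, reducing mod $2$. Moments are linear in the residue, so each $M_A$ is the XOR over blocks of the block's moment. Each single-block residue has all order-one and order-two moments zero, and its only nonzero triple moment is at the block itself. Hence the union is pure-cubic. Because the blocks are distinct triples, $T_{pqr}=1$ iff $\{p,q,r\}$ is one of the $B_i$. This holds for any family of distinct blocks, so linearity is not needed yet.

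Next comes the radical. Let $w\in R$ and fix a block $B=\{p,q,r\}$. The slice entry $T(w)_{qr}=\sum_a w_aT_{aqr}$ sums $w_a$ over all $a$ such that $\{a,q,r\}$ is a block. Here the linear hypothesis enters: the pair $\{q,r\}$ lies in at most one block, namely $B$, so the only contributing $a$ is $p$ and $T(w)_{qr}=w_p$. Vanishing of the slice forces $w_p=0$. Cycling over the three pairs of every block shows $w$ vanishes on every covered qubit. Since the blocks cover all $n$ qubits, $w=0$ and $d=0$.

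Finally, Theorem~\ref{thm:cubicLB} gives $\delta\ge2n+1$. Substituting $n=3m-\eta$ gives $6m+1-2\eta$. The only step needing care is the radical step, since it is where linearity is essential. If two blocks shared a pair, the slice entry would become $w_p+w_{p'}$, and this can vanish with $w\neq0$, as in Remark~\ref{rem:twoshared}. I expect no real obstacle beyond checking that the moment additivity correctly cancels nothing at the triple level, which follows from distinctness of the blocks.
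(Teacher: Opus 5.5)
Your proposal is correct and follows the paper's proof: linearity makes the slice entry at the complementary pair $(q,r)$ of a block equal to $w_p$, so the radical is trivial, and Theorem~\ref{thm:cubicLB} with $d=0$ gives $2n+1=6m+1-2\eta$. Your explicit check that the product is pure-cubic with $T_{pqr}=1$ exactly on the blocks, by XORing the single-block residues, is a step the paper leaves implicit, and it is sound.
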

\begin{proof}
Let $w\in R$ and let $p$ be any qubit, lying in some block $B=\{p,q,r\}$. The pair $\{q,r\}$ is
contained in $B$ and, by linearity, in no other block, so the radical relation at $(q,r)$ is
$\sum_a w_aT_{aqr}=w_p=0$. This holds for every covered qubit, so $R=0$ and
Theorem~\ref{thm:cubicLB} applies with $d=0$.
\end{proof}

\subsubsection{Splicing along a forest}\label{sec:splicing}

The floor of Lemma~\ref{lem:linearoverlap} is not attained on every linear pattern
(Remark~\ref{rem:ledgerfails}). The construction below attains it whenever the shared qubits route
along a forest of maximum degree three. Its case (i) is the gate-synthesis construction of Campbell and
Howard's subadditivity theorem~\cite[Thm.~6]{CampbellHowardPRA} in the language of parity sets.
Case (ii) and the routing of Proposition~\ref{prop:splicetree} are our formulation, and the single
shared control of case (ii) is already covered by their fan-out count.

The commonest overlap in practice is a single control shared by every block, as in a controlled
layer. It is the simplest case of the construction, a path along which every splice is of case (ii).

\begin{theorem}[fan-out layer]\label{thm:fanout}
Let $U_m=\prod_{i=1}^m\CCZ_{0,a_i,b_i}$ be $m$ $\CCZ$ gates sharing the single control qubit $0$,
on $n=2m+1$ qubits. Then
\[
\delta(U_m)\ =\ 4m+3\qquad\text{for every }m\ge1 .
\]
\end{theorem}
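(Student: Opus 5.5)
Two inequalities are needed: the lower bound $\delta(U_m)\ge4m+3$ from the sharing floor, and a matching explicit residue of weight $4m+3$.

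\emph{Lower bound.} The blocks $\{0,a_i,b_i\}$ pairwise meet only in qubit $0$, so they form a linear $3$-uniform hypergraph. Lemma~\ref{lem:linearoverlap} then shows that the cubic tensor is non-degenerate. Since $U_m$ is pure-cubic, Theorem~\ref{thm:cubicLB} with $d=0$ gives $\delta(U_m)\ge2n+1=2(2m+1)+1=4m+3$. Equivalently, $\eta=3m-(2m+1)=m-1$, and the sharing floor is $6m+1-2(m-1)=4m+3$. This direction needs nothing new.

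\emph{Upper bound.} I would proceed by induction on $m$, with base case the seven-parity representation \eqref{eq:cczid} of $\CCZ$ ($4\cdot1+3=7$). The inductive step splices a new block $\{0,a,b\}$ onto a representation of $U_{m-1}$ and should add exactly four odd parities. A direct way to see the count is to write
\[
4x_0(x_ax_b)\equiv\text{sum of the seven parities on }\{0,a,b\}\pmod 8 .
\]
Three of these parities, namely $x_a$, $x_b$ and $x_a\oplus x_b$, do not touch qubit $0$. The other four are $x_0$, $x_0\oplus x_a$, $x_0\oplus x_b$ and $x_0\oplus x_a\oplus x_b$. The $x_0$ term occurs with coefficient $+1$ in every block, so across $m$ blocks its coefficient is $m$. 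The goal is therefore a representation of $U_m$ with at most $4m+3$ odd coefficients.

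Simply adding the $m$ seven-parity residues gives $6m+1$ or $6m$ odd terms, depending on the parity of the $x_0$ coefficient. That is too many, so the inductive step must save three parities per new block rather than one. This is case (ii) of Lemma~\ref{lem:splice}, which I would state and verify here, choosing a different representation of each $\CCZ$ so that the parities free of $x_0$ cancel with terms already present. One candidate is to use Clifford bi-invariance (Lemma~\ref{lem:biinv}): factor $U_m$ as $x_0\cdot q(x)$ with $q=\sum_i x_{a_i}x_{b_i}$, and apply Lempel's factorization of the quadratic form $q$ into $\mu$ squares of parities. Substituting each square as $4x_0\ell_j^2=4x_0\ell_j$-type terms, each of which costs three $T$ gates via $4x_0\ell=x_0+\ell-(x_0\oplus\ell)$ up to Clifford terms, then requires tracking the odd coefficients precisely.

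\emph{Main obstacle.} I expect the hard part to be getting the upper bound to exactly $4m+3$ rather than off by a constant. The fix I would try first is to verify directly that the parity set
\[
\{x_0\}\cup\{x_{a_i},\,x_{b_i},\,x_0\oplus x_{a_i}\oplus x_{b_i},\,x_{a_i}\oplus x_{b_i}\oplus x_{a_{i+1}}\oplus x_{b_{i+1}}\ \ldots\}
\]
has the right moments. More concretely, I would write down a candidate with $4m+3$ parities, check via Fact~\ref{fact:moment} that $M_{\{0,a_i,b_i\}}=1$, all other triple moments vanish, and all order-one and order-two moments vanish, and then conclude $\delta(U_m)\le4m+3$. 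Failing that, I would cite the matching upper bound of Campbell and Howard~\cite[Thm.~5]{CampbellHowardPRA} for the controlled gate $\mathrm{C}(\prod_i\CZ_{a_ib_i})$. In that gate the target is a diagonal Clifford with quadratic form of Lempel rank $2m$, and their count yields $4m+3$.
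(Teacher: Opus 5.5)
Your lower bound is the paper's argument: Lemma~\ref{lem:linearoverlap} with $\eta=m-1$, then Theorem~\ref{thm:cubicLB}. The gap is in the upper bound. Of your four routes, only the fallback citation of~\cite[Thm.~5]{CampbellHowardPRA} is complete. There $\Gamma$ is the matching on the pairs $a_ib_i$, of rank $2m$ with zero diagonal, so $\mu=2m+1$ is odd and the count is $2\mu+1=4m+3$; the paper notes that they obtained this count. The other routes fail as written.

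\emph{The explicit set.} Its only parities through qubit $0$ are $x_0$ and the $x_0\oplus x_{a_i}\oplus x_{b_i}$. Hence $M_{\{0,a_i\}}=1$, whereas $U_m$ is pure-cubic. At $m=1$ it has four parities rather than the seven of $\CCZ$.

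\emph{The Lempel sketch.} It uses a false identity: $x_0+\ell-(x_0\oplus\ell)=2x_0\ell$, not $4x_0\ell$, and $4x_0\ell$ is a free $\CZ$. Also, over $\F_2$ a parity squared is itself, so $q$ is not a sum of squares. A repaired version works. Take the rows $\ell_1,\dots,\ell_\mu$ of a $Y$ with $Y^{\mathsf T}Y=\Gamma$. Its columns have even weight, so $\bigoplus_j\ell_j=0$, and the carry identity gives $\sum_j2x_0\ell_j\equiv4x_0\sum_{j<k}\ell_j\ell_k\pmod 8$. The last sum differs from $q$ only by linear terms, which contribute diagonal Cliffords. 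Expanding $\mu x_0+\sum_j\ell_j-\sum_j(x_0\oplus\ell_j)$ then gives $2\mu+1$ odd coefficients.

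\emph{The splice route.} This is the paper's approach: Proposition~\ref{prop:splicetree} with $F_0$ the path $B_1,\dots,B_m$, each splice of case~(ii). However, you describe the saving wrongly and skip the one check it needs. In case~(ii) of Lemma~\ref{lem:splice}, the new block's parities that avoid $x_0$ are lifted by $y$ and all survive. The saving of three comes from deleting $y$, and from the lift $y\cup\{0\}$ of $\{0\}$ cancelling the copy of $y\cup\{0\}$ already present. So at each step you must exhibit $y$ in the current set with $0\notin y$, $y\cap\{a_{i+1},b_{i+1}\}=\varnothing$, and $y\cup\{0\}$ also present.

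For the fan-out this check is easy. Splice $B_{i+1}$ at $y_{i+1}=\{a_1,\dots,a_i\}$. The current support meets $B_{i+1}$ only in $\{0\}$, and $\{0\}\in\CCZ(B_{i+1})$. The pair $(y_{i+1},\,y_{i+1}\cup\{0\})$ is the lift by $y_i=\{a_1,\dots,a_{i-1}\}$ of the pair $(\{a_i\},\{0,a_i\})$ of $\CCZ(B_i)$; for $i=1$ it is that pair itself. It survives the splice that placed $B_i$, because by the count in Lemma~\ref{lem:splice} only the lift of $\{0\}$ can coincide with an existing parity. Each splice then adds $7-3=4$ parities, for a total of $7+4(m-1)=4m+3$.
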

\begin{proof}
For the lower bound, note that the blocks pairwise share only qubit $0$, a linear hypergraph, so
Lemma~\ref{lem:linearoverlap} gives $\delta\ge2n+1=4m+3$.

For the upper bound, take the tree $F_0$ for the shared qubit $0$ to be the path $B_1,\dots,B_m$, so $F$ has
maximum degree two, and apply Proposition~\ref{prop:splicetree} below with $\eta=m-1$. Each of the $m-1$
splices is case (ii) of Lemma~\ref{lem:splice}, so the weight is $7+4(m-1)=4m+3$, which equals the floor.
\end{proof}

\noindent Sharing one control among $m$ Toffolis thus saves $2(m-1)$ $T$-gates against the parallel
layer. By Lemma~\ref{lem:linearoverlap} and Corollary~\ref{cor:ancilla}, no Hadamard-free
$\{\CNOT,T\}$ implementation with clean ancillas saves more. Adaptive circuits and circuits with
internal Hadamards are not covered by this statement. Campbell and Howard
obtained this count and its optimality from their controlled-unitary theorem
\cite[Thm.~5]{CampbellHowardPRA}. Here it is the floor of Lemma~\ref{lem:linearoverlap}, attained
by splicing. The comparison with their theorem on general controlled unitaries is
Proposition~\ref{prop:controlled}.

Splicing builds a parity set for a union of blocks from parity sets for the parts. It deletes one
parity $y$ of the current set and adds the parities of the new block, each XORed with $y$. The
lifted parities supply the deleted phase on $y$, so the splice costs one gate less than the new
block alone. If the new block shares a qubit $c$ with the current set, the lift of $\{c\}$ can
coincide with a parity $y\cup\{c\}$ already present, and the two gates on it cancel.

For example, take the blocks $\{a,b,c\}$ and $\{c,d,e\}$ and splice at $y=\{a\}$. Delete $\{a\}$
from the seven parities of the first block, leaving $\{b\}$, $\{c\}$, $\{a,b\}$, $\{a,c\}$, $\{b,c\}$
and $\{a,b,c\}$. Adding $a$ to the seven parities of the second block gives $\{a,c\}$, $\{a,d\}$,
$\{a,e\}$, $\{a,c,d\}$, $\{a,c,e\}$, $\{a,d,e\}$ and $\{a,c,d,e\}$. The parity $\{a,c\}$ now appears
twice and drops out, so $6+7-2=11$ parities remain. This is the floor $2n+1$ at $n=5$, and
Figure~\ref{fig:merge}(b) draws the circuit.

\begin{lemma}[splicing]\label{lem:splice}
Call a set $S$ of parities \emph{pure-cubic odd} if $|S|$ is odd and $M_A(S)=0$ for $|A|\in\{1,2\}$,
and call its \emph{block set} $B(S)=\{A:|A|=3,\ M_A(S)=1\}$. The set $\CCZ(B)$, the seven nonempty
subsets of a block, is pure-cubic odd with $|\CCZ(B)|=7$ and block set $\{B\}$. Let $P,P'$ be pure-cubic
odd on supports $X=\bigcup_{\sigma\in P}\sigma$ and $X'=\bigcup_{\sigma\in P'}\sigma$ with $|X\cap X'|\le1$,
let $y\in P$ with $y\cap X'=\varnothing$, and write $P'\oplus y=\{\sigma\cup y:\sigma\in P'\}$ for the
\emph{lift} of $P'$ by $y$. Then
\[
P\star_yP'\ :=\ \bigl(P\setminus\{y\}\bigr)\ \triangle\ \bigl(P'\oplus y\bigr)
\]
is pure-cubic odd with block set $B(P)\sqcup B(P')$, and
\begin{enumerate}
\item if $X\cap X'=\varnothing$ then $|P\star_yP'|=|P|+|P'|-1$.
\item if $X\cap X'=\{c\}$, $\{c\}\in P'$ and $y\cup\{c\}\in P$, then $|P\star_yP'|=|P|+|P'|-3$.
\end{enumerate}
\end{lemma}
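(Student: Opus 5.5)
The plan is to compute moments directly, using that $M_A$ is additive over symmetric difference: $M_A(S\triangle S')=M_A(S)\oplus M_A(S')$, since multiplicities are counted mod $2$. So
\[
M_A(P\star_yP')=M_A(P)\oplus M_A(\{y\})\oplus M_A(P'\oplus y),
\]
and the whole lemma reduces to computing $M_A$ of the lift, together with a cardinality count.

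\textbf{Moments of the lift.} Since $y\cap X'=\varnothing$, each $\sigma\in P'$ is disjoint from $y$. Hence $\sigma\mapsto\sigma\cup y$ is injective, so $|P'\oplus y|=|P'|$. Moreover $A\subseteq\sigma\cup y$ if and only if $A\setminus y\subseteq\sigma$, which gives
\[
M_A(P'\oplus y)=M_{A\setminus y}(P')
\]
with the convention $M_\varnothing(P')=|P'|\bmod 2=1$. I then split on $A'=A\setminus y$.

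If $A'=\varnothing$, then $A\subseteq y$, the lift contributes $1$, and this cancels $M_A(\{y\})=1$.

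If $A'\neq\varnothing$ and $A\cap y\neq\varnothing$, then $M_A(\{y\})=0$ and $|A'|\le2$, so $M_{A'}(P')=0$ by pure-cubicity. If also $|A'|=1$ and $|A|=3$, then $A$ is a pair inside $y$ plus one element, and again $M_{A'}(P')=0$. So the lift contributes nothing.

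If $A\cap y=\varnothing$, the lift contributes $M_A(P')$ and $M_A(\{y\})=0$.

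Combining these cases, $M_A(P\star_yP')=M_A(P)\oplus M_A(P')$ for every $1\le|A|\le3$. This vanishes for $|A|\le2$. For $|A|=3$ the two block sets live on $X$ and $X'$. A triple in both would lie in $X\cap X'$, which has size at most $1$, so the union is disjoint and the block set is $B(P)\sqcup B(P')$.

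\textbf{Cardinality.} The symmetric difference has size
\[
|P|-1+|P'|-2\,\bigl|(P\setminus\{y\})\cap(P'\oplus y)\bigr|.
\]
This is odd because $|P|+|P'|-1$ is odd, so the set is pure-cubic odd. It remains to count the collisions, which are parities $\sigma\cup y\in P$ with $\sigma\in P'$ and $\sigma\neq\varnothing$ (note $\sigma\cup y\neq y$). Every element of $P$ lies inside $X$, so a collision needs $\sigma\subseteq X\cap X'$.

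In case (i), $X\cap X'=\varnothing$, so there are no collisions and the size is $|P|+|P'|-1$.

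In case (ii), $X\cap X'=\{c\}$, so $\sigma=\{c\}$ is the only candidate. It collides exactly because $\{c\}\in P'$ and $y\cup\{c\}\in P$, giving one collision and size $|P|+|P'|-3$.

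\textbf{Remaining checks.} The main obstacle is the case bookkeeping in the moment computation, in particular the mixed sets $A$ that meet $y$ but are not contained in it. The key observation is that these always reduce to moments of order at most $2$ of $P'$, which vanish. For the $\CCZ(B)$ claim I would verify it directly: every singleton lies in $4$ of the seven subsets, every pair in $2$, and the triple in $1$.
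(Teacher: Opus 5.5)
Your proof is correct and follows essentially the paper's route: your lift identity $M_A(P'\oplus y)=M_{A\setminus y}(P')$ is equivalent to the paper's $[A\setminus X'\subseteq y]\cdot M_{A\cap X'}(P')$, the cancellation of the deleted $y$ against $M_\varnothing(P')=1$ is the same use of oddness, and the collision count via $\sigma\subseteq X\cap X'$ is identical. The one step you leave implicit is that in your first two cases $M_A(P')=0$, because $A$ meets $y$ and so $A\not\subseteq X'$; this is what justifies writing the combined result as $M_A(P)\oplus M_A(P')$.
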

\begin{proof}
Since $y\cap X'=\varnothing$, each lifted parity $\sigma\cup y$ is a disjoint union. A block lies inside
the support that carries it and $|X\cap X'|\le1$, so $B(P)$ and $B(P')$ are disjoint. The lift satisfies
$M_A(P'\oplus y)=[\,A\setminus X'\subseteq y\,]\cdot M_{A\cap X'}(P')$, with
$M_{\varnothing}(P'):=|P'|\bmod2$. If $A\cap X'=\varnothing$ the lifted term is $[A\subseteq y]$, which
cancels against the deletion of $y$. This is where oddness of $|P'|$ is used. If $A\cap X'\ne\varnothing$
then the deleted $y$ contributes nothing, since $y$ misses $X'$. If moreover $|A\cap X'|\le2$, which
includes every $A$ with $A\setminus X'\ne\varnothing$, then $M_{A\cap X'}(P')=0$ because $P'$ is
pure-cubic. Otherwise $A\subseteq X'$ and the lifted term is $M_A(P')$. Hence
$M_A(P\star_yP')=M_A(P)+[A\subseteq X']\,M_A(P')$ for $1\le|A|\le3$, which gives pure-cubicity and
the block set $B(P)\sqcup B(P')$. A symmetric difference changes the size by an even number, so
$|P\star_yP'|\equiv|P|-1+|P'|\equiv1\pmod2$.

It remains to count. Distinct $\sigma\in P'$ give distinct lifts, since each $\sigma$ lies in $X'$ and
$y$ misses $X'$. A lift $\sigma\cup y$ can equal a parity of $P$ only if $\sigma\subseteq X$, that is
$\sigma\subseteq X\cap X'$. In case (i) this is impossible, so the symmetric difference is a disjoint union
of size $(|P|-1)+|P'|$. In case (ii) it forces $\sigma=\{c\}$, and the lift $\{c\}\cup y$ equals the parity
$y\cup\{c\}$ of $P$, which differs from the deleted $y$ because $c\notin y$ (as $y$ misses $X'$). That single pair
cancels, and the size is $(|P|-1)+|P'|-2$.
\end{proof}

\begin{proposition}[tree-shaped overlap attains the sharing floor]\label{prop:splicetree}
Let a layer of $m$ blocks with linear overlap have $\eta$ excess incidences. Suppose that for each
shared qubit $q$ there is a tree $F_q$ on the blocks containing $q$, with edges labelled $q$, such
that the union $F$ of these trees is a spanning forest of the block set of maximum degree three.
Splicing along $F$ by Lemma~\ref{lem:splice}, taking case (ii) on each of the $\eta$ edges of $F$
and case (i) on the remaining $m-1-\eta$ splices, yields a representation of weight
\[
7+4\eta+6(m-1-\eta)\ =\ 6m+1-2\eta.
\]
By Lemma~\ref{lem:linearoverlap} this weight is optimal, so $\delta=6m+1-2\eta$.
\end{proposition}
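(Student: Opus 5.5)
The plan is to build the representation one block at a time by induction, adding blocks in an order compatible with the forest $F$. We process each tree component of $F$ by a traversal from a root (say breadth-first). A new block $B'$ is always adjacent in $F$ to exactly one earlier block $B$, through an edge labelled by the shared qubit $c=B\cap B'$. Different components are joined one after another by case (i) splices; there are (number of components)$-1=m-1-\eta$ of these, since $F$ has $\eta$ edges. For the count of edges, the tree $F_q$ on the $k_q$ blocks through $q$ has $k_q-1$ edges, and $\sum_q(k_q-1)=3m-n=\eta$.

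The maintained object is a pure-cubic odd set $P$ with block set equal to the blocks placed so far (Lemma~\ref{lem:splice}). Each new block enters as $P'=\CCZ(B')$. The hypotheses of Lemma~\ref{lem:splice} need checking.
\begin{itemize}
\item \emph{Support overlap.} The support $X$ of $P$ must meet $X'=B'$ in at most one qubit. This uses linearity together with the forest structure: if $B'$ shared two different qubits with earlier blocks, those qubits' trees would give two paths into the current component, closing a cycle in $F$; or $B'$ would meet a block of another component, which is excluded because every shared qubit's tree lies in one component.
\item \emph{Case (i) splices.} Any $y\in P$ works, since $X\cap X'=\varnothing$.
\item \emph{Case (ii) splices.} We need $y\in P$ with $y\cap B'=\varnothing$ and $y\cup\{c\}\in P$. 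Then $\{c\}\in\CCZ(B')$ holds automatically.
\end{itemize}

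The main obstacle is the case (ii) requirement. We track, for each placed block $B$ and each of its qubits $c$, a reserved pair $(y,y\cup\{c\})$ of parities in $P$ with $y$ avoiding every qubit outside the current support. In a fresh $\CCZ(\{a,b,c\})$, the pairs $(\{a\},\{a,c\})$, $(\{b\},\{b,c\})$ and $(\{a,b\},\{a,b,c\})$ exist for each choice of $c$.

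The difficulty is that later splices delete one parity and lift parities by some $y$, which could destroy a reserved pair. The lemma's count shows that a case (ii) splice at $(y,y\cup\{c\})$ deletes $y$ and cancels $y\cup\{c\}$, consuming exactly one pair. A block then has at most three $F$-neighbours, one per qubit. These must be served by pairs associated with distinct qubits that survive one another's splices. For example, $\{a\}$ with $\{a,c\}$, $\{b\}$ with $\{a,b\}$, and $\{c\}$ with $\{b,c\}$ use six distinct parities, none shared.

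We must also check that a splice does not disturb the pairs of other blocks. The lifted parities $\sigma\cup y$ of the new block all contain $y$, and the new block's own pairs can be taken as lifts, $(\sigma\cup y,\sigma\cup\{c'\}\cup y)$. Their $y$-part lies inside the old support, so it still avoids every future block attached through a qubit of $B'$. By the forest and linearity argument, that future block shares nothing else with the support.

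Case (i) splices can be taken at a parity not in any reserved pair. For instance, the first component is finished entirely before components are joined, and the join is made at a leftover parity, or the order is simply reversed so that joins happen at the root block.

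With these hypotheses in place, counting is immediate from Lemma~\ref{lem:splice}. The initial block contributes $7$. Each of the $\eta$ case (ii) splices adds $7-3=4$, and each of the $m-1-\eta$ case (i) splices adds $7-1=6$. The total is $7+4\eta+6(m-1-\eta)=6m+1-2\eta$. The block set is the disjoint union of all blocks, so the set is a residue of $\prod_i\CCZ_{B_i}$ by Fact~\ref{fact:moment}. Optimality then follows from Lemma~\ref{lem:linearoverlap}, since $2n+1=6m+1-2\eta$.
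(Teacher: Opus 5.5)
Your architecture matches the paper's proof. You place blocks parent-before-child, with the components of $F$ one after another. You use linearity and acyclicity of $F$ to show that a child meets the placed support only in its joining qubit $c$. You take the ports of a placed block $K$ to be lifted pairs $(\sigma\cup y_K,\ \sigma\cup\{c\}\cup y_K)$, and you use the count in Lemma~\ref{lem:splice} to see that a splice removes only the two parities of the port it uses. The weight count, the case (i) joins (made after a component is complete, when any parity will do) and the optimality step are fine. The gap is in the step you call the main obstacle, showing that every placed block has enough pairwise disjoint ports. You settle it with the claim that a block has ``at most three $F$-neighbours, one per qubit'' and one cyclic sample assignment. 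Neither is correct.

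First, the hypothesis bounds the total degree in $F$, not the degree through each qubit. In the fan-out with $F_0$ a path, every interior block meets both its parent and its child through qubit $0$. A root at the centre of a star $F_c$ has three children through $c$, which uses up all three ports $(\{a\},\{a,c\})$, $(\{b\},\{b,c\})$, $(\{a,b\},\{a,b,c\})$ of $c$. One reserved pair per qubit cannot serve these cases.

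Second, for a non-root block $K=\{c_K,u,v\}$, the lift $\{c_K\}\cup y_K$ is exactly the parity cancelled by the case (ii) splice that placed $K$. So it is absent from the current set, and no port of $K$ may use it. Taking ``the new block's own pairs'' to be all lifts overlooks this. As a result, your sample $(\{a\},\{a,c\})$, $(\{b\},\{a,b\})$, $(\{c\},\{b,c\})$ fails at a non-root block with children through $u$ and $v$. With $\{c_K\}$ gone, ports starting at singletons would have to be $(\{v\},\{u,v\})$ and $(\{u\},\{u,v\})$, which collide. You need a port starting at a pair, for example $(\{v\},\{u,v\})$ for $u$ and $(\{c_K,u\},\{c_K,u,v\})$ for $v$.

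What is missing is the full case check, each case with pairwise disjoint ports avoiding $\{c_K\}\cup y_K$:
\begin{itemize}
\item at a root, children through $c,c,c$, through $c,c,a$ and through $a,b,c$ (every other pattern is a relabelling or a sub-pattern);
\item at a non-root block, children through $c_K,c_K$, through $u,u$, through $c_K,u$ and through $u,v$ (up to swapping $u$ and $v$).
\end{itemize}
This table is the only place where the degree bound three is used, so without it your induction is not established.
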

\begin{proof}
We first count the weight, assuming that each splice meets the hypotheses of its case. We then show
that a new block meets the placed support only in the qubit that joins it to its parent, and
finally that its parent still has a free pair of parities to splice at.

Root each tree of $F$ at any block. Since $F$ has maximum degree three, the root has at most three
children and every other block at most two. Order the blocks so that the components of $F$ come one after
another and every block follows its parent. Start from $\CCZ(B)$ for the first block and splice in
the others one at a time, taking $P$ to be the current set and $P'=\CCZ(B)$ for the new block $B$,
of support $X'=B$. Lemma~\ref{lem:splice} keeps the current set pure-cubic odd and adds $B$ to its
block set, so the final set has the moments of the layer and represents it
(Fact~\ref{fact:moment}). Case (i) adds $6$ to the weight and case (ii) adds $4$. The tree $F_q$
has one edge fewer than the number of blocks containing $q$, so $F$ has $3m-n=\eta$ edges, and there
are $m-1-\eta$ splices between components. The weight is therefore $7+4\eta+6(m-1-\eta)$, once each splice
is shown to meet the hypotheses of its case.

The first block of a component shares no qubit with the placed blocks, since blocks in different
components are disjoint, so case (i) applies with any placed $y$, which misses $B$. Now let $B$ be
the child of $K$ in $F_c$. Suppose a placed block $B'$ contains a qubit $q\ne c$ of $B$. Then
$B'\ne K$, since $K$ already meets $B$ in $c$ and the overlap is linear. The tree $F_q$ joins $B$ to
$B'$ by a path whose first edge is labelled $q$. The placed blocks of the component form a subtree
of $F$ that contains $K$ and not $B$, so $B$ also reaches $B'$ by a path whose first edge is $BK$,
labelled $c$. These are two different paths in the forest $F$, a contradiction. Hence the placed
support meets $B$ in $\{c\}$ alone, and any placed $y$ with $c\notin y$ misses $B$.

It remains to find $y$ with $c\notin y$ and $y,\ y\cup\{c\}$ both in the current set. For a placed
block $K$, let $y_K$ be the $y$ of the splice that placed it, with $y_K=\varnothing$ for the very
first block. The first block of a later component is placed by case (i), so its $y_K$ is the $y$
used there. For a block $K$ with a parent, let $c_K$ be the qubit joining them. Call a pair
$(\sigma\cup y_K,\ \sigma\cup\{c\}\cup y_K)$ with $\varnothing\ne\sigma\subseteq K\setminus\{c\}$ a
\emph{port} of $K$ for $c$. It is a valid choice of $y$ for a child of $K$ through $c$, since $y_K$
is disjoint from $K$ by the hypothesis of Lemma~\ref{lem:splice}, so $c\notin\sigma\cup y_K$. The
current set satisfies the following invariant for every placed block $K$.
\begin{quote}
($\ast$) Until the component of $K$ is complete, the current set contains $\sigma\cup y_K$ for every
$\sigma\in\CCZ(K)$, except $\sigma=\{c_K\}$ when $K$ has a parent and except the two parities of
each port of $K$ already used by a child.
\end{quote}
Indeed, a later splice deletes only its own $y$ and the parity $y\cup\{c\}$ it cancels, and the ports
used are disjoint. By the count in Lemma~\ref{lem:splice}, the lifts added at each splice differ from
every parity already present except the one cancelled, so parities emitted for different blocks stay
distinct and no splice below another block removes a parity of $K$.

By ($\ast$) it suffices that the children of $K$ can be given disjoint ports that avoid the
parity $\{c_K\}\cup y_K$. A root $K=\{a,b,c\}$ has at most three children and a block
$K=\{c_K,u,v\}$ with a parent has at most two. The table gives, for each pattern of children, the
port assigned to each child in order, written before the lift by $y_K$.
\begin{center}
\small
\begin{tabular}{llll}
\toprule
children through & first child & second child & third child\\
\midrule
\multicolumn{4}{@{}l}{root $K=\{a,b,c\}$}\\
$c,c,c$ & $(\{a\},\{a,c\})$ & $(\{b\},\{b,c\})$ & $(\{a,b\},\{a,b,c\})$\\
$c,c,a$ & $(\{b\},\{b,c\})$ & $(\{a,b\},\{a,b,c\})$ & $(\{c\},\{a,c\})$\\
$a,b,c$ & $(\{c\},\{a,c\})$ & $(\{a\},\{a,b\})$ & $(\{b\},\{b,c\})$\\
\midrule
\multicolumn{4}{@{}l}{non-root $K=\{c_K,u,v\}$}\\
$c_K,c_K$ & $(\{u\},\{u,c_K\})$ & $(\{v\},\{v,c_K\})$ & \\
$u,u$ & $(\{v\},\{u,v\})$ & $(\{c_K,v\},\{c_K,u,v\})$ & \\
$c_K,u$ & $(\{u\},\{u,c_K\})$ & $(\{v\},\{u,v\})$ & \\
$u,v$ & $(\{v\},\{u,v\})$ & $(\{c_K,u\},\{c_K,u,v\})$ & \\
\bottomrule
\end{tabular}
\end{center}
Every other pattern at a root is one of the three root rows, or a subset of one, after relabelling,
and the remaining patterns at a non-root block follow by exchanging $u$ and $v$. In every row the
ports are disjoint and none contains the parity $\{c_K\}$.
\end{proof}

\noindent The degree bound in the hypothesis is used only through the table in the proof, which
handles at most three neighbors per block. With $\eta=0$ and $F$ edgeless the proposition splices the $m$
blocks by case (i) into a representation of weight $7+6(m-1)=6m+1$, the upper bound of
Theorem~\ref{thm:parccz}. It also covers the \emph{chain layer} $B_i=\{c_{i-1},u_i,c_i\}$ on
$n=2m+1$ qubits, the $\CCZ$ layer on the carry triples of a ripple-carry adder (not the adder itself,
which reuses each carry as the next control). Each carry $c_i$ with $1\le i<m$ lies in $B_i$ and $B_{i+1}$ only, so $F$ is the path
$B_1,\dots,B_m$, and the $\CCZ$ layer on these blocks has $T$-count $4m+3$ for every $m$.
Figure~\ref{fig:splice} draws the forest for this chain.

\input{figures/fig9-splice}

The \emph{block-overlap graph} of a layer has the blocks as vertices, two blocks adjacent when they
share a qubit. For a connected component $C$ of it, write $m_C$ for its blocks, $n_C$ for its
qubits and $\eta_C=3m_C-n_C$ for its excess incidences. The edges of $F$ inside $C$ number $\eta_C$.
A connected component always has $\eta_C\ge m_C-1$, so the construction needs $\eta_C=m_C-1$ for
every $C$, and in addition a choice of the trees $F_q$ whose union has maximum degree at most
three. The global inequality $\eta\le m-1$ does not suffice, since a triangle of blocks placed
beside a disjoint fourth block satisfies it while its triangle component does not.

\begin{remark}[the sharing floor on a component with $\eta_C\ge m_C$]\label{rem:ledgerfails}
Once some component has $\eta_C\ge m_C$ no forest exists, and the floor can fail to be attained. The
smallest instance is a triangle of blocks, $\{0,1,2\},\{0,3,4\},\{1,3,5\}$ on $n=6$. The overlap is linear and the radical is
trivial, so Lemma~\ref{lem:linearoverlap} gives $6m+1-2\eta=13$, but exact Reed--Muller coset decoding
returns $\delta=15$. The canonical residue has weight $15$ and is its own coset leader, so the
ceiling of Proposition~\ref{prop:ceiling} certifies the same value from above and the bracket is
$[13,15]$ with $\delta$ at its upper end. Up to relabelling the only other linear covering pattern
at $n\le6$ with $\eta\ge m$, the four blocks $\{0,1,2\},\{0,3,4\},\{1,3,5\},\{2,4,5\}$, also has floor
$13$ and $\delta=15$, while every pattern at $n\le6$ with $\eta\le m-1$ attains the floor.
\end{remark}

When $\eta_C=m_C-1$, any choice of the trees $F_q$ inside $C$ has union a spanning tree of the
blocks of $C$, so for $m_C\le4$ the degree condition holds automatically. In general it is not
implied by $\eta_C=m_C-1$. Suppose one qubit lies in three blocks and each other qubit of those
blocks lies in one further block ($m=9$, $n=19$). Then $\eta_C=m_C-1$, but every choice of the
trees $F_q$ gives some block four neighbors in $F$.

Remark~\ref{rem:ledgerfails} and a search of small covering patterns (in the supplementary
notes~\cite{IsofloorSupp}) support the following dichotomy.
\begin{conjecture}[dichotomy for the sharing floor]\label{conj:ledger}
A linear-overlap layer attains $6m+1-2\eta$ if and only if every connected component of its
block-overlap graph satisfies $\eta_C=m_C-1$.
\end{conjecture}
\noindent The example with $m=9$ above shows that the routing of
Proposition~\ref{prop:splicetree} does not prove the ``if'' direction in general.

\subsection{Controlled unitaries}\label{sec:controlled}

Campbell and Howard determined the $T$-count of every controlled diagonal Clifford
exactly~\cite[Thm.~5]{CampbellHowardPRA}. We compare the floor with their count.

Write the target as the square of a diagonal level-three gate $U_t$, as in their theorem. The
quantity $\mu$ below counts the $T$ gates that $U_t$ needs once its $\CCZ$ content is free. It
depends only on the order-one and order-two phases of $U_t$, which are the only phases of $U_t$
that survive in $U_t^2$.

\begin{proposition}[the floor on controlled unitaries]\label{prop:controlled}
Let $U=\ketbra0\otimes I+\ketbra1\otimes U_t^2$, controlled on a control qubit $h$, with $U_t$ a
diagonal level-three gate on the other $n-1$ qubits. Then $U_t^2$ is a diagonal Clifford with
trivial global phase. Let $\mu$ be the least $T$-count of $U_t$ up to $\CCZ$ gates. Suppose $U$ is not Clifford. If $U$ is pure-cubic, its isotropy floor is $2\mu+1$.
Otherwise it is $2\mu$.
\end{proposition}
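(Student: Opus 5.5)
The plan is to compute the moments of $U$ explicitly from the $2$-adic grading of Remark~\ref{rem:grading}, read off $d^{\ast}$ (equivalently $\nu$, via Proposition~\ref{prop:nullitygen}) and $r$, and then compare the result with Lempel's formula for $\mu$.

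\emph{Step 1: the phase of $U$.} Write the phase of $U_t$ as $\psi\equiv4g_3+2g_2+g_1\pmod 8$. Then $2\psi\equiv4g_2+2g_1$, so $U_t^2$ is a diagonal Clifford. Evaluating at $x=0$ gives $\psi(0)=0$, so its global phase is trivial. The phase of $U$ is therefore
\[
\phi\equiv x_h\,(4g_2+2g_1)\pmod 8 .
\]
Expand $2x_hg_1$ with the carry identity, $2\sum u_i\equiv2\bigoplus u_i+4\sum_{i<j}u_iu_j$. This gives:
\begin{itemize}
\item cubic part $x_h\bigl(g_2+e_2(g_1)\bigr)$, where $e_2(g_1)$ is the sum of $x_ix_j$ over pairs in the support of $g_1$;
\item quadratic part $x_h\,\ell$, where $\ell$ is the linear function $g_1$ read over $\F_2$;
\item no linear part.
\end{itemize}
So $U$ is pure-cubic exactly when $\ell=0$, that is, when $U_t$ has no odd linear phase.

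\emph{Step 2: the invariants $Q$, $r$ and $R^{\ast}$.} Collect the order-one and order-two data of $U_t$ into the symmetric $(n-1)\times(n-1)$ matrix $B$: its off-diagonal entries are the coefficients of $g_2$ and its diagonal is $\ell$. This $B$ is exactly the matrix whose Lempel factorization Campbell and Howard use.
\begin{itemize}
\item $Q$ is supported on row and column $h$, with $Q_{hi}=\ell_i$. Hence $r=2$ when $\ell\ne0$ and $r=0$ when $\ell=0$. (The $h$-diagonal entry vanishes, because $\phi$ has no linear part.)
\item All cubic moments contain $h$. The slice condition $w\in R^{\ast}$ therefore splits in two. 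At index pairs $(p,q)$ avoiding $h$, it forces $w_h=0$ unless the cubic part vanishes, which is excluded because $U$ is not Clifford. At pairs $(h,q)$, it asks that $w$ annihilate the symmetric matrix $B'$ of the cubic form $g_2+e_2(g_1)$, together with $\ell\cdot w=0$ from $Q$.
\item Combine these conditions into one matrix $B''$ on $n-1$ qubits: the off-diagonal part of $B'$ plus a diagonal recording $\ell$. Then $\nu=n-d^{\ast}=1+\rank B''$.
\end{itemize}

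\emph{Step 3: matching with $\mu$.} Lempel's theorem gives $\mu$ as the least $t$ with $Y^{\mathsf T}Y=B$: this is $\rank B$ when the diagonal of $B$ is nonzero, and $\rank B+1$ when $B$ is alternating. The target identities are $\nu=\mu$ in the pure-cubic case and $2\nu-r=2\mu$, i.e.\ $\nu=\mu+1$, when $\ell\ne0$.
\begin{itemize}
\item Pure-cubic case ($\ell=0$): $B=B''$ is alternating and nonzero, so $\mu=\rank B+1=\nu$. The floor $2\nu+1=2\mu+1$ then follows from Corollary~\ref{cor:cubicplusone}.
\item Case $\ell\ne0$: the floor is $2\nu-2$ by Theorem~\ref{thm:genLB}, and I need $\rank B''=\rank B$, which holds since $B''$ and $B$ should agree up to the congruence induced by the carry terms $e_2(g_1)$.
\end{itemize}

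The main obstacle is this last identification. The carry terms $e_2(g_1)$ alter the off-diagonal entries: they add $\ell\ell^{\mathsf T}$ off the diagonal. Adding the diagonal $\ell$ turns this into adding the full rank-one matrix $\ell\ell^{\mathsf T}$ to $B$, and that need not preserve rank. I would first attempt to absorb $\ell\ell^{\mathsf T}$ by a CNOT change of basis involving the control $h$ (Lemma~\ref{lem:biinv}). Failing that, I would compute $\nu$ directly from the shift group $\mathcal A$ of Proposition~\ref{prop:nullitygen}, using the phase $x_h(4g_2+2g_1)$, where the shift condition reads cleanly in terms of $B$. I would check the bookkeeping on $\CCZ$ ($\mu=3$, floor $7$) and on a controlled-$S$-type example before committing.
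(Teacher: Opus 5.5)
There is a genuine gap in Step~3: you apply Lempel's theorem to the wrong matrix.

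\textbf{Which matrix $\mu$ depends on.} Multiplying by $\CCZ$ gates changes only order-three moments, and every order-three moment vector can be reached this way. So $\mu$ is the least number of parities matching the order-$\le2$ moments of $U_t$. By Fact~\ref{fact:moment} and \eqref{eq:cols}, this is the least $t$ with $Y^{\mathsf T}Y$ equal to the order-$\le2$ moment matrix of $U_t$, and that is the only matrix Lempel may be applied to.

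\textbf{Your $B$ is not that matrix.} You applied the carry identity when reading the moments of $U$; that is where $e_2(g_1)$ came from. You did not apply it when reading the moments of $U_t$. With $g_1=\bigoplus_i\ell_ix_i$, the integer coefficient of $x_ix_j$ in $\psi=g_1+2g_2+4g_3$ is $2(b_{ij}-\ell_i\ell_j)$ modulo $4$, where $b_{ij}$ is the $\F_2$ coefficient of $x_ix_j$ in $g_2$. Hence $M_{\{i,j\}}(U_t)=b_{ij}+\ell_i\ell_j$, while $M_{\{i\}}(U_t)=\ell_i$. The moment matrix of $U_t$ is therefore your $B''$, not $B$, contrary to your identification of $B$ as the Lempel matrix.

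\textbf{Your target identity is false.} The goal $\rank B''=\rank B$ fails in general, so neither a \CNOT{} change of basis nor the shift group can supply it. Take $U_t=T\otimes T$, with $\psi=x_1+x_2$, $g_1=x_1\oplus x_2$ and $g_2=x_1x_2$.
\begin{itemize}
\item $B=\begin{pmatrix}1&1\\1&1\end{pmatrix}$ has rank $1$, while $B''=I_2$ has rank $2$.
\item The true value is $\mu=2$: the parities $\{1\},\{2\}$ work, and no single parity has the right moments.
\item Then $\nu=3$, $r=2$ and the floor is $4=2\mu$, whereas Lempel on $B$ would give $\mu=1$.
\end{itemize}
Your planned sanity check on a single controlled-$S$ would not expose this, since there $B=B''$.

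\textbf{How to repair it.} Once $B''$ is recognized as the moment matrix of $U_t$, your outline closes at once. For $\ell\ne0$ the diagonal of $B''$ is $\ell\ne0$, so Lempel gives $\mu=\rank B''=\nu-1$ and the floor is $2\nu-2=2\mu$. The pure-cubic case is unchanged. This is the paper's route. It reads moments directly from integer monomial coefficients, $M_A=(\phi_A/2^{|A|-1})\bmod 2$. The cubic moments $\beta_{ij}=g_{ij}/2\bmod 2$ of $U$ are then visibly the order-two moments of $U_t$, and no carry bookkeeping arises.

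Two smaller points in Step~2 also need fixing.
\begin{itemize}
\item The deduction $w_h=0$ cannot rest on the cubic part being nonzero, since $U$ can be non-Clifford with no cubic moments at all (controlled-$S$). Use the coincident-index conditions too. Together they give $w_h\,B''_{pq}=0$ for all $p,q\ne h$, and $B''\ne0$ exactly when $U$ is not Clifford.
\item To get $d^{\ast}=\dim\ker B''$, you must show that the $(h,h)$ condition $\ell\cdot w'=0$ follows from $B''w'=0$. It does, because $x^{\mathsf T}B''x=\ell\cdot x$ over $\F_2$.
\end{itemize}
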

\begin{proof}
We compute the moments of $U$ and express $\mu$ through Lempel's theorem by the rank of the matrix
$\Gamma$ of order-$\le2$ moments of $U_t$. The radical $R^{\ast}$, which gives the floor, then follows
from $\Gamma$.

The normalization of $U_t^2$ is that of~\cite[Thm.~5]{CampbellHowardPRA}. Over $\Z$, the parity
$y\cdot x\in\{0,1\}$ equals $\sum_{\varnothing\ne A\subseteq y}(-2)^{|A|-1}\prod_{a\in A}x_a$, by
induction from $u\oplus v=u+v-2uv$. If $\phi_A$ is the coefficient of $\prod_{a\in A}x_a$ in a phase
polynomial $\sum_y\alpha_y\,(y\cdot x)$, then $\phi_A\equiv(-2)^{|A|-1}\sum_{y\supseteq A}\alpha_y\pmod8$
for $|A|\le3$, and the moment of the gate is $M_A=(\phi_A/2^{|A|-1})\bmod2$.

The phase of $U_t$ is $g=\sum_ig_ix_i+\sum_{i<j}g_{ij}x_ix_j+\sum_{i<j<k}g_{ijk}x_ix_jx_k$ with every
$g_{ij}$ even and every $g_{ijk}\equiv0\pmod4$, so $U_t^2$ has phase
$2\sum_i\lambda_ix_i+4\sum_{i<j}\beta_{ij}x_ix_j$ with $\lambda_i=g_i\bmod4$ and
$\beta_{ij}=g_{ij}/2\bmod2$. The phase of $U$ is therefore
\[
\phi\ =\ x_h\Bigl(2\sum_i\lambda_ix_i+4\sum_{i<j}\beta_{ij}x_ix_j\Bigr),
\]
and its only nonzero moments are $M_{\{h,i\}}=\theta_i:=\lambda_i\bmod2$ and
$M_{\{h,i,j\}}=\beta_{ij}$. Let $\Gamma$ be the symmetric $(n-1)\times(n-1)$ matrix with $\Gamma_{ii}=\theta_i$ and
$\Gamma_{ij}=\beta_{ij}$. It is the matrix of order-$\le2$ moments of $U_t$, and $\Gamma\ne0$ because $U$ is
not Clifford.

A product of $\CCZ$ gates changes only order-$3$ moments, and every order-$3$ moment vector is that of
such a product (Fact~\ref{fact:moment}). Hence $\mu$ is the least number of parities of $\F_2^{n-1}$ with
order-$\le2$ moments $\Gamma$, that is, the least number of rows of a matrix $Y$ with $Y^\top Y=\Gamma$ in the convention of
Appendix~\ref{sec:lb}, the quantity $\mu[U_t]$ of~\cite{CampbellHowardPRA}. By
Lempel's theorem~\cite{Lempel1975} this is $\rank\Gamma$ when $\Gamma$ has a nonzero diagonal entry and
$\rank\Gamma+1$ otherwise. When $U$ is pure-cubic every $\theta_i$ vanishes, so $\Gamma$ is alternating, its rank
is even and $\mu$ is odd. The floor $2\mu+1$ is then consistent with the exact count
of~\cite[Thm.~5]{CampbellHowardPRA}.

A vector $w=(w',w_h)$ lies in $R^{\ast}$ when $\sum_bw_b\,M_{\{b,p,q\}}=0$ for all $p,q$. Taking
$p,q\ne h$ gives $w_h\Gamma_{pq}=0$, so $w_h=0$. Taking $p=h$ and $q=j\ne h$ then gives $(\Gamma w')_j=0$. Taking
$p=q=h$ gives $\theta\cdot w'=0$, which follows from $\Gamma w'=0$, since $\theta\cdot x=x^\top\Gamma x=0$ for every
$x\in\ker\Gamma$. Hence $R^{\ast}=\ker\Gamma\times\{0\}$ and $n-d^{\ast}=\rank\Gamma+1$.

If $\theta\ne0$, the gate is not pure-cubic. Its quadratic moment matrix is then nonzero only in row and
column $h$, where it equals $\theta$. Hence $r=2$, and the floor is $2(\rank\Gamma+1)-2=2\rank\Gamma=2\mu$. If $\theta=0$,
the gate is pure-cubic with $R=R^{\ast}$, so by Corollary~\ref{cor:cubicplusone} its floor is $2(n-d)+1=2\rank\Gamma+3$. Then $\Gamma$ has zero
diagonal and $\mu=\rank\Gamma+1$, so the floor is $2\mu+1$.
\end{proof}

\noindent Campbell and Howard's controlled-unitary theorem~\cite[Thm.~5]{CampbellHowardPRA} gives the
exact count $2\mu$ when $\mu$ is even and
$2\mu+1$ when $\mu$ is odd. The floor is
therefore exact on every pure-cubic controlled unitary, the fan-out included, and on every
controlled unitary with $\mu$ even. It is one below the exact count otherwise. The missing unit is the parity step of their converse~\cite[App.~C]{CampbellHowardPRA}.

%% file: figures/fig6-layer.tex
\begin{figure}[t]
\centering

\newcommand{\cn}[3]{%
  \draw[black!70] (#1,#2) -- (#1,#3);
  \node[dot] at (#1,#2) {};
  \node[targ] at (#1,#3) {};}
\newcommand{\tg}[4]{\node[#1gate] at (#2,#3) {#4};}
\tikzset{
  livegate/.style={draw,fill=white,rounded corners=1pt,minimum width=3.7mm,
                   minimum height=3.2mm,inner sep=0.5pt,font=\tiny},
  gonegate/.style={draw=black!35,fill=white,rounded corners=1pt,minimum width=3.7mm,
                   minimum height=3.2mm,inner sep=0.5pt,font=\tiny,text=black!35},
  dot/.style={circle,fill,inner sep=1.0pt},
  targ/.style={circle,draw,minimum size=2.3mm,inner sep=0pt,path picture={
      \draw (path picture bounding box.west)--(path picture bounding box.east);
      \draw (path picture bounding box.north)--(path picture bounding box.south);}},
  wire/.style={black!45,line width=0.4pt},
  wlab/.style={font=\tiny,left,inner sep=1.2pt},
  span/.style={font=\tiny,text=black!60},
  par/.style={font=\tiny,text=black!60,inner sep=0.5pt},
  hdr/.style={font=\scriptsize,anchor=west}
}

\begin{tikzpicture}[x=4.4mm,y=-4.4mm]
  \node[hdr] at (-0.6,-2.0) {(a) disjoint blocks $\{a,b,c\}$ and $\{d,e,f\}$, spliced at $y=\{a\}$:
    $6+7=13$ $T$s};
  \foreach \r/\lab in {0/{a},1/{b},2/{c},3/{d},4/{e},5/{f}} {
    \draw[wire] (0.4,\r) -- (30.6,\r);
    \node[wlab] at (0.4,\r) {$\lab$};}
  \tg{gone}{1}{0}{$T$}
  \draw[black!45,line width=0.5pt] (0.62,-0.4) -- (1.38,0.4);
  \tg{live}{1}{1}{$T$}
  \tg{live}{1}{2}{$T$}
  \cn{2}{0}{1}
  \tg{live}{3}{1}{$T^{\dagger}$}
  \cn{4}{0}{1}
  \cn{5}{1}{2}
  \tg{live}{6}{2}{$T^{\dagger}$}
  \cn{7}{1}{2}
  \cn{8}{0}{2}
  \tg{live}{9}{2}{$T^{\dagger}$}
  \cn{10}{0}{2}
  \cn{11}{0}{1}
  \cn{12}{1}{2}
  \tg{live}{13}{2}{$T$}
  \cn{14}{1}{2}
  \cn{15}{0}{1}
  \cn{16}{3}{0}
  \tg{live}{17}{0}{$T$}
  \cn{18}{4}{0}
  \tg{live}{19}{0}{$T^{\dagger}$}
  \cn{20}{3}{0}
  \tg{live}{21}{0}{$T$}
  \cn{22}{5}{0}
  \tg{live}{23}{0}{$T^{\dagger}$}
  \cn{24}{3}{0}
  \tg{live}{25}{0}{$T$}
  \cn{26}{4}{0}
  \tg{live}{27}{0}{$T^{\dagger}$}
  \cn{28}{3}{0}
  \tg{live}{29}{0}{$T$}
  \cn{30}{5}{0}
  \foreach \c/\s in {17/d,19/de,21/e,23/ef,25/def,27/df,29/f}
    \node[par] at (\c,-0.75) {$\s$};
  \draw[black!45] (0.6,5.75) -- (15.4,5.75);
  \draw[black!45] (15.6,5.75) -- (30.4,5.75);
  \node[span] at (8,6.35) {block $\{a,b,c\}$ without the $T$ on $x_a$};
  \node[span] at (23,6.35) {block $\{d,e,f\}$ lifted by $a$, all on wire $a$};
\end{tikzpicture}

\vspace{2mm}

\begin{tikzpicture}[x=4.4mm,y=-4.4mm]
  \node[hdr] at (-0.6,-2.0) {(b) blocks $\{a,b,c\}$ and $\{c,d,e\}$ sharing $c$, spliced at $y=\{a\}$:
    $5+6=11$ $T$s};
  \foreach \r/\lab in {0/{a},1/{b},2/{c},3/{d},4/{e}} {
    \draw[wire] (0.4,\r) -- (26.6,\r);
    \node[wlab] at (0.4,\r) {$\lab$};}
  \tg{gone}{1}{0}{$T$}
  \draw[black!45,line width=0.5pt] (0.62,-0.4) -- (1.38,0.4);
  \tg{live}{1}{1}{$T$}
  \tg{live}{1}{2}{$T$}
  \cn{2}{0}{1}
  \tg{live}{3}{1}{$T^{\dagger}$}
  \cn{4}{0}{1}
  \cn{5}{1}{2}
  \tg{live}{6}{2}{$T^{\dagger}$}
  \cn{7}{1}{2}
  \cn{8}{0}{1}
  \cn{9}{1}{2}
  \tg{live}{10}{2}{$T$}
  \cn{11}{1}{2}
  \cn{12}{0}{1}
  \cn{13}{2}{0}
  \cn{14}{3}{0}
  \tg{live}{15}{0}{$T^{\dagger}$}
  \cn{16}{2}{0}
  \tg{live}{17}{0}{$T$}
  \cn{18}{4}{0}
  \tg{live}{19}{0}{$T^{\dagger}$}
  \cn{20}{2}{0}
  \tg{live}{21}{0}{$T$}
  \cn{22}{3}{0}
  \tg{live}{23}{0}{$T^{\dagger}$}
  \cn{24}{2}{0}
  \tg{live}{25}{0}{$T$}
  \cn{26}{4}{0}
  \foreach \c/\s in {15/cd,17/d,19/de,21/cde,23/ce,25/e}
    \node[par] at (\c,-0.75) {$\s$};
  \draw[black!45] (0.6,4.75) -- (12.4,4.75);
  \draw[black!45] (12.6,4.75) -- (26.4,4.75);
  \node[span] at (6.5,5.35) {block $\{a,b,c\}$ without $x_a$ and $x_a\oplus x_c$};
  \node[span] at (19.5,5.35) {block $\{c,d,e\}$ lifted by $a$, without $x_a\oplus x_c$};
\end{tikzpicture}

\caption{Splicing two $\CCZ$ blocks (Lemma~\ref{lem:splice}). The struck $T$ on wire $a$ is
deleted. Each $T$ of the second block acts on wire $a$ while it holds $x_a\oplus(\sigma\cdot x)$,
for the parity $\sigma$ printed above it. (a)~Disjoint blocks. The seven lifted $T$s have
coefficients summing to $1$, as for any $\CCZ$ \eqref{eq:cczid}, so together they restore the
deleted phase on $x_a$, and the pair costs $13$ rather than $14$. This is the $6m+1$ construction of
Theorem~\ref{thm:parccz} at $m=2$. (b)~Blocks sharing qubit $c$. The lift of the parity $\{c\}$
would be $x_a\oplus x_c$, which the first block already carries with the opposite sign, so both
$T$s on that parity cancel and are left out. The remaining $11$ $T$s meet the floor $2n+1$ of
Lemma~\ref{lem:linearoverlap}. Both circuits restore their wires and implement the product of the
two $\CCZ$ gates exactly.}
\label{fig:layer}\label{fig:merge}
\end{figure}

%% file: figures/fig2-overlap.tex
\begin{figure}[t]
\centering
\begin{tikzpicture}[
  >={Stealth[scale=1.0]},
  blk/.style={draw,rounded corners=2pt,minimum width=15mm,minimum height=6mm,
              font=\scriptsize,inner sep=2pt},
  pcap/.style={align=center,font=\small},
  sub/.style={align=center,font=\scriptsize,text=black!60},
  dial/.style={font=\scriptsize,text=black!55}
]

\begin{scope}[xshift=0mm]
  \node[blk] (a1) at (0,0.7)  {$\CCZ_{123}$};
  \node[blk] (a2) at (0,0)    {$\CCZ_{456}$};
  \node[blk] (a3) at (0,-0.7) {$\CCZ_{789}$};
  \node[pcap] at (0,-1.6) {disjoint};
  \node[sub] at (0,-2.15) {$d=0$,\ \ $n=3m$};
  \node[pcap] at (0,-2.9) {$\boldsymbol{6m+1}$};
\end{scope}

\begin{scope}[xshift=42mm]
  \node[blk] (b1) at (0,0.7)  {$a_1b_1$};
  \node[blk] (b2) at (0,0)    {$a_2b_2$};
  \node[font=\scriptsize] at (0,-0.5) {$\vdots$};
  \node[blk] (b3) at (0,-1.05) {$a_mb_m$};
  \node[blk,minimum width=11mm] (hh) at (-17mm,0) {control $0$};
  \foreach \b in {b1,b2,b3} \draw[black!55] (hh) -- (\b);
  \node[pcap] at (0,-1.75) {one shared control};
  \node[sub] at (0,-2.3) {$d=0$,\ \ $n=2m+1$};
  \node[pcap] at (0,-3.0) {$\boldsymbol{4m+3}$};
\end{scope}

\begin{scope}[xshift=88mm]
  \node[blk] (p) at (0,0.35) {shared pair $ab$};
  \node[blk] (c1) at (0,-0.35) {$c_1,\dots,c_m$};
  \draw[black!55] (p) -- (c1);
  \node[pcap] at (0,-1.6) {shared pair};
  \node[sub] at (0,-2.15) {$d=n-3$};
  \node[pcap] at (0,-2.9) {$\boldsymbol{7}$};
\end{scope}

\draw[->,black!50] (16mm,-2.9) -- (26mm,-2.9);
\draw[->,black!50] (60mm,-2.9) -- (70mm,-2.9);
\node[dial] at (21mm,-3.35) {more sharing};
\node[dial] at (65mm,-3.35) {more sharing};

\node[dial,align=center] at (44mm,-4.1)
  {$d=0$: the cost scales with $m$; $d=n-3$: the layer collapses to one gate};

\end{tikzpicture}
\caption{Sharing qubits interpolates the cost of a $\CCZ$ layer. Disjoint blocks give $d=0$ on $n=3m$ qubits and cost
$6m+1$ (Theorem~\ref{thm:parccz}). Fanning every block out of one shared control keeps $d=0$ but
drops $n$ to $2m+1$, so the cost falls to exactly $4m+3$ (Theorem~\ref{thm:fanout}). Each
single-qubit merge lowers the floor by two (Lemma~\ref{lem:linearoverlap}), and the $T$-count follows it
when the shared qubits route along a forest (Proposition~\ref{prop:splicetree}). Sharing a full pair
sends $d$ to $n-3$, and the layer is Clifford-equivalent to a single gate, $\delta=7$, for every $m$
(Remark~\ref{rem:twoshared}).}
\label{fig:overlap}
\end{figure}

%% file: figures/fig9-splice.tex
\begin{figure}[t]
\centering
\begin{tikzpicture}[x=1mm,y=1mm,
  q/.style={circle,fill,inner sep=0.9pt},
  qlab/.style={font=\scriptsize,inner sep=1pt},
  blk/.style={fill=black!8,draw=black!55,line join=round},
  bname/.style={font=\scriptsize},
  fn/.style={draw,rounded corners=2pt,font=\scriptsize,minimum width=7mm,minimum height=4.5mm,
             inner sep=1pt,fill=white},
  elab/.style={font=\tiny,text=black!70},
  note/.style={font=\scriptsize,text=black!60}
]
  \foreach \i/\x in {1/0,2/20,3/40,4/60} {
    \filldraw[blk] (\x,0) -- (\x+10,9) -- (\x+20,0) -- cycle;
    \node[bname] at (\x+10,3) {$B_{\i}$};
  }
  \filldraw[blk] (95,0) -- (105,9) -- (115,0) -- cycle;
  \node[bname] at (105,3) {$B_5$};
  \foreach \i/\x in {0/0,1/20,2/40,3/60,4/80} {
    \node[q] at (\x,0) {};
    \node[qlab,below] at (\x,0) {$c_{\i}$};
  }
  \foreach \i/\x in {1/10,2/30,3/50,4/70} {
    \node[q] at (\x,9) {};
    \node[qlab,above] at (\x,9) {$u_{\i}$};
  }
  \foreach \l/\x/\y in {p/95/0,q/105/9,r/115/0} {
    \node[q] at (\x,\y) {};
  }
  \node[qlab,below] at (95,0) {$p$};
  \node[qlab,above] at (105,9) {$q$};
  \node[qlab,below] at (115,0) {$r$};

  \foreach \i/\x in {1/10,2/30,3/50,4/70} \node[fn] (F\i) at (\x,-17) {$B_{\i}$};
  \node[fn] (F5) at (105,-17) {$B_5$};
  \foreach \a/\b/\c in {1/2/1,2/3/2,3/4/3} {
    \draw[thick] (F\a) -- node[elab,above] {$c_{\c}$} node[elab,below] {$+4$} (F\b);
  }
  \draw[dashed,black!60] (F4) -- node[elab,above] {joins components} node[elab,below] {$+6$}
    (F5);
  \node[elab,below=0.6mm of F1] {start, $7$};
  \node[note] at (57,-28) {weight $7+3\cdot4+6=25=6m+1-2\eta$ with $m=5$ and $\eta=3$};
\end{tikzpicture}
\caption{The splice forest of Proposition~\ref{prop:splicetree} on a chain layer of four
blocks $B_i=\{c_{i-1},u_i,c_i\}$, together with a disjoint fifth block. Each block is drawn as a
triangle on its three qubits. Below, the forest $F$ has one edge per shared carry, labelled by that
carry. Splicing a child along an edge of $F$ is case (ii) of Lemma~\ref{lem:splice} and adds $4$
$T$s. A block that starts a new component is spliced by case (i) and adds $6$. The first block
costs $7$.}
\label{fig:splice}
\end{figure}

%% file: appendices/D-applications.tex
\section{Applications}\label{app:applications}

This appendix applies the floor to the multiplication oracles of $\mathrm{GF}(2^k)$ and then
certifies the output of two circuit optimizers block by block.

\subsection{Field multiplication}\label{sec:fieldoracle}

Multiplication in $\mathrm{GF}(2^k)$ is the inner loop of binary-field elliptic-curve
arithmetic~\cite{BanegasBinaryEC}. Its phase oracle is a trilinear form, so the floor applies directly. Throughout this subsection $x$, $y$
and $z$ are the three $k$-qubit registers of the oracle, so here $y$ is a register and not a parity
$y_j$ of a representation.

\begin{proposition}[floor for field-multiplication oracles]\label{prop:fieldoracle}
Let $U_k\ket{x,y,z}=(-1)^{z\cdot(x\ast y)}\ket{x,y,z}$ be the multiplication phase oracle of
$\mathrm{GF}(2^k)$ on $n=3k$ qubits ($\ast$ the field product, $\cdot$ the coordinate dot).
Then $U_k$ is pure-cubic with a non-degenerate tensor, and $\delta(U_k)\ge6k+1$.
\end{proposition}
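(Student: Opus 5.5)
The plan is to compute the phase polynomial of $U_k$, read off its moments, and then apply Theorem~\ref{thm:cubicLB} with $d=0$. Write the field product in a fixed basis as $(x\ast y)_l=\bigoplus_{i,j}\gamma^l_{ij}x_iy_j$, where $\gamma^l_{ij}\in\F_2$ are the structure constants. The phase is then $4f$ with
\[
f(x,y,z)=\bigoplus_{i,j,l}\gamma^l_{ij}\,x_iy_jz_l .
\]
Every monomial of $f$ takes one variable from each register, so $f$ is a trilinear cubic form with no lower-degree terms. Pass to a residue, for instance the canonical one of Proposition~\ref{prop:ceiling}, or the $\CCZ$-style inclusion--exclusion applied to each monomial. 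The phase $4x_ay_bz_c$ then contributes only to the order-three moment $M_{\{a,b,c\}}$. The reason is that $4u\bmod 8$ depends only on $u\bmod2$, so the monomials add over $\F_2$. Every order-one and order-two moment is a sum of copies of the vanishing lower moments of a single $\CCZ$, so the gate is pure-cubic, and $T_{x_i,y_j,z_l}=\gamma^l_{ij}$, with all other entries zero.

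Next we show that the radical is trivial. Take $w=(u,v,s)\in R$, with components in the $x$, $y$ and $z$ registers. We evaluate the slice $T(w)$ on pairs drawn from two different registers.
\begin{itemize}
\item On the pair $(y_j,z_l)$ the slice gives $\sum_i u_i\gamma^l_{ij}=(u\ast e_j)_l$. Vanishing for all $j,l$ means $u\ast e_j=0$ for every basis vector $e_j$, hence $u\ast b=0$ for all $b$. Taking $b=1$ gives $u=0$.
\item The pair $(x_i,z_l)$ gives $e_i\ast v=0$ for all $i$, so $v=0$ in the same way.
\item The pair $(x_i,y_j)$ gives $\sum_l s_l\gamma^l_{ij}=0$, that is, $s\cdot(a\ast b)=0$ for all $a,b$. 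Choosing $a=1$ shows $s$ is orthogonal to every field element under the coordinate dot product, so $s=0$.
\end{itemize}
Hence $R=\{0\}$ and $d=0$. This step is where the field structure enters, through the absence of zero divisors and surjectivity of multiplication by the identity. Non-degeneracy of the coordinate dot product on $\F_2^k$ then makes the contraction against $s$ injective.

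Finally, Theorem~\ref{thm:cubicLB} with $n=3k$ and $d=0$ gives $\delta(U_k)\ge 2n+1=6k+1$.

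I expect the main obstacle to be the first step, specifically making rigorous that the moments of $U_k$ are exactly the structure constants with vanishing lower moments. The cleanest route is to cite Fact~\ref{fact:moment} together with additivity of moments under products of gates: $U_k$ is a product of $\CCZ$ gates on the triples $(x_i,y_j,z_l)$ with $\gamma^l_{ij}=1$, and the representations of these factors concatenate. Each factor has moments $M_A=[A=\{x_i,y_j,z_l\}]$ for $1\le|A|\le3$, and moments add mod $2$ under symmetric difference of residues. The rest is linear algebra.
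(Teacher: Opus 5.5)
Your proposal is correct and takes essentially the same route as the paper: you read off the trilinear tensor $T_{x_iy_jz_l}=\gamma^l_{ij}$, eliminate each register's component of a radical vector by contracting against pairs from the other two registers, and apply Theorem~\ref{thm:cubicLB} with $d=0$. The differences are minor. You use the unit element ($b=1$, resp.\ $a=1$) where the paper uses the absence of zero divisors and the invertibility of multiplication by $e_r$. You also derive pure-cubicity from the product-of-\CCZ{} decomposition, which the paper simply asserts from the form being homogeneous cubic.
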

\begin{proof}
The phase is the trilinear form $z\cdot(x\ast y)$, homogeneous cubic, so the gate is pure-cubic
with $T_{x_qy_rz_l}=[\,\text{bit } l\text{ of } e_q\ast e_r\,]$ and no other entries across
registers. Let $w=(w_x,w_y,w_z)\in R$. Contracting against the pairs $(y_r,z_l)$ gives
$0=\sum_q(w_x)_q\,T_{x_qy_rz_l}=\text{bit } l\text{ of }(w_x\ast e_r)$ for all $r,l$, so
$w_x\ast e_r=0$ for every basis element. A field has no zero divisors, so $w_x=0$. Symmetrically
$w_y=0$, and contracting against $(x_q,y_r)$ gives $w_z\cdot(e_q\ast e_r)=0$ with the products
spanning $\mathrm{GF}(2^k)$ (for fixed $r$ the products $e_q\ast e_r$ form a basis, as
multiplication by $e_r\ne0$ is invertible), so $w_z=0$. Hence $d=0$ and
Theorem~\ref{thm:cubicLB} gives $\delta\ge2n+1=6k+1$.
\end{proof}

\begin{remark}[the $\mathrm{GF}(4)$ multiplier]\label{rem:gf4}
At $k=1$ the oracle is $\CCZ$ and the floor $7$ is tight. At $k=2$ the lower bounds are the
nullity $6$ and the isotropy floor $6k+1=13$. Exact Reed--Muller coset decoding shows that the merged
phase-polynomial residue of weight $17$ is its own coset leader, so $\delta(U_2)=17$ and the
term-by-term expansion is optimal once parities are merged. The unmerged schoolbook expansion has
weight $35$.

Since $\delta(U)$ is the symmetric rank of the signature tensor (Appendix~\ref{sec:prelim}), the
value $17$ was already established by AlphaTensor-Quantum~\cite{AlphaTensorQuantum}, whose Z3
search finds no symmetric decomposition of that tensor with fewer terms. The coset decoding
reproduces it independently. The value certifies published circuits optimal within a natural
class. The reversible multiplier $\ket{x,y,z}\mapsto\ket{x,y,z\oplus x{*}y}$ is $U_k$ conjugated by
Hadamards on the output register, so no multiplier circuit whose Hadamards all lie in those two
layers uses fewer than $\delta(U_k)$ $T$ gates. The best published $T$-count
for the $\mathrm{GF}(4)$ multiplier is $17$, both from a subquadratic multiplier with the
\textsc{FastTODD} optimizer~\cite{Vandaele} and from AlphaTensor-Quantum~\cite{AlphaTensorQuantum}.
\end{remark}

\begin{remark}[linear growth of field-oracle $T$-counts]\label{rem:fieldlinear}
The floor $6k+1$ is linear while the explicit circuits are not: schoolbook
multiplication uses $\Theta(k^2)$ $\CCZ$s and Karatsuba-style circuits
$\Theta(k^{1.58})$~\cite{KepleySteinwandt,vanHoof}. The $T$-count itself is nevertheless linear in
$k$, as the tensor rank shows. Write $\mu_2(k)$ for the tensor rank of
$\mathrm{GF}(2^k)$ multiplication as an $\F_2$-bilinear map. The $T$-count is within a factor $7$
of it,
\[
\mu_2(k)\ \le\ \delta(U_k)\ \le\ 7\,\mu_2(k).
\]
For the lower bound, restrict a representation's cubic tensor to index triples with one leg in each
register. The restriction is the multiplication tensor, so the number of parities is at
least its rank. For the upper bound, a rank-$\mu$ decomposition
$T=\sum_{i=1}^{\mu}a_i\otimes b_i\otimes c_i$ rewrites the phase as
$\sum_i(a_i\cdot x)(b_i\cdot y)(c_i\cdot z)$, a product of $\mu$ gates, each a $\CCZ$ on three
linear forms in different registers (a $\CCZ$ conjugated by \CNOT{} gates) contributing seven
parities. Since
$\mu_2(k)=\Theta(k)$ over $\F_2$ for every extension degree, by a Chudnovsky-type algorithm on a
descent of a Garcia--Stichtenoth tower~\cite{BalletPieltant,BilinearSurvey}, we get
$\delta(U_k)=\Theta(k)$. The floor therefore has the right order of growth. The known constants in
the linear bound are too large to compete with the explicit circuits.
\end{remark}

Sharper floors from the ranks of individual slices, and brackets against the best published
multipliers, will appear with the Toffoli-count bounds of \S\ref{sec:outlook}.

\subsection{Certifying the optimizers}\label{app:bench}

This subsection is the record behind \S\ref{sec:benchintro}. The method runs PyZX's TODD-based
phase-block optimizer (PyZX~0.10.3, \texttt{phase\_\allowbreak block\_\allowbreak optimize}) on
each circuit of the Feynman suite (commit \texttt{d2c382a2}). The run takes every circuit with
input $T$-count at most $400$, one circuit per process under a four-minute cutoff. Of the $21$
circuits left out, $18$ exceed the $T$-count limit and three (\texttt{gf2\^{}6\_mult},
\texttt{gf2\^{}7\_mult} and \texttt{qcla\_adder\_10}) reach the cutoff. The scripts, logs and
per-block records are in the repository~\cite{IsofloorCode}.
The phase-block optimizer shuffles matrix columns at random, so each circuit is run at seeds $0$ to
$4$. We report seed $4$, whose outputs have the lowest total $T$-count ($1878$, against $1880$ to
$1891$). It is also the seed with the most certified blocks. Across the five seeds the outputs have
$309$ to $311$ blocks, of which the floor certifies between $186$ and $193$ and nullity between $111$
and $114$. Among blocks with at least four $T$ gates the floor certifies $116$ to $123$ and nullity
$40$ to $44$. Table~\ref{tab:benchfull} shows eight representative circuits. The optimized circuit is partitioned with PyZX's own greedy block rule. Each
block's phase polynomial is bounded with the floor $2(n-d^{\ast})-r$, or $2\nu+1$ on the pure-cubic
locus, and exact coset decoding is added whenever a block has at most six active qubits.

A certified block has $T$-count equal to $\delta$ of its diagonal, so for this block partition and
these Hadamard positions no $\{\CNOT,T\}$ resynthesis of the block uses fewer $T$ gates. Every
exact value falls inside the bracket. None of the certified blocks is pure-cubic, so every
certificate comes from the general floor of Theorem~\ref{thm:genLB}. The block floors bound only
circuits on the same qubits with the same partition.

Two circuits show what lies outside that scope. On \texttt{csum\_mux\_9} the block floors sum to
$73$, while AlphaTensor-Quantum reports $71$ using $42$ qubits instead of
$30$~\cite{AlphaTensorQuantum}. The twelve multipliers and \texttt{mod5\_4} are the only circuits of
the suite that are mismatch-free as a whole, meaning that no Toffoli uses an earlier target as a
control. On \texttt{mod5\_4} a Hadamard on one wire turns the
circuit into a $\CNOT$ network times a pure-cubic diagonal gate with $d=2$, whose floor
$2(5-2)+1=7$ is the optimum AlphaTensor-Quantum reports from a Z3 search~\cite{AlphaTensorQuantum}.
The phase-block optimizer splits \texttt{mod5\_4} into two blocks of total $T$-count $8$, one above
this whole-circuit optimum, so its two certificates hold relative to that block partition.

In the $\mathrm{GF}(2^k)$ multipliers every Toffoli targets the output register, which is never
used as a control. The files group the Toffolis into two blocks, each conjugated by Hadamards on
the output register, with the \CNOT{} network of the modular reduction on that register between
them. Conjugated by those Hadamards the reduction network is again a \CNOT{} network, so the circuit
is a Hadamard layer, a \CNOT{} network times a diagonal gate, and a Hadamard layer. It maps
$\ket{x,y,z}\mapsto\ket{x,y,Lz\oplus x{*}y}$ for an invertible $L$, which is the multiplier up to a
\CNOT{} network on the output, so Remark~\ref{rem:gf4} applies to it. The structure and the radical are
checked for all twelve benchmark multipliers, $k=4$ to $256$~\cite{IsofloorCode}.

\begin{table}[!htbp]
\centering\footnotesize
\setlength{\tabcolsep}{3pt}
\begin{tabular}{@{}lrrrrrrrrrrr@{}}
\toprule
circuit & $n$ & $T$ in & $T$ emitted & certified & by $\nu$ & $\sum$ floor & $T$ merged & max gap & decoded & at floor & at count\\
\midrule
\texttt{adder\_8} & $24$ & $399$ & $250$ & $8/20$ & $5$ & $214$ & $246$ & $5$ & $1$ & $1$ & $1$\\
\texttt{barenco\_tof\_10} & $19$ & $224$ & $174$ & $16/33$ & $8$ & $129$ & $146$ & $1$ & $32$ & $15$ & $32$\\
\texttt{csum\_mux\_9} & $30$ & $196$ & $76$ & $0/3$ & $0$ & $73$ & $76$ & $1$ & $0$ & $0$ & $0$\\
\texttt{gf2\^{}4\_mult} & $12$ & $112$ & $50$ & $0/1$ & $0$ & $25$ & $50$ & $25$ & $0$ & $0$ & $0$\\
\texttt{grover\_5} & $9$ & $336$ & $223$ & $29/49$ & $27$ & $185$ & $205$ & $1$ & $47$ & $29$ & $46$\\
\texttt{mod5\_4} & $5$ & $28$ & $8$ & $2/2$ & $1$ & $8$ & $8$ & $0$ & $2$ & $2$ & $2$\\
\texttt{qft\_4} & $5$ & $69$ & $67$ & $34/37$ & $31$ & $64$ & $67$ & $1$ & $37$ & $34$ & $37$\\
\texttt{vbe\_adder\_3} & $10$ & $70$ & $32$ & $5/5$ & $0$ & $28$ & $28$ & $0$ & $1$ & $1$ & $1$\\
\midrule
all $24$ & & & & $193/311$ & $113$ & $1563$ & $1800$ & & $239$ & $159$ & $234$\\
\bottomrule
\end{tabular}
\caption{Eight representative circuits of the suite at seed $4$. The totals row is over all $24$
circuits that the phase-block optimizer finishes within the cutoff, and the full table is in the
supplementary notes~\cite{IsofloorSupp}. Certified counts blocks whose floor equals their count, and
by $\nu$ those that stabilizer nullity alone certifies. $T$ merged is the count after parities are
merged inside each block, and max gap the largest per-block difference between that count and the
floor. Decoded counts blocks small enough to decode exactly, at floor those whose exact count equals
the floor, and at count those whose exact count equals the optimizer's.}
\label{tab:benchfull}
\end{table}

Every pure-cubic block small enough to decode is suboptimal. There are three, a single $\CCZ$ up to $\CNOT$s in the phase-block optimizer's output for
\texttt{grover\_5} and blocks of \texttt{mod5\_4} and \texttt{csum\_mux\_9} in the output of phase
teleportation. Each carries one $T$ gate more than its floor $2\nu+1$, and decoding shows that the
floor is its exact count. Among the $239$ blocks with at least four $T$ gates (not the decoded blocks of
Table~\ref{tab:benchtight}), the floor certifies $121$ and nullity
$41$, and the largest certified block has $22$ $T$
gates. Exact decoding is affordable up to six active qubits, and the floor matters most beyond
that. Of the $72$ blocks with $7$ to $28$ active qubits it certifies $37$, against $15$ for nullity.
Phase teleportation~\cite{KissingerVdW} gives a similar
picture, with $252$ of $337$ blocks certified on $27$ circuits against $141$ for nullity ($248$ of
$318$ on the $24$ circuits both optimizers finish), $125$
against $14$ among blocks with at least four $T$ gates, and $33$ against $17$ among the $61$ blocks
beyond decoding ($7$ to $34$ active qubits), the largest with $36$ $T$ gates on $23$ active qubits. For the
$515$ blocks small enough to decode exactly, the floor was never more than two below the true count.

Since the floor is never below $\nu$ (Corollary~\ref{cor:floordominates}), every block that nullity
certifies is certified by the floor too. Exact decoding finds five suboptimal blocks, one each in
\texttt{grover\_5}, \texttt{ham15-low} and \texttt{mod\_mult\_55} and two in \texttt{hwb6}, and saves
$8$ $T$ gates on them. The two largest gaps, $51$ on \texttt{gf2\^{}5\_mult} and $25$ on
\texttt{gf2\^{}4\_mult}, are on multipliers, where the single block carries the floor $6k+1$ of
$U_k$.

Table~\ref{tab:benchtight} gives the distance from the floor to the exact $T$-count on every block
small enough to decode, for both optimizers. On the output
of the phase-block optimizer the total gap between block counts and floors is $237$ $T$ gates. It
splits as $81$ of floor slack on the $239$ decoded blocks, $8$ of excess over the optimum on the
five suboptimal ones, and $148$ on the $72$ blocks of $7$ to $28$ active qubits that decoding
cannot reach, of which $35$ carry a nonzero gap.

\begin{table}[!htbp]
\centering\small
\begin{tabular}{@{}lrrrr@{}}
\toprule
optimizer & decoded blocks & distance $0$ & distance $1$ & distance $2$\\
\midrule
phase-block (seed $4$) & $239$ & $159$ & $79$ & $1$\\
phase teleportation & $276$ & $221$ & $47$ & $8$\\
\bottomrule
\end{tabular}
\caption{The distance from the floor to the exact $T$-count, over every block with at most six active
qubits, where exact coset decoding is affordable. Distance $0$ means the floor is exact on the block.}
\label{tab:benchtight}
\end{table}

The certificate does not depend on the optimizer. Phase teleportation~\cite{KissingerVdW} changes
only the location and value of phases, so its output is bounded with the same block rule. The
per-circuit record for it, and the full version of Table~\ref{tab:benchfull}, are in the
supplementary notes~\cite{IsofloorSupp}.

%% file: sections/99-disclosure.tex
\section*{AI Disclosure}

The author conceived the isotropy floor and proved the main results of this paper: the cubic and
general lower bounds (Theorems~\ref{thm:cubicLB} and~\ref{thm:genLB}), the identity $\nu=n-d^{\ast}$
(Proposition~\ref{prop:nullitygen}), the optimality of the Toffoli layer (Theorem~\ref{thm:parccz})
and the splicing construction (Proposition~\ref{prop:splicetree}). Generative artificial intelligence
tools (Claude Opus 4.5, Claude Opus 5.5 and Claude Fable) assisted with the secondary results,
including those of the supplementary notes, with the verification code and the benchmark
experiments, and with drafting the text. The author reviewed and verified all of this assistance,
including every argument, line of code and passage of text, and takes full responsibility for the
content of this work.